\newtheorem{conjecture}{Conjecture}
\newtheorem{theorem}{Theorem}
\newtheorem{lemma}{Lemma} 
\newtheorem{corollary}{Corollary}
\def \bP {\mathbb{P}}
\def \bE {\mathbb{E}}
\begin{document}

\title{Adaptive Estimation of Shannon Entropy}

\author{Yanjun~Han,~\IEEEmembership{Student Member,~IEEE},~Jiantao~Jiao,~\IEEEmembership{Member,~IEEE}, and Tsachy~Weissman,~\IEEEmembership{Fellow,~IEEE}
\thanks{Yanjun Han and Tsachy Weissman are with the Department of Electrical Engineering, Stanford University, CA, USA. Email: \{yjhan, tsachy\}@stanford.edu Jiantao Jiao is with the Department of Electrical Engineering and Computer Sciences, University of California, Berkeley, CA, USA. Email: jiantao@eecs.berkeley.edu}. 
\thanks{The materials in this paper was presented in part at the 2015 IEEE International Symposium on Information Theory, Hong Kong, China. }
}
%
%

\date{\today}

\vspace{-10pt}

\maketitle

\begin{abstract}
We consider estimating the Shannon entropy of a discrete distribution $P$ from $n$ i.i.d. samples. Recently, Jiao, Venkat, Han, and Weissman, and Wu and Yang constructed approximation theoretic estimators that achieve the minimax $L_2$ rates in estimating entropy. Their estimators are consistent given $n \gg \frac{S}{\ln S}$ samples, where $S$ is the support size, and it is the best possible sample complexity. In contrast, the Maximum Likelihood Estimator (MLE), which is the empirical entropy, requires $n\gg S$ samples.

In the present paper we significantly refine the minimax results of existing work. To alleviate the pessimism of minimaxity, we adopt the adaptive estimation framework, and show that the minimax rate-optimal estimator in Jiao, Venkat, Han, and Weissman achieves the minimax rates simultaneously over a nested sequence of subsets of distributions $P$, without knowing the support size $S$ or which subset $P$ lies in. In other words, their estimator is adaptive with respect to this nested sequence of the parameter space, which is characterized by the entropy of the distribution. We also characterize the maximum risk of the MLE over this nested sequence, and show, for every subset in the sequence, that the performance of the minimax rate-optimal estimator with $n$ samples is essentially that of the MLE with $n\ln n$ samples, thereby further substantiating the generality of the \emph{effective sample size enlargement} phenomenon identified by Jiao, Venkat, Han, and Weissman.
\end{abstract}

\begin{IEEEkeywords}
Adaptive estimation, best polynomial approximation, entropy estimation, high dimensional statistics, large alphabet, minimax optimality
\end{IEEEkeywords}

\section{Introduction}

Shannon entropy $H(P)$, defined as
\begin{equation}\label{eqn.entropy}
H(P) \triangleq \sum_{i = 1}^S p_i \ln \frac{1}{p_i},
\end{equation}
is one of the most fundamental quantities of information theory and statistics, which emerged in Shannon's 1948 masterpiece~\cite{Shannon1948} as the answer to foundational questions of compression and communication.

Consider the problem of estimating Shannon entropy $H(P)$ from $n$ i.i.d. samples. Classical theory is mainly concerned with the case where the number of samples $n\to \infty$, while the support size $S$ is fixed. In that scenario, the maximum likelihood estimator (MLE), $H(P_n)$, which plugs in the empirical distribution into the definition of entropy, is \emph{asymptotically efficient} \cite[Thm. 8.11, Lemma 8.14]{Vandervaart2000} in the sense of the H\'ajek convolution theorem \cite{Hajek1970characterization} and the H\'ajek--Le Cam local asymptotic minimax theorem \cite{Hajek1972local}. It is therefore not surprising to encounter the following quote from the introduction of Wyner and Foster \cite{Wyner--Foster2003lower} who considered entropy estimation:
\begin{center}
\parbox{.85\textwidth}{~~\emph{  ``The plug-in estimate is universal and optimal not only for finite alphabet i.i.d. sources but also for finite alphabet, finite memory sources. On the other hand, practically as well as theoretically, these problems are of little interest.
"}}
\end{center}

In contrast, various modern data-analytic applications deal with datasets which do not fall into the regime of fixed alphabet and $n\to \infty$. In fact, in many applications the support size $S$ is comparable to, or even larger than the number of samples $n$.

For example:
\begin{itemize}
	\item Corpus linguistics: about half of the words in the Shakespearean canon appeared only once \cite{Efron--Thisted1976}.
	\item Network traffic analysis: many customers or website users are seen a small number of times \cite{Benevenuto--Rodrigues--Cha--Almeida2009characterizing}.
	\item  Analyzing neural spike trains: natural stimuli generate neural responses of high timing precision resulting in a massive space of meaningful responses \cite{Berry--Warland--Meister1997structure,Mainen--Sejnowski1995reliability,Van--Lewen--Strong--Koberle--Bialek1997reproducibility}.
\end{itemize}

\subsection{Existing literature}

The problem of entropy estimation in the large alphabet regime (or non-asymptotic analysis) has been investigated extensively in various disciplines, which we refer to~\cite{jiao2014minimax} for a detailed review. One recent breakthrough in this direction came from Valiant and Valiant~\cite{Valiant--Valiant2011}, who constructed the first explicit entropy estimator whose sample complexity is $n \asymp \frac{S}{\ln S}$ samples, which they also proved to be necessary. It was also shown in \cite{Paninski2003}\cite{Jiao--Venkat--Han--Weissman2014MLE} that the MLE requires $n \asymp S$ samples, implying that MLE is strictly sub-optimal in terms of sample complexity.

However, the aforementioned estimators have not been shown to achieve the minimax $L_2$ rates. In light of this, Wu and Yang~\cite{Wu--Yang2014minimax} and Jiao et al.\cite{jiao2014minimax} independently developed schemes based on approximation theory that achieved the minimax $L_2$ convergence rates for the entropy. Furthermore, Jiao et al.\cite{jiao2014minimax} proposed a general methodology for estimating functionals, and showed that for a wide class of functionals (including entropy, mutual information, and power sum functionals), their methodology leads to minimax rate-optimal estimators whose performance with $n$ samples is essentially that of the MLE with $n\ln n$ samples. The approximation ideas proved to be very fruitful in Acharya \emph{et al.}~\cite{Acharya--Orlitsky--Suresh--Tyagi2014complexity}, Wu and Yang~\cite{wu2015chebyshev}, Han, Jiao, and Weissman~\cite{Han--Jiao--Weissman2016minimaxdivergence}, Jiao, Han, and Weissman~\cite{jiao2016minimaxl1distance}, Bu \emph{et al.}~\cite{bu2016estimation}, Orlitsky, Suresh, and Wu~\cite{orlitsky2016optimal}, Wu and Yang~\cite{wu2016sample}.

On the practical side, Jiao et al.\cite{Jiao--Venkat--Han--Weissman2014beyond} showed that the minimax rate-optimal estimators introduced in~\cite{jiao2014minimax} can lead to consistent and substantial performance boosts in various machine learning algorithms.

Recall that the minimax risk of estimating functional $F(P)$ is defined via $\inf_{\hat{F}}\sup_{P \in \mathcal{M}_S} \bE_P \left( \hat{F} - F(P)\right)^2$, where $\mathcal{M}_S$ denotes all distributions with support size $S$, and the infimum is taken with respect to all estimators $\hat{F}$. Correspondingly, the maximum $L_2$ risk of MLE $F(P_n)$, which evaluates the functional $F(\cdot)$ at the empirical distribution $P_n$, is defined via $\sup_{P \in \mathcal{M}_S} \bE_P \left( F(P_n) -F(P) \right)^2$. The following table in Jiao et al.~\cite{jiao2014minimax} summaries the minimax $L_2$ rates and the $L_2$ rates of MLE in estimating $H(P)$ and $F_\alpha(P) \triangleq \sum_{i = 1}^S p_i^\alpha$. Whenever there are two terms, the first term corresponds to squared bias, and the second term corresponds to variance. It is evident that one can obtain the minimax rates from the $L_2$ rates of MLE via replacing $n$ with $n\ln n$ in the dominating (bias) terms. We adopt the following notation: notation $a_n\ll b_n$ or $a_n=o(b_n)$ means that $\limsup_{n\to\infty} a_n/b_n = 0$, and $a_n\gg b_n$ means $b_n\ll a_n$. Notation $a_n\lesssim b_n$ or $a_n=O(b_n)$ means $\sup_n a_n/b_n <\infty$, $a_n \gtrsim b_n$ means $b_n\lesssim a_n$, $a_n \asymp b_n$ or $a_n=\Theta(b_n)$ means $a_n\lesssim b_n$ and $a_n \gtrsim b_n$, or equivalently, there exists two universal constants $c,C$ such that
\begin{align}
  0<c<\liminf_{n\to\infty} \frac{a_n}{b_n} \le \limsup_{n\to\infty} \frac{a_n}{b_n} < C<\infty.
\end{align}

\begin{table}[h]
\begin{center}
    \begin{tabular}{| l | l | l |}
    \hline
    & Minimax squared error rates & Worst squared error rates of MLE\\ \hline
$H(P)$    & $\frac{S^2}{(n \ln n)^2} + \frac{\ln^2 S}{n} \quad \left( n \gtrsim \frac{S}{\ln S} \right)$ (\cite{jiao2014minimax}, \cite{Wu--Yang2014minimax}) & $\frac{S^2}{n^2} + \frac{\ln^2 S}{n}\quad \left( n\gtrsim S \right)$ \cite{Jiao--Venkat--Han--Weissman2014MLE}  \\ \hline

$F_\alpha(P), 0<\alpha\leq \frac{1}{2}$ &  $\frac{S^2}{(n \ln n)^{2\alpha}} \quad \left( n \gtrsim S^{1/\alpha}/\ln S, \ln n \lesssim \ln S \right)$ (\cite{jiao2014minimax})  & $\frac{S^2}{n^{2\alpha}}\quad \left( n \gtrsim S^{1/\alpha} \right)$ \cite{Jiao--Venkat--Han--Weissman2014MLE}     \\
    \hline

$F_\alpha(P), \frac{1}{2}<\alpha<1$ &    $\frac{S^2}{(n \ln n)^{2\alpha}} + \frac{S^{2-2\alpha}}{n}\quad \left( n \gtrsim S^{1/\alpha}/\ln S \right)$ (\cite{jiao2014minimax})  & $\frac{S^2}{n^{2\alpha}} + \frac{S^{2-2\alpha}}{n}\quad \left( n \gtrsim S^{1/\alpha} \right)$ \cite{Jiao--Venkat--Han--Weissman2014MLE}  \\ \hline

$F_\alpha(P), 1< \alpha<\frac{3}{2}$ & $(n \ln n)^{-2(\alpha-1)}\quad \left(S \gtrsim n\ln n \right)$ (\cite{jiao2014minimax})  & $n^{-2(\alpha-1)}\quad \left(S \gtrsim n \right)$ \cite{Jiao--Venkat--Han--Weissman2014MLE}  \\ \hline

$F_\alpha(P), \alpha\geq \frac{3}{2}$ & $n^{-1}$ \cite{Jiao--Venkat--Han--Weissman2014MLE}  & $n^{-1}$ \cite{Jiao--Venkat--Han--Weissman2014MLE}  \\ \hline
    \end{tabular}

    \caption{Comparison of the minimax $L_2$ rates and the $L_2$ rates of MLE in estimating $H(P)$ and $F_\alpha(P)\triangleq \sum_{i = 1}^S p_i^\alpha$. Whenever there are two terms, the first term corresponds to squared bias, and the second term corresponds to variance. It is evident that one can obtain the minimax rates from the $L_2$ rates of MLE via replacing $n$ with $n\ln n$ in the dominating (bias) terms. }
     \label{table.summary}
\end{center}
\end{table}

\subsection{Refined minimaxity: adaptive estimation}

One concern the readers may have about results on minimax rates is that they are too pessimistic. Indeed, in the definition $\inf_{\hat{F}}\sup_{P \in \mathcal{M}_S} \bE_P \left( \hat{F} - F(P)\right)^2$, we have considered the worst case distribution $P$ over all possible distributions supported on $S$ elements, and it would be disappointing if the estimator in Jiao et al.~\cite{jiao2014minimax} turned out to behave sub-optimally when we consider distributions lying in subsets of $\mathcal{M}_S$. A usual approach to alleviate this concern is the adaptive estimation framework, which we briefly review below.

The primary approach to alleviate the pessimism of minimaxity in statistics is the construction of adaptive procedures, which has gained particular prominence in nonparametric statistics~\cite{Cai2012minimax}. The goal of adaptive inference is to construct a single procedure that achieves optimality simultaneously over a collection of parameter spaces. Informally, an adaptive procedure automatically adjusts to the \emph{unknown} parameter, and acts as if it knows the parameter lies in a more restricted subset of the whole parameter space. A common way to evaluate such a procedure is to compare its maximum risk over each subset of the parameter space in the collection with the corresponding minimax risk. If they are nearly equal, then we say such a procedure is \emph{adaptive} with respect to that collection of subsets of the parameter space.

The primary results of this paper are twofold.
\begin{enumerate}
\item First, we show that the minimax rate-optimal entropy estimator in Jiao et al.~\cite{jiao2014minimax} is adaptive with respect to the collection of parameter space $\mathcal{M}_S(H)$, where $\mathcal{M}_S(H) \triangleq \{P: H(P)\leq H, P \in \mathcal{M}_S\}$. Moreover, the estimator does not need to know $S$ nor $H$, which is an advantage in practice since usually the support size $S$ nor an \emph{a priori} upper bound on the true entropy $H(P)$ are known.

\item Second, we show that the sample size \emph{enlargement} effect still holds in this adaptive estimation scenario. Table~\ref{table.summary} demonstrates that in estimating various functionals, the performance of the minimax rate-optimal estimator with $n$ samples is nearly that of the MLE with $n\ln n$ samples, which the authors termed ``effective sample size enlargement'' in \cite{jiao2014minimax}. We compute the maximum risk of the MLE over each $\mathcal{M}_S(H)$, and show that for every $H$, the performance of the estimator in~\cite{jiao2014minimax} with $n$ samples is still nearly that of the MLE with $n\ln n$ samples.
\end{enumerate}

These facts suggest that the estimator in Jiao et al.~\cite{jiao2014minimax} is near \emph{optimal} in a very strong sense, for which we refer the readers to \cite{jiao2014minimax} for a detailed discussion on methodology behind their estimator, literature survey, and experimental results.

\subsection{Mathematical framework and estimator construction}

Before we discuss the main results, we would like to recall the construction of the entropy estimator in~\cite{jiao2014minimax}. The approach is to tackle the estimation problem separately for the cases of ``small $p$'' and ``large $p$'' in $H(P)$ estimation, corresponding to treating regions where the functional is ``nonsmooth'' and ``smooth'' in different ways. Specifically, after we obtain the empirical distribution $P_n$, for each coordinate $P_n(i)$, if $P_n(i) \ll \ln n/n$, we (i) compute the best polynomial approximation for $-p_i \ln p_i$ in the regime $0\leq p_i \ll \ln n/n$, (ii) use the unbiased estimators for integer powers $p_i^k$ to estimate the corresponding terms in the polynomial approximation for $-p_i \ln p_i$ up to order $K_n \sim \ln n$, and (iii) use that polynomial as an estimate for $-p_i \ln p_i$. If $P_n(i) \gg \ln n/n$, we use the estimator $-P_n(i) \ln P_n(i) + \frac{1}{2n}$ to estimate $-p_i \ln p_i$. Then, we add the estimators corresponding to each coordinate.

We define the minimax risk for Multinomial model with $n$ observations on support size $S$ for estimating $H(P), P\in \mathcal{M}_S(H)$ as
\begin{equation}
R(S,n,H) \triangleq \inf_{\hat{H}} \sup_{P \in \mathcal{M}_S(H)} \bE_{\mathrm{Multinomial}} \left( \hat{H} - H(P) \right)^2,
\end{equation}
which is the quantity we will characterize in this paper. To simplify the analysis, we also utilize the Poisson sampling model, i.e., we first draw a random variable $N \sim \mathsf{Poi}(n)$, and then obtain $N$ samples from the distribution $P$. It is equivalent to having a $S$-dimensional random vector $\mathbf{Z}$ such that each component $Z_i$ in $\mathbf{Z}$ has distribution $\mathsf{Poi}(np_i)$, and all coordinates of $\mathbf{Z}$ are independent.

The counterpart of minimax risk in the Poissonized model is defined as
\begin{equation}
R_P(S,n,H) \triangleq \inf_{\hat{H}} \sup_{P \in \mathcal{M}_S(H)} \bE_{\mathrm{Poisson}} \left( \hat{H} - H(P) \right)^2.
\end{equation}

The following lemma, which follows from \cite{Wu--Yang2014minimax,jiao2014minimax}, shows that the minimax risks under the Multinomial model and the Poissonized model are essentially equivalent.
\begin{lemma}\label{lemma.poissonmultinomial}
The minimax risks under the Poissonized model and the Multinomial model are related via the following inequalities:
\begin{equation}
R_P(S,2n,H) - e^{-n/4} H^2 \leq  R(S,n,H) \leq 2 R_P(S,n/2,H).
\end{equation}
\end{lemma}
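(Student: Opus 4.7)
The plan is to prove both inequalities by standard Poissonization-style coupling arguments. The upper bound $R(S,n,H) \le 2R_P(S,n/2,H)$ will follow from conditioning an arbitrary Poisson estimator on the event that the Poissonized sample size does not exceed $n$, while the lower bound $R_P(S,2n,H)-e^{-n/4}H^2 \le R(S,n,H)$ will follow from lifting an arbitrary Multinomial estimator to the Poisson model by using its first $n$ samples whenever that many are available.

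For the upper bound, view any Poisson estimator as a family $\{\hat H_k\}_{k\ge 0}$, where $\hat H_k$ acts on $k$ i.i.d.\ samples. Given $n$ Multinomial samples $X_1,\dots,X_n$, I would construct a randomized Multinomial estimator $\hat H_M$ as follows: draw an auxiliary $N\sim\mathsf{Poi}(n/2)$ conditioned on $\{N\le n\}$, independently of the data, and output $\hat H_N(X_1,\dots,X_N)$. The joint law of $(N,X_1,\dots,X_N)$ under this construction coincides with that of the Poisson$(n/2)$ experiment conditioned on $\{N\le n\}$, so
\begin{equation*}
\bE_{\mathrm{Multinomial}}(\hat H_M - H(P))^2 \le \frac{\bE_{\mathrm{Poisson}}(\hat H_P - H(P))^2}{\bP(N\le n)}.
\end{equation*}
Markov's inequality gives $\bP(\mathsf{Poi}(n/2)\le n)\ge 1/2$, so the right-hand side is at most $2\,\bE_{\mathrm{Poisson}}(\hat H_P - H(P))^2$; taking the supremum over $P\in\mathcal{M}_S(H)$ and the infimum over Poisson estimators yields the claim.

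For the lower bound, I would start from an arbitrary Multinomial estimator $\hat H_M$ on $n$ samples and define a Poisson estimator with parameter $2n$ by $\hat H_P = \hat H_M(X_1,\dots,X_n)$ on the event $\{N\ge n\}$ and $\hat H_P = 0$ otherwise. Conditional on $\{N\ge n\}$ the first $n$ Poisson samples are i.i.d.\ from $P$ and independent of $N$, so they are distributed exactly as a Multinomial sample of size $n$; combined with $|H(P)|\le H$ for $P\in\mathcal{M}_S(H)$ this gives
\begin{equation*}
\bE_{\mathrm{Poisson}}(\hat H_P - H(P))^2 \le \bE_{\mathrm{Multinomial}}(\hat H_M - H(P))^2 + \bP(N<n)\,H^2.
\end{equation*}
A Chernoff bound yields $\bP(\mathsf{Poi}(2n)\le n) \le (2e)^n e^{-2n} = e^{n(\ln 2 - 1)} \le e^{-n/4}$ since $1-\ln 2 > 1/4$. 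Taking the supremum over $P$, the infimum over $\hat H_M$, and rearranging complete the lower bound.

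Neither direction presents a substantial obstacle. The two points that need care are (i) confirming that the conditional construction in the upper bound genuinely realizes the required Multinomial risk, which holds because i.i.d.\ samples remain i.i.d.\ after conditioning on an independent auxiliary variable and because randomized estimators are admissible in the infimum defining $R(S,n,H)$, and (ii) choosing a Chernoff tail sharp enough to produce the constant $1/4$ in the exponent, which is loose but sufficient here.
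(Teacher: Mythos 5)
Your proof is correct and follows the standard Poissonization argument — which is exactly what the paper relies on, since it does not prove Lemma~\ref{lemma.poissonmultinomial} itself but cites \cite{Wu--Yang2014minimax,jiao2014minimax}. The two-sided reduction you give (subsampling with an auxiliary Poisson count conditioned on $\{N\le n\}$ for the upper bound, and lifting a Multinomial estimator into the Poisson experiment while charging the event $\{N<n\}$ against the bound $H(P)^2\le H^2$ on $\mathcal{M}_S(H)$ for the lower bound) is the same reduction used in those references, and your Markov and Chernoff estimates $\bP(\mathsf{Poi}(n/2)\le n)\ge 1/2$ and $\bP(\mathsf{Poi}(2n)\le n)\le (2/e)^n\le e^{-n/4}$ are both sound.
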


For simplicity of analysis, we conduct the classical ``splitting'' operation \cite{Tsybakov2013aggregation} on the Poisson random vector $\mathbf{Z}$, and obtain two independent identically distributed random vectors $\mathbf{X} = [X_1,X_2,\ldots,X_S]^T, \mathbf{Y} = [Y_1,Y_2,\ldots,Y_S]^T$, such that each component $X_i$ in $\mathbf{X}$ has distribution $\mathsf{Poi}(np_i/2)$, and all coordinates in $\mathbf{X}$ are independent. For each coordinate $i$, the splitting process generates a random variable $T_i$ such that $T_i|\mathbf{Z} \sim \mathrm{B}(Z_i, 1/2)$, and assign $X_i = T_i, Y_i = Z_i - T_i$. All the random variables $\{T_i:1\leq i\leq S\}$ are conditionally independent given our observation $\mathbf{Z}$. We also note that for random variable $X$ such that $nX \sim \mathsf{Poi}(np)$,
\begin{equation}
\bE \prod_{r = 0}^{k-1} \left( X - \frac{r}{n} \right) = p^k,
\end{equation}
for any $k \in \mathbb{N}_+$.

For simplicity, we re-define $n/2$ as $n$, and denote
\begin{equation}
\hat{p}_{i,1} = \frac{X_i}{n}, \hat{p}_{i,2} = \frac{Y_i}{n}, \Delta = \frac{c_1 \ln n}{n}, K = c_2 \ln n, t = \frac{\Delta}{4},
\end{equation}
where $c_1,c_2$ are positive parameters to be specified later. Note that $\Delta,K,t$ are functions of $n$, where we omit the subscript $n$ for brevity.

The estimator $\hat{H}$ in Jiao et al.~\cite{jiao2014minimax} is constructed as follows.
\begin{equation}\label{eqn.falphaconstruct}
\hat{H} \triangleq \sum_{i = 1}^S \left[  L_H(\hat{p}_{i,1}) \mathbbm{1}(\hat{p}_{i,2} \leq 2\Delta) + U_H(\hat{p}_{i,1}) \mathbbm{1}(\hat{p}_{i,2}>2\Delta) \right],
\end{equation}
where
\begin{align}
S_{K,H}(x) & \triangleq  \sum_{k = 1}^{K} g_{k,H} (4\Delta)^{-k + 1} \prod_{r = 0}^{k-1} \left(x-\frac{r}{n}\right) \label{eqn.SKalpha}\\
L_H(x) & \triangleq \min \left\{ S_{K,H}(x) , 1 \right\}\label{eqn.Lalphadef} \\
U_H(x) & \triangleq I_n(x)\left( -x \ln x + \frac{1}{2n}\right). \label{eqn.Ualphadef}
\end{align}

We explain each equation in detail as follows.

\begin{enumerate}
\item Equation~(\ref{eqn.falphaconstruct}):
Note that $\hat{p}_{i,1}$ and $\hat{p}_{i,2}$ are i.i.d. random variables such that $n \hat{p}_{i,1} \sim \mathsf{Poi}(n p_i)$. We use $\hat{p}_{i,2}$ to determine whether we are operating in the ``nonsmooth'' regime or not. If $\hat{p}_{i,2} \leq 2\Delta$, we declare we are in the ``nonsmooth'' regime, and plug in $\hat{p}_{i,1}$ into function $L_H(\cdot)$. If $\hat{p}_{i,2}>2\Delta$, we declare we are in the ``smooth'' regime, and plug in $\hat{p}_{i,1}$ into $U_H(\cdot)$.

\item Equation~(\ref{eqn.SKalpha}):

The coefficients $r_{k,H}, 0\leq k \leq K$ are coefficients of the best polynomial approximation of $-x\ln x$ over $[0,1]$ up to degree $K$, i.e.,
\begin{equation}
\sum_{k = 0}^K r_{k,H} x^k = \arg \min_{y(x) \in \mathsf{poly}_{K}} \sup_{x\in [0,1]} |y(x)- (-x \ln x)|,
\end{equation}
where $\mathsf{poly}_{K}$ denotes the set of algebraic polynomials up to order $K$. Note that in general $g_{k,\alpha}$ depends on $K$, which we do not make explicit for brevity.

Then we define $\{g_{k,H}\}_{1\leq k\leq K}$
\begin{equation}\label{eqn.gkhdefine}
g_{k,H} = r_{k,H}, 2\leq k \leq K, g_{1,H} = r_{1,H} - \ln (4\Delta).
\end{equation}

Lemma~\ref{lem_small_p} shows that for $nX \sim \mathsf{Poi}(np)$,
\begin{equation}
\bE S_{K,H}(X) = \sum_{k = 1}^{K} g_{k,H} (4\Delta)^{-k + 1} p^k
\end{equation}
is a near-best polynomial approximation for $-p\ln p$ on $[0,4\Delta]$. Thus, we can understand $S_{K,H}(X), nX\sim \mathsf{Poi}(np)$ as a random variable whose expectation is nearly \footnote{Note that we have removed the constant term from the best polynomial approximation. It is to ensure that we assign zero to symbols we do not see. } the best approximation of function $-x\ln x$ over $[0,4\Delta]$.

\item Equation~(\ref{eqn.Lalphadef}):

Any reasonable estimator for $-p\ln p$ should be upper bounded by the value one. We cutoff $S_{K,H}(x)$ by upper bound $1$, and define the function $L_H(x)$, which means ``lower part''.

\item Equation~(\ref{eqn.Ualphadef}):

The function $U_H(x)$ (means ``upper part'') is nothing but a product of an interpolation function $I_n(x)$ and the bias-corrected MLE. The interpolation function $I_n(x)$ is defined as follows:
\begin{equation}
I_n(x) = \begin{cases} 0 & x \leq t \\
g\left( x-t; t \right) & t < x < 2t \\ 1 & x \geq  2t\end{cases}
\end{equation}

The following lemma characterizes the properties of the function $g(x;a)$ appearing in the definition of $I_n(x)$. In particular, it shows that $I_n(x) \in C^4[0,1]$.
\begin{lemma}\label{lemma.gxa}
For the function $g(x;a)$ on $[0,a]$ defined as follows,
\begin{equation}
g(x;a) \triangleq 126 \left( \frac{x}{a} \right)^5 - 420 \left(\frac{x}{a}\right)^6 + 540 \left( \frac{x}{a} \right)^7-315 \left( \frac{x}{a} \right)^8 + 70 \left( \frac{x}{a} \right)^9 ,
\end{equation}
we have the following properties:
\begin{align}
g(0;a) = 0,\quad & g^{(i)}(0;a) = 0, 1\leq i\leq 4 \\
g(a;a) = 1, \quad & g^{(i)}(a;a) = 0, 1\leq i\leq 4
\end{align}
\end{lemma}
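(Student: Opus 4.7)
The plan is to exploit a clean antiderivative structure hidden in the coefficients. First I would rescale by setting $u = x/a$ and working with $\tilde g(u) \triangleq g(au;a) = 126u^5 - 420u^6 + 540u^7 - 315u^8 + 70u^9$. Since $g^{(i)}(x;a) = a^{-i}\tilde g^{(i)}(x/a)$ by the chain rule, the ten boundary conditions at $x = 0$ and $x = a$ in the lemma reduce to showing $\tilde g(0) = 0$, $\tilde g(1) = 1$, and $\tilde g^{(i)}(0) = \tilde g^{(i)}(1) = 0$ for $1 \le i \le 4$.

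The key observation is the identity
\begin{equation}
\tilde g(u) = 630 \int_0^u t^4(1-t)^4\, dt.
\end{equation}
To verify this I would expand $t^4(1-t)^4 = t^4 - 4t^5 + 6t^6 - 4t^7 + t^8$, integrate termwise to get $\frac{u^5}{5} - \frac{2u^6}{3} + \frac{6u^7}{7} - \frac{u^8}{2} + \frac{u^9}{9}$, multiply by $630$, and match coefficients against those defining $\tilde g$; this is a routine arithmetic check ($630/5 = 126$, $630 \cdot 4/6 = 420$, $630 \cdot 6/7 = 540$, etc.).

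With the integral representation in hand, every claim becomes immediate. Clearly $\tilde g(0) = 0$, and by the fundamental theorem of calculus $\tilde g'(u) = 630\, u^4(1-u)^4$. This derivative has a zero of multiplicity exactly $4$ at both endpoints, so $\tilde g^{(i)}(0) = \tilde g^{(i)}(1) = 0$ for $i = 1, 2, 3, 4$. The normalization $\tilde g(1) = 630 \int_0^1 t^4(1-t)^4\, dt = 630 \cdot (4!)^2/9! = 1$ closes the argument, and rescaling back to $g(x;a)$ via $x = au$ yields the stated conditions. There is no real obstacle here, the lemma being a direct verification; the only insight worth stressing is the recognition of the integrand $t^4(1-t)^4$, which exposes the fourfold vanishing of $\tilde g'$ at both endpoints without any explicit differentiation of a degree-nine polynomial.
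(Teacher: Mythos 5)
Your proof is correct. The paper states Lemma~\ref{lemma.gxa} without providing an explicit proof (it is evidently regarded as a routine verification by direct differentiation of the degree-nine polynomial), so there is no paper proof to compare against. Your route via the integral representation $\tilde g(u) = 630\int_0^u t^4(1-t)^4\,dt$ is a genuinely nicer argument than the implied brute-force check: once the coefficient match is verified, $\tilde g'(u) = 630\,u^4(1-u)^4$ makes the fourfold vanishing of all derivatives through order four at both endpoints immediate (a zero of multiplicity $4$ of $\tilde g'$ at $u=0$ and $u=1$ gives $\tilde g^{(i)}(0)=\tilde g^{(i)}(1)=0$ for $1\le i\le 4$), and the normalization $\tilde g(1) = 630\,B(5,5) = 630\cdot(4!)^2/9! = 1$ is a one-line Beta-function evaluation. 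The rescaling step via the chain rule, $g^{(i)}(x;a) = a^{-i}\tilde g^{(i)}(x/a)$, correctly transfers the conclusions back to $g(\cdot\,;a)$. This buys you a conceptual explanation of where the coefficients $126,-420,540,-315,70$ come from, rather than just a numerical check, at no cost in length.
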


The function $g(x;1)$ is depicted in Figure~\ref{fig.gx1}. \footnote{As pointed out in~\cite{Han--Jiao--Weissman2016minimaxdivergence}, it is not necessary to use the interpolation function to achieve the minimax rates. Here we keep it in order to be consistent with~\cite{jiao2014minimax}. }
\begin{center}
  \centering
  \centerline{\includegraphics[width=0.6\textwidth]{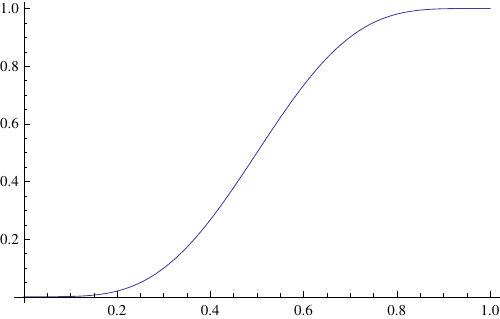}}
  \vspace{-0.1cm}
\captionof{figure}{The function $g(x;1)$ over interval $[0,1]$. }
\label{fig.gx1}
\end{center}
\end{enumerate}

\section{Main Results}
Since $\sup_{P\in\mathcal{M}_S}H(P) = \ln S$, we assume throughout this paper that $0<H\le\ln S$. Denote by $\mathcal{M}_S(H)$ the set of all discrete probability distributions $P$ with support size $|\mathsf{supp}(P)|=S$ and entropy $H(P)\le H$. We say an estimator $\hat{H}\equiv \hat{H}(\mathbf{Z})$ is within accuracy $\epsilon>0$, if and only if
\begin{align}
 \sup_{P\in\mathcal{M}_S(H)}   \left(\bE_P |\hat{H}-H(P)|^2\right)^{\frac{1}{2}} \le \epsilon.
\end{align}

For the plug-in estimator $H(P_n)$, the following theorem presents the non-asymptotic upper and lower bounds for the $L_2$ risk.

\begin{theorem}\label{th_MLE}
  If $H\ge H_0>0$, where $H_0$ is a universal positive constant, then for the plug-in estimator $H(P_n)$, we have
\begin{align}
  &\sup_{P\in\mathcal{M}_S(H)} \bE_P |H(P_n)-H(P)|^2 \\
  &\qquad\quad = \begin{cases}
  \Theta(1)\left[\left(\frac{S}{n}\right)^2 + \frac{H\ln S}{n}\right] & \text{if }S\ln S\le e^2nH,\\
  \left[\frac{H}{\ln S}\ln \left(\frac{S\ln S}{nH}\right) + O\left(\frac{H}{\ln S} + \frac{(\ln n)^2}{n}\right)\right]^2 & \text{otherwise}.
\end{cases}
\end{align}
\end{theorem}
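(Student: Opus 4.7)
The plan is to decompose $\bE_P|H(P_n)-H(P)|^2$ into squared bias plus variance under the Poissonized model, transferring back to the Multinomial model via Lemma~\ref{lemma.poissonmultinomial} (the $e^{-n/4}H^2$ remainder is absorbed by the stated rate in either regime since $H\le\ln S$). Both the bias and the variance then split across coordinates by independence of the Poissonized counts.

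For the variance, standard Poisson moment estimates give $\Var(-\hat{p}_i \ln \hat{p}_i) \lesssim p_i (\ln p_i)^2/n + 1/n^2$, so it suffices to prove the entropy-constrained inequality $\sum_i p_i (\ln p_i)^2 \lesssim H \ln S$ over $\mathcal{M}_S(H)$. Splitting the index set at $p_i = 1/S$: on $\{p_i \ge 1/S\}$ I use $|\ln p_i| \le \ln S$ to bound the contribution by $(\ln S)\sum_i p_i |\ln p_i| \le H \ln S$; on $\{p_i < 1/S\}$, where the individual logarithms may be much larger than $\ln S$, the bound comes from a Lagrangian reduction of the supremum to equal masses combined with the entropy budget, again producing $\lesssim H\ln S$. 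This yields $\Var(H(P_n)) \lesssim H \ln S/n$, matching the variance contribution in regime~1.

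For the bias, I would combine the coordinate bound $|\bE H(P_n) - H(P)| \le \sum_i |b(p_i, n)|$ with the unified per-symbol estimate $|b(p, n)| \le C[\frac{1}{n}\mathbbm{1}(np \ge 1) + p \ln(1/(np))\mathbbm{1}(np < 1)]$, which follows from the Poisson expansion of $\bE[\hat{p}\ln\hat{p}] - p\ln p$ (a Stirling/Taylor computation when $np \ge 1$; an explicit $k=0,1$ truncation when $np < 1$, using that $\bE[X\ln X]$ for $X\sim\mathsf{Poi}(np)$ is $O((np)^2)$). Splitting indices into heavy $A = \{i: np_i \ge 1\}$ and light $B = \{i: np_i < 1\}$, the $A$-sum is at most $|A|/n \le S/n$ while the $B$-sum is $\sum_{i\in B}p_i \ln(1/(np_i))$. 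Concavity of $x \mapsto x\ln(1/(nx))$ on $(0,1/n)$ shows the supremum of the latter over $\mathcal{M}_S(H)$ is attained at equal light masses $p_i = q_B/|B|$ with $|B| \le S$, reducing the problem to $\max_{q_B} q_B \ln(|B|/(n q_B))$ subject to $q_B \ln(|B|/q_B) \le H$. A Lagrangian analysis with $|B|=S$ then shows the optimum is $\asymp S/n$ when the support constraint binds (regime~1) and $\asymp (H/\ln S)\ln(S\ln S/(nH))$ when the entropy constraint binds (regime~2), with the transition located precisely at $S\ln S \asymp enH$.

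For the matching lower bound I would test against the two-mass distribution $P^* = (1-\delta, \delta/(S-1), \ldots, \delta/(S-1))$, with $\delta$ chosen so that $H(P^*)\asymp H$, which forces $\delta \asymp H/\ln S$ in the parameter range of interest. A direct Poisson calculation on $P^*$ shows the $S-1$ light coordinates contribute squared bias $\asymp (S/n)^2$ in regime~1 and $\asymp [(H/\ln S)\ln(S\ln S/(nH))]^2$ in regime~2, while $\Var(H(P_n)) \asymp H\ln S/n$, matching the upper bounds up to absolute constants. The hardest step, I anticipate, will be the bias optimization in regime~2: justifying rigorously that the concave reduction coupled with the entropy constraint yields exactly the log-factor rate rather than the coarser $S/n$ bound, and that the phase boundary sits precisely at $S\ln S \asymp enH$. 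The variance lemma $\sum_i p_i (\ln p_i)^2 \lesssim H\ln S$ is also nontrivial, since $\max_i|\ln p_i|$ can exceed $\ln S$ and the necessary cancellation must come directly from the entropy constraint limiting how much mass can sit on very small coordinates.
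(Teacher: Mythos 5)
Your overall architecture mirrors the paper's: bias-plus-variance decomposition, an entropy-constrained lemma of the form $\sum_i p_i(\ln p_i)^2 \preceq H\ln S$, and a split of coordinates into heavy and light before optimizing the light-coordinate bias subject to an entropy budget. The variance treatment and the lower bound at a two-mass $P^*$ with $\delta \asymp H/\ln S$ are also in the same spirit as the paper (which uses $(A/(S-1),\dots,A/(S-1),1-A)$ for the bias lower bound and a Le Cam two-point perturbation for the variance). But there are two places where the plan as written would fail to deliver the regime-2 rate.

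\textbf{The fixed threshold at $np_i=1$ does not suffice.} The paper's Lemma~\ref{lem_bias} shows that for $p\le c/(2en)$ the per-symbol bias is bounded by $-p\ln(np/c)$ plus tail terms $\frac{c\ln c}{n}(\frac{enp}{c})^c+\sum_{k>c}\frac{\ln k+1}{n}(\frac{enp}{k})^k$; after summing over light coordinates via convexity, these tails contribute $\asymp 2^{-c}\ln c$. The paper then chooses $c=n^{\epsilon/H}$ (growing with $n$), which pushes the tails below the target $\frac{H}{\ln S}\ln(\frac{S\ln S}{nH})$ while only inflating the main term logarithmically in $c$. Your plan uses the fixed cut $c=1$, so after summing the Binomial tail corrections over the light set you pick up an additive error of order $\min(S,n)/n$. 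This is \emph{not} negligible against the regime-2 target: for instance, with $H=H_0$ fixed and $S=n-1$, the target rate is $\asymp H_0\frac{\ln\ln n}{\ln n}\to 0$ while the residual $\min(S,n)/n\asymp 1$, so the bound blows up relative to the truth. The growth of $c$ with $n$ is the mechanism that kills this residual, and it is missing from the proposal. Relatedly, you bound the heavy-coordinate contribution by $|A|/n\le S/n$; in regime~2 this is again too weak (it can exceed $1$), and you need the entropy-constrained count $|A|\preceq nH/\ln n$ coming from concavity of $-x\ln x$, which the paper derives explicitly.

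\textbf{The Poissonization transfer is not justified for the MLE.} Lemma~\ref{lemma.poissonmultinomial} relates the \emph{minimax} risks $R$ and $R_P$; it says nothing about the risk of a particular estimator such as $H(P_n)$, so ``transferring back to the Multinomial model via Lemma~\ref{lemma.poissonmultinomial}'' does not apply here. The paper avoids this by analyzing the plug-in bias directly with Binomial marginals $n\hat{p}_i\sim\mathsf{B}(n,p_i)$ and the variance via Efron--Stein under the Multinomial model, and uses Poissonization only for the Le Cam minimax lower bound (which automatically lower-bounds the MLE as well). If you insist on a Poisson computation for the upper bound, you would need a separate depoissonization argument tailored to $H(P_n)$, not Lemma~\ref{lemma.poissonmultinomial}.

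Once the adaptive threshold $c=n^{\epsilon/H}$ is introduced and the heavy-coordinate count is bounded via the entropy constraint, the Lagrangian/equal-mass reduction you describe does give the correct optimization and the phase boundary $S\ln S\asymp enH$, so the rest of the plan aligns with the paper's argument.
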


Note that the only assumption in Theorem \ref{th_MLE} is that the upper bound $H$ should be no smaller than a constant, which is a reasonable assumption to avoid the subtle case where the naive zero estimator $\hat{H}\equiv 0$ has a satisfactory performance. The minimum sample complexity of the plug-in approach can be immediately obtained from Theorem \ref{th_MLE}.

\begin{corollary}\label{cor_1}
  If $H\ge H_0>0$, where $H_0$ is a universal positive constant, the plug-in estimator $H(P_n)$ is within accuracy $\epsilon$ if and only if
  \begin{align}
    n\gtrsim \begin{cases}
      S^{1-\frac{\epsilon}{H}}\cdot\frac{\ln S}{H} & \text{if } \frac{H}{\ln S} \ll \epsilon,\\
      \frac{S}{\epsilon} \vee \frac{H\ln S}{\epsilon^2} & \text{if } \frac{H}{\ln S} \gg \epsilon.
    \end{cases}
  \end{align}
\end{corollary}

Recall that it requires $n\gtrsim \frac{S}{\epsilon} \vee \frac{\ln^2 S}{\epsilon^2}$ samples for the MLE to achieve accuracy $\epsilon$ when there is no constraint on the entropy~\cite{jiao2014minimax,Wu--Yang2014minimax}. Hence, when the upper bound on the entropy is loose, i.e., $H\asymp\ln S$, the minimum sample complexity in the bounded entropy case is exactly the same, i.e., we cannot essentially improve the estimation performance. On the other hand, when the upper bound is tight, i.e., $H\ll\ln S$, the required sample complexity enjoyed a significant reduction. 

When it comes to the maximum $L_2$ risk, we conclude from Theorem \ref{th_MLE} that the bounded entropy property helps only at the boundary, i.e., when $n$ is close to $S$ and $H$ is small. Moreover, this help vanishes quickly as $S$ increases: when $n=S^{1-\delta}$, the maximum $L_2$ risk will be at the order $(\delta H)^2$, and the naive zero estimator achieves worst case risk $H^2$. 

Is the plug-in estimator $H(P_n)$ optimal in the minimax sense? It has been shown in \cite{Valiant--Valiant2011,jiao2014minimax,Wu--Yang2014minimax} that when there is no constraint on $H(P)$, i.e., $H=\ln S$, the answer is \emph{negative}. What about subsets of $\mathcal{M}_S$, such as $\mathcal{M}_S(H)$? The following theorem characterizes the minimax $L_2$ rates over $\mathcal{M}_S(H)$.
\begin{theorem}\label{th_minimax}
   If $H\ge H_0>0$, where $H_0$ is a universal positive constant, then
\begin{align}
  \inf_{\hat{H}}\sup_{P\in\mathcal{M}_S(H)} \bE_P |\hat{H}-H(P)|^2 \asymp
\begin{cases}
  \frac{S^2}{(n\ln n)^2} + \frac{H\ln S}{n} &\text{if }S\ln S\le e^2nH\ln n,\\
   \left[\frac{H}{\ln S}\ln \left(\frac{S\ln S}{nH\ln n}\right)\right]^2&\text{otherwise.}
\end{cases}
\end{align}
where the infimum is taken over all possible estimators. Moreover, the upper bound is achieved by the estimator $\hat{H}^*$ in~\cite{jiao2014minimax} under the Poissonized model without the knowledge of $H$ nor $S$, and in particular,
\begin{align}
  \sup_{P\in\mathcal{M}_S(H)} \bE_P |\hat{H}^*-H(P)|^2 \le \left[\frac{H}{\ln S}\ln \left(\frac{S\ln S}{nH\ln n}\right) + O\left(\frac{H}{\ln S} + \frac{(\ln n)^5}{n^{1-\epsilon}}\right)\right]^2
\end{align}
when $S\ln S> e^2nH\ln n$.
\end{theorem}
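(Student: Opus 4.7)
My plan is to establish the matching upper and lower bounds separately, working under the Poissonized model via Lemma~\ref{lemma.poissonmultinomial} and splitting on the threshold $S\ln S \lesseqgtr enH\ln n$.

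For the upper bound, I would analyze the estimator in~(\ref{eqn.falphaconstruct}) coordinate-by-coordinate using independence across bins in the Poissonized model. The squared bias splits across bins into a ``nonsmooth'' contribution controlled by the best polynomial approximation of $-x\ln x$ on $[0,4\Delta]$ at degree $K\asymp \ln n$, yielding a per-bin bias of order $\Delta/K^2 = O(1/(n\ln n))$, and a ``smooth'' contribution from the bias-corrected plug-in $-x\ln x + 1/(2n)$, giving per-bin bias $O(1/n^2)$; summing and squaring gives the $S^2/(n\ln n)^2$ term. The variance is the delicate step: the crucial improvement over the unconstrained analysis in~\cite{jiao2014minimax} is to invoke the entropy constraint. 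In the smooth regime $\mathrm{Var}(U_H(\hat{p}_{i,1})) \asymp p_i \ln^2(1/p_i)/n$, and summing uses $\sum_i p_i \ln^2(1/p_i) \le \ln S\cdot H(P) \le H\ln S$. In the nonsmooth regime, a second-moment calculation on $S_{K,H}(\hat{p}_{i,1})$ via the factorial-moment identity for Poissons yields a per-bin variance that again telescopes into the same order when the entropy constraint is enforced. Misclassification between the two regimes is bounded by Bernstein-type Poisson tail estimates and contributes negligibly. Critically, because $\Delta$ and $K$ are fixed functions of $n$ alone (independent of $S,H$), this establishes the adaptive property. In the second regime of the theorem, the variance dominates the polynomial approximation bias, and one also compares against the trivial estimator $\hat H\equiv H$; the $[\frac{H}{\ln S}\ln(\frac{S\ln S}{nH\ln n})]^2$ rate emerges by optimizing the per-bin error budget at the critical scale $p\asymp \frac{1}{S}\ln\frac{S\ln S}{nH\ln n}$.

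For the lower bound, I would adapt the two-fuzzy-hypotheses construction of~\cite{Wu--Yang2014minimax,jiao2014minimax}. In the first regime, one builds two priors $\mu_0,\mu_1$ supported on $[0, c\ln n /n]$ whose first $K\asymp \ln n$ moments coincide but whose $\bE[-X\ln X]$ differ by the best approximation error $\asymp 1/(n\ln n)$. Drawing the $S$ bin masses i.i.d.\ from these priors and normalizing produces two priors on $\mathcal{M}_S$ that are statistically indistinguishable from Poissonized samples yet differ in $\bE H$ by $\asymp S/(n\ln n)$; verifying concentration of the total mass and of $H(P)$ around their means yields the claimed rate. The $H\ln S/n$ variance lower bound follows from a Le Cam two-point argument between a distribution nearly uniform on an effective support and a small perturbation of $\chi^2$-divergence $\asymp 1$. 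For the third regime, the lower bound comes from indistinguishability of near-uniform distributions on supports of size $\approx e^{H}$ that encode different entropies.

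The main obstacle is ensuring the moment-matching priors actually lie in $\mathcal{M}_S(H)$: the standard construction produces near-uniform distributions on $S$ bins whose entropy is close to $\ln S$, which violates $H(P)\le H$ whenever $H\ll\ln S$. One must restrict to an effective support of size $S'\asymp S\cdot H/\ln S$ (after rescaling bin masses), then run the moment-matching argument on $S'$ bins; this rescaling is precisely what produces the factors $H/\ln S$ in the rate and drives the transition at $S\ln S\asymp enH\ln n$. Matching this on the upper-bound side requires carefully tracking how the entropy constraint cuts the variance from the unconstrained $\ln^2 S/n$ down to $H\ln S/n$ without the estimator ever accessing $H$---this interplay is what actually certifies the adaptive nature of the estimator.
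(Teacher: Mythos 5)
Your plan for the first regime ($S\ln S\le enH\ln n$) essentially matches the paper's proof: coordinate-wise bias and variance bounds for $\hat H$ in the Poissonized model, with the variance reduced from $\ln^2 S/n$ to $H\ln S/n$ via $\sum_i p_i(\ln p_i)^2\preceq H\ln S$ (Lemma~\ref{lem_varentropy}); for the lower bound, a Wu--Yang moment-matching construction with bins drawn i.i.d.\ from $\mu_0,\mu_1$, plus a two-point Le Cam argument for the $H\ln S/n$ variance term. So far so good, though in the first regime the paper sets the number of perturbed bins to $N\asymp\min\{nH,S\}$ and shows $N\succeq S$ \emph{because} $S\ln S\le enH\ln n$; it does not need a rescaled effective support $S\cdot H/\ln S$ there, since the entropy constraint is essentially inactive.

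Your treatment of the second regime ($S\ln S>enH\ln n$) has genuine gaps on both sides, and this is exactly where the new content of the theorem lives.

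\textbf{Upper bound in the second regime.} You assert that ``the variance dominates the polynomial approximation bias.'' The opposite is true: the squared bias $[\frac{H}{\ln S}\ln(\frac{S\ln S}{nH\ln n})]^2$ dominates, and the naive per-bin bias bound $1/(n\ln n)$ summed over $S$ bins would already overshoot since $S\gg nH\ln n$. The paper's key technical ingredient, which your plan lacks, is a \emph{pointwise}, not uniform, bias bound: Lemma~\ref{lem_ibragimov} shows $|p_K(x)-2\ln K\cdot x|\preceq x$ for $x\in[0,C/K^2]$, which translates via Lemma~\ref{lem_small_p} into the per-bin bias estimate $|\mathsf{Bias}(\xi)|\preceq -p\ln(pn\ln n)$ for $p<1/(en\ln n)$. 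One then solves an explicit optimization of $\sum -p_i\ln(p_i n\ln n)$ under the entropy constraint $\sum -p_i\ln p_i\le H$ and $|\{i\}|\le S$; the optimum is attained at $p_0\asymp H/(S\ln S)$ on all $S$ bins, giving the $\frac{H}{\ln S}\ln(\frac{S\ln S}{nH\ln n})$ rate. Your proposed critical scale $p\asymp\frac{1}{S}\ln\frac{S\ln S}{nH\ln n}$ is not the optimizer and does not produce the right answer. Symbols with $p_i\ge 1/(en\ln n)$ are controlled separately by the cardinality bound $|\{i:p_i\ge 1/(en\ln n)\}|\preceq nH$ forced by the entropy constraint.

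\textbf{Lower bound in the second regime.} ``Near-uniform distributions on supports of size $\approx e^H$'' is not what the paper constructs, and I do not see how that construction yields the rate. The paper again uses fuzzy hypotheses, but with a different approximation-theoretic engine: moment-matching measures $\nu_0,\nu_1$ on $[\eta,1]$ for the function $-\ln x$ (not $-x\ln x$), transformed to $\tilde\nu_i$ via the Radon--Nikodym derivative $d\tilde\nu_i/d\nu_i=\eta/t$ (which converts moments of order $l$ of $\nu_i$ into moments of order $l+1$ of $\tilde\nu_i$ and $-\ln t$ into $-t\ln t$), then scaled by $M\asymp\ln n/n$. The rate then comes from the lower bound on the uniform approximation error for $-\ln x$ on $[\eta,1]$ by degree-$L$ polynomials (Lemma~\ref{lem_approx}), namely $E_L[-\ln x]_{[(D_0K)^{-1},1]}\succeq\ln(K/L^2)$, which with $L\asymp\ln n$ and $\eta\asymp nH/(S\ln S\ln n)$ gives the factor $\ln(\frac{S\ln S}{nH\ln n})$. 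A large atom $p_S=d_1(1-H/\ln S)$ soaks up mass so that the entropy constraint is satisfied with high probability, and the reduction to the Multinomial model is handled by Lemma~\ref{lem_equiv}. Without this $-\ln x$ approximation bound and the Radon--Nikodym trick, you cannot extract the $H/\ln S$ prefactor together with the logarithmic gain from degree-$L$ approximation.

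In short: the global framework you describe is right, but the second regime requires (i) a pointwise polynomial approximation bound near $0$ for the bias upper bound, and (ii) a lower bound on the error of polynomial approximation of $-\ln x$ combined with a change-of-measure trick for the minimax lower bound. Both are missing from your plan.
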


An immediate result on the sample complexity is as follows.
\begin{corollary}\label{cor_2}
 If $H\ge H_0>0$, where $H_0$ is a universal positive constant, the minimax rate-optimal estimator in~\cite{jiao2014minimax} is within accuracy $\epsilon$ if \begin{align}
    n\gtrsim \begin{cases}
      \frac{1}{H}S^{1-\frac{\epsilon}{H}} & \text{if } \frac{H}{\ln S} \ll \epsilon,\\
      \frac{S}{\epsilon\ln S} \vee \frac{H\ln S}{\epsilon^2} & \text{if } \frac{H}{\ln S} \gg \epsilon.
    \end{cases}
  \end{align}
\end{corollary}

For the minimum sample complexity, we still distinguish $H$ into two cases. Firstly, when $H\asymp \ln S$, the required sample complexity is $n \asymp \frac{S}{\epsilon\ln S}$, which recovers the minimax results with no constraint on entropy in \cite{jiao2014minimax}. Secondly, when $H\ll \ln S$, there is a significant improvement. 

\begin{conjecture}
We conjecture that the minimax rates in Theorem~\ref{th_minimax} can be refined to 
\begin{align}
  \inf_{\hat{H}}\sup_{P\in\mathcal{M}_S(H)} \bE_P |\hat{H}-H(P)|^2 =
\begin{cases}
 \Theta(1)\left( \frac{S^2}{(n\ln n)^2} + \frac{H\ln S}{n} \right) &\text{if }S\ln S\le e^2nH\ln n,\\
   \left[\frac{H}{2\ln S}\ln \left(\frac{S\ln S}{nH\ln n}\right)+ O\left(\frac{H}{\ln S} + \frac{(\ln n)^2}{n}\right)\right]^2&\text{otherwise.}
\end{cases}
\end{align}
In other words, we conjecture that the exact constant in the minimax rates in the regime $S \ln S > e^2 n H \ln n$ is $\frac{1}{2}$. It is partially justified by the observation that the minimax squared error without any samples is $(H/2)^2$. 
\end{conjecture}

We also conclude from Theorem \ref{th_minimax} that the bounded entropy constraint again helps only at the boundary, and this help vanishes quickly as $S$ increases: when $n=S^{1-\delta}$, we do not have sufficient information to make inference, and the naive zero estimator is near-minimax.

Moreover, it has been shown that the hard-thresholding estimator $P_s$ is an adaptive and near-minimax estimator of the discrete distribution $P$ given $H(P)\le H$ under $\ell_1$ loss \cite{Han--Jiao--Weissman2014minimax}. The next theorem shows that the plug-in estimator $H(P_s)$ is also far from optimal:

\begin{theorem}
  If $H\ge H_0>0$, where $H_0$ is a universal positive constant, then for the hard-thresholding estimator $P_s$ in \cite{Han--Jiao--Weissman2014minimax}, the plug-in estimator $H(P_s)$ satisfies
    \begin{align}
  &\sup_{P\in\mathcal{M}_S(H)} \bE_P |H(P_s)-H(P)|^2 \gtrsim \begin{cases}
  \left(\frac{S}{n}\right)^2 & \text{if }S\ln S\le e^2nH,\\
  \left[\frac{H}{\ln S}\ln \left(\frac{S\ln S}{nH}\right)\right]^2 & \text{otherwise}.
\end{cases}
\end{align}
\end{theorem}
\begin{proof}
  By definition we have $p_{s,i}\le p_{n,i}$ with strict inequality if and only if $p_{n,i}\le (\ln n)^\eta/n$ for some $\eta>1$. Hence, the monotone and concave property of $f(x)=-x\ln x$ on $[0,1/e]$ yields that, for sufficiently large $n$, $\bE H(P_s)\le \bE H(P_n)\le H(P)$, and thus
  \begin{align}
    \mathsf{Bias}(H(P_s)) &= |H(P) - \bE H(P_s)| \\
    &\ge |H(P) - \bE H(P_n)| = \mathsf{Bias}(H(P_n)).
  \end{align}
  The proof is completed by the proof of the lower bound in Theorem \ref{th_MLE} (cf. Section \ref{sec_lower_MLE}).
\end{proof}

To sum up, we have obtained the following conclusions.
\begin{enumerate}
\item The minimax rate-optimal entropy estimator in Jiao et al.~\cite{jiao2014minimax} is adaptive with respect to the collection of parameter space $\mathcal{M}_S(H)$, where $\mathcal{M}_S(H) \triangleq \{P: H(P)\leq H, P \in \mathcal{M}_S\}$. Moreover, the estimator does not need to know $S$ nor $H$, which is an advantage in practice since usually the support size $S$ nor an \emph{a priori} upper bound on the true entropy $H(P)$ are known.

\item Second, the sample size \emph{enlargement} effect still holds in this adaptive estimation scenario. Table~\ref{table.summary} demonstrates that in estimating various functionals, the performance of the minimax rate-optimal estimator with $n$ samples is essentially that of the MLE with $n\ln n$ samples, which the authors termed ``sample size enlargement'' in \cite{jiao2014minimax}. Theorems~\ref{th_MLE} and~\ref{th_minimax} show that over every $\mathcal{M}_S(H)$, the performance of the estimator in~\cite{jiao2014minimax} with $n$ samples is still essentially that of the MLE with $n\ln n$ samples.
\end{enumerate}

\section{Proof of Upper Bounds in Theorem \ref{th_MLE}}
First we consider the case where $S\ln S\le e^2nH$. For the bias, it has been shown in \cite{Paninski2003} that
\begin{align}
  \mathsf{Bias}(H(P_n)) \le \ln\left(1+\frac{S-1}{n}\right) \le \frac{S}{n}.
\end{align}
As for the variance, \cite{Jiao--Venkat--Han--Weissman2014MLE} shows that by the Efron-Stein inequality, we have
\begin{align}
  \mathsf{Var}(H(P_n)) \le \frac{2}{n}\sum_{i=1}^S p_i\left(\ln p_i-2\right)^2 \le \frac{2}{n}\left(\sum_{i=1}^S p_i(\ln p_i)^2 + 4H + 4\right).
\end{align}

\begin{lemma}\label{lem_varentropy}
  For any discrete distribution $P=(p_1,p_2,\cdots,p_S)$, we have
\begin{align}
  \sum_{i=1}^S p_i(\ln p_i)^2 \le 2\ln S\cdot\left(\sum_{i=1}^S -p_i\ln p_i\right) + 3.
\end{align}
\end{lemma}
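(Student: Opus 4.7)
The plan is to split the sum according to whether $p_i\ge 1/S^2$ or $p_i<1/S^2$. The intuition is that for ``not too small'' $p_i$ one factor of $\ln p_i$ can be absorbed into the $2\ln S$ coefficient, while for ``very small'' $p_i$ there are at most $S$ such indices and each contribution is minuscule, so the residual sum becomes an absolute constant. Concretely, set $A=\{i:p_i\ge 1/S^2\}$ and $B=\{i:p_i<1/S^2\}$, and treat the two pieces separately.

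For $i\in A$, the bound $|\ln p_i|\le 2\ln S$ combined with $\ln p_i\le 0$ yields $(\ln p_i)^2\le -2\ln S\cdot\ln p_i$. Multiplying by $p_i\ge 0$ and summing over $A$ gives
\begin{align}
\sum_{i\in A}p_i(\ln p_i)^2\le 2\ln S\cdot\sum_{i\in A}(-p_i\ln p_i)\le 2\ln S\cdot H(P),
\end{align}
where the last step uses that each term $-p_i\ln p_i$ is nonnegative, so extending the sum to all $i$ only increases it.

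For $i\in B$, the plan is to invoke the pointwise inequality $p(\ln p)^2\le (16/e^2)\sqrt{p}$ valid on $(0,1]$. This follows from a one-line calculus check: $\phi(p)=\sqrt{p}\,(\ln p)^2$ has derivative $\phi'(p)=\ln p\,(\ln p+4)/(2\sqrt{p})$, so its maximum on $(0,1]$ is attained at $p=e^{-4}$ with value $16/e^2$. Since $\sqrt{p_i}<1/S$ for $i\in B$ and $|B|\le S$, this gives
\begin{align}
\sum_{i\in B}p_i(\ln p_i)^2 \;<\; \frac{16}{e^2}\sum_{i\in B}\sqrt{p_i}\;<\;\frac{16}{e^2}\cdot S\cdot\frac{1}{S}\;=\;\frac{16}{e^2}\;<\;3.
\end{align}
Adding the two estimates yields the claimed bound. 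I do not anticipate any genuine obstacle here; the whole argument hinges on the choice of threshold $1/S^2$, which is tuned so that the large-$p$ part delivers exactly the coefficient $2\ln S$ on $H(P)$ while the small-$p$ part, containing at most $S$ indices each with $\sqrt{p_i}<1/S$, collapses to a constant independent of both $S$ and $P$.
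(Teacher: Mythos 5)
Your proof is correct, and it takes a genuinely different (and more elementary) route than the paper's. The paper first argues via Lagrange multipliers that the supremum of $\sum_i p_i(\ln p_i)^2$ subject to the entropy and normalization constraints is attained at a distribution supported on at most two distinct values $q_1,q_2$, rewrites the objective as $H^2 + m(S-m)q_1q_2(\ln q_1-\ln q_2)^2$, and then splits into the cases $\min\{q_1,q_2\}\ge 1/S^2$ and $\min\{q_1,q_2\}<1/S^2$. You skip the extremal-structure reduction entirely and instead partition the index set by the same threshold $1/S^2$: on the large-$p_i$ part you use $(\ln p_i)^2\le 2\ln S\cdot(-\ln p_i)$ to pull out the factor $2\ln S\cdot H(P)$, and on the small-$p_i$ part you apply the pointwise bound $p(\ln p)^2\le(16/e^2)\sqrt{p}$ (whose derivative check is sound: $\phi(p)=\sqrt{p}(\ln p)^2$ is maximized on $(0,1]$ at $p=e^{-4}$ with value $16/e^2$), together with $\sqrt{p_i}<1/S$ and $|B|\le S$, to get the additive constant $16/e^2<3$. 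The trade-off is mostly in transparency: your argument is shorter, term-by-term, and avoids having to justify that the constrained maximizer has two-point structure (a step that, while standard, requires some care about boundary cases $p_i\to 0$). The paper's approach buys a cleaner algebraic identity at the extremal point but at the cost of that reduction. Both land on the same threshold $1/S^2$ and the same coefficient $2\ln S$; your version delivers an explicit constant and is the one I would keep.
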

In light of Lemma \ref{lem_varentropy}, we conclude that
\begin{align}
  \mathsf{Var}(H(P_n)) &\le\frac{2}{n}\left(\sum_{i=1}^S p_i(\ln p_i)^2 + 4H + 4\right)\\
  & \le \frac{2}{n}\left(2H\ln S + 4H + 7\right)\lesssim\frac{H\ln S}{n}
\end{align}
where we have used the assumption $H\ge H_0>0$ in the last step.

Hence, when $S\ln S\le e^2nH$, we have
\begin{align}
  \bE_P\left(H(P_n)-H(P)\right)^2 &= \left(\mathsf{Bias}(H(P_n))\right)^2 + \mathsf{Var}(H(P_n))\\
  & \lesssim \frac{S^2}{n^2} + \frac{H\ln S}{n}
\end{align}
which completes the proof for the first part. For the second part, we introduce a lemma first.

\begin{lemma}\label{lem_bias}
  Given $n\ge2$. For $p\le \frac{1}{n}$ and $n\hat{p}\sim\mathsf{B}(n,p)$, we have
  \begin{align}
    -p\ln(np) + \frac{\ln 2}{e}\cdot(n-1)p^2 &\le -p\ln p - \bE [-\hat{p}\ln\hat{p}] \\
    &\le -p\ln(np) + \ln2\cdot (n-1)p^2.
  \end{align}
\end{lemma}

Define $f(x)=-x\ln x$ on $[0,1]$. We also know that $\left|\mathsf{Bias}\left(f(\hat{p}_i)\right)\right|\le \frac{1}{n}$ holds for all $i$ \cite{Strukov--Timan1977mathematical}. In light of the previous result and Lemma \ref{lem_bias}, we have
\begin{align}
  \left|\mathsf{Bias}\left(H(P_n)\right)\right| &= \sum_{i: p_i\le \frac{1}{n}} \left|\mathsf{Bias}\left(f(\hat{p}_i)\right)\right|
  + \sum_{i: p_i> \frac{1}{n}} \left|\mathsf{Bias}\left(f(\hat{p}_i)\right)\right|\\
  &\le \sum_{i: p_i\le \frac{1}{n}} \left[-p_i\ln(np_i) + \ln2\cdot (n-1)p_i^2\right] + \sum_{i: p_i> \frac{1}{n}} \frac{1}{n}\\
  &\le \sum_{i: p_i\le \frac{1}{n}} -p_i\ln(np_i) +  \ln2 \cdot \sum_{i: p_i\le \frac{1}{n}} p_i + \sum_{i: p_i> \frac{1}{n}} \frac{1}{n} \\
  &\equiv B_1 + \ln2\cdot B_2 +  B_3.
\end{align}

Now we bound $B_1,B_2,B_3$ separately. By the concavity of $f(\cdot)$ we have
\begin{align}\label{eq:inverse}
  H_1 \triangleq \sum_{i: p_i\le \frac{1}{n}} -p_i\ln p_i \le -\left(\sum_{i: p_i\le \frac{1}{n}}p_i\right)\ln \left(\frac{1}{S_1}\sum_{i: p_i\le \frac{1}{n}}p_i\right)
\end{align}
where
\begin{align}
  S_1\triangleq \left|\left\{i: p_i\le \frac{1}{n}\right\}\right| \le S.
\end{align}

As a result of (\ref{eq:inverse}), we have
\begin{align}\label{eq:sum_lower_bound}
  \sum_{i: p_i\le \frac{1}{n}}p_i \ge S_1\cdot f^{-1}\left(\frac{H_1}{S_1}\right)
\end{align}
where $f^{-1}(\cdot)$ denotes the inverse of $f(\cdot)$, and we restrict $f^{-1}(\cdot)\in[0,e^{-1}]$ to avoid possible ambiguities. It is straightforward to verify that for any $a>0$ and $0<x_1\le x_2<1/(ea)$, we have
\begin{align}\label{eq:property_1}
  \frac{f^{-1}(ax_1)}{x_1} \le \frac{f^{-1}(ax_2)}{x_2}.
\end{align}
We summarize some more properties of $f^{-1}(\cdot)$ in the following lemma.

\begin{lemma}\label{lem_entropyfunction}
  For $0<x_1\le x_2<1/e$, we have
\begin{align}\label{eq:property_2}
  0 \le f^{-1}(x_2) - f^{-1}(x_1) \le \frac{x_2-x_1}{-\ln\left(-x_2/\ln(x_2)\right)-1}.
\end{align}
Moreover, for $y>e$,
\begin{align}\label{eq:property_3}
  f^{-1}(y^{-1}) = \frac{1}{y\ln(y\ln y)} + O\left(\frac{\ln\ln y}{y(\ln y)^3}\right).
\end{align}
\end{lemma}

Combining these properties of $f^{-1}(\cdot)$ yields
\begin{align}
  B_1 &= \sum_{i: p_i\le \frac{1}{n}} -p_i\ln(np_i)  = H_1 - \ln n\cdot \sum_{i: p_i\le \frac{1}{n}}p_i\\
  &\le H_1 - \ln n\cdot S_1f^{-1}\left(\frac{H_1}{S_1}\right) \label{eq:inequality1}\\
  &\le H_1 - \ln n\cdot Sf^{-1}\left(\frac{H_1}{S}\right)\label{eq:inequality2}\\
  &\le H - \ln n\cdot Sf^{-1}\left(\frac{H}{S}\right)\label{eq:inequality3}\\
  &= \frac{H}{\ln S}\ln \left[\frac{S\ln S}{nH}\right] + O\left(\frac{H}{\ln S}\cdot \frac{\ln\ln S}{\ln S}\right)
\end{align}
where (\ref{eq:inequality1}) follows from (\ref{eq:sum_lower_bound}), (\ref{eq:inequality2}) follows from (\ref{eq:property_1}), (\ref{eq:inequality3}) follows from (\ref{eq:property_2}) and
\begin{align}
  \frac{S}{eH}\ln \left(\frac{S}{H}\right) \ge \frac{S}{eH}\ln \left(\frac{S}{\ln S}\right) \ge \frac{S\ln S}{e^2H} \ge n
\end{align}
by assumption, and the last equality follows from (\ref{eq:property_3}).

Now we proceed to bound $B_2$, which is given by
\begin{align}
  B_2 = \sum_{i:p_i\le \frac{1}{n}}p_i \le \frac{1}{\ln n}\sum_{i:p_i\le \frac{1}{n}}-p_i\ln p_i \le \frac{H}{\ln n}.
\end{align}

As for $B_3$, due to the concavity of $f(\cdot)$, the minimum of $\sum_{i: p_i> \frac{1}{n}}f(p_i)$ is attained when all but one $p_i$ are at the boundary $p_i = \frac{1}{n}$, hence
\begin{align}
  H \ge \sum_{i=1}^S -p_i\ln p_i \ge \sum_{i: p_i> \frac{1}{n}}-p_i\ln p_i \ge \left(\left|\left\{i: p_i>\frac{1}{n}\right\}\right|-1\right)\cdot \frac{\ln n}{n}.
\end{align}
As a result, we have
\begin{align}
  B_3 = \frac{1}{n}\left|\left\{i: p_i>\frac{1}{n}\right\}\right| \le \frac{1}{n} + \frac{H}{\ln n}.
\end{align}

Hence,
\begin{align}
  \left|\mathsf{Bias}(H(P_n))\right| &\le B_1 + B_2 + B_3 \\
  &\le \frac{H}{\ln S}\ln\left[\frac{S\ln S}{nH}\right] + O\left(\frac{H}{\ln n}\right) + o(1)\\
  &\le \frac{H}{\ln S}\ln\left[\frac{S\ln S}{nH}\right] + O\left(\frac{H}{\ln S}\right) + o(1)
\end{align}
where the last inequality is obtained by separating two cases $S>n^\beta$ and $S\le n^\beta$ for some constant $\beta>1$, say, $\beta=2$. The proof is completed by noticing that \cite{Jiao--Venkat--Han--Weissman2014MLE}:
\begin{align}
  \mathsf{Var}(H(P_n)) \lesssim \frac{(\ln n)^2}{n}.
\end{align}

\section{Proof of Lower Bounds in Theorem \ref{th_MLE}}\label{sec_lower_MLE}
We first derive a lower bound for the bias term. We recall the following result in \cite{Jiao--Venkat--Han--Weissman2014MLE}.
\begin{lemma}\label{lem_bias_lower_MLE}
  For $p\ge \frac{15}{n},p\in[0,1]$, we have
\begin{align}
  -p\ln p-\bE[-\hat{p}\ln\hat{p}] \ge \frac{1-p}{2n} + \frac{1}{20n^2p} - \frac{p}{12n^2}.
\end{align}
\end{lemma}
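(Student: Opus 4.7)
The plan is to Taylor-expand $f(x)\triangleq -x\ln x$ about $p$ to fifth order and take expectations against $n\hat p\sim \mathsf{B}(n,p)$, handling the atom $\{\hat p=0\}$ separately. The key structural observation is that $f^{(6)}(x)=-24/x^5<0$ on $(0,\infty)$, which makes the fifth-order Taylor remainder one-sided: checking the two cases $\hat p>p$ and $0<\hat p<p$ separately shows that $\int_p^{\hat p}\frac{(\hat p-t)^5}{5!}f^{(6)}(t)\,dt\le 0$ for every $\hat p>0$. Consequently, for every $\hat p>0$,
\begin{equation}
  f(p)-f(\hat p)\;\ge\; -\sum_{k=1}^{5}\frac{(\hat p-p)^k}{k!}f^{(k)}(p).
\end{equation}

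Next, let $\pi_0\triangleq (1-p)^n$. Using $f(0)=0$ and $\bE[(\hat p-p)^k\mathbbm{1}(\hat p>0)]=\mu_k-(-p)^k\pi_0$ with $\mu_k$ the $k$-th central moment of $\hat p$, and plugging in the closed forms $\mu_2=\tfrac{p(1-p)}{n}$, $\mu_3=\tfrac{p(1-p)(1-2p)}{n^2}$, $\mu_4=\tfrac{p(1-p)[1+(3n-6)p(1-p)]}{n^3}$ together with the analogue for $\mu_5$, a direct calculation shows that all $\pi_0$-dependent pieces collapse via the arithmetic identity $1-\tfrac12-\tfrac16-\tfrac1{12}-\tfrac1{20}=\tfrac15$ into the nonnegative bonus $p\pi_0/5$, which may be discarded. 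The non-$\pi_0$ contributions from $\mu_2$, $\mu_3$, and the dominant $O(n^{-2})$ piece of $\mu_4$ then combine via the algebraic identity $\tfrac14(1-p)^2-\tfrac16(1-p)(1-2p)=\tfrac{1-p^2}{12}$ to give
\begin{equation}
  \frac{1-p}{2n}+\frac{1-p^2}{12\,n^2 p}\;=\;\frac{1-p}{2n}+\frac{1}{12\,n^2 p}-\frac{p}{12\,n^2},
\end{equation}
which matches the last two terms of the target up to the coefficient of $1/(n^2 p)$ being $\tfrac1{12}$ instead of $\tfrac1{20}$, leaving a positive slack of $\bigl(\tfrac1{12}-\tfrac1{20}\bigr)\tfrac{1}{n^2 p}=\tfrac{1}{30\,n^2 p}$ to spend on the residuals.

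The remaining residuals --- the subleading pieces of $\mu_4$ (namely $\tfrac{1-p}{12 n^3 p^2}$, which is positive and helpful, and $-\tfrac{(1-p)^2}{2 n^3 p}$) together with the whole $\mu_5$ contribution --- are uniformly bounded in absolute value by constant multiples of $\tfrac{1}{n^3 p^2}$ and $\tfrac{1}{n^3 p}$. Both have the form $\tfrac{1}{np}\cdot \tfrac{1}{n^2 p}$ (using $p\le 1$ for the second), so the hypothesis $p\ge 15/n$, i.e.\ $np\ge 15$, is precisely what is needed to force their combined magnitude below $\tfrac{1}{30\,n^2 p}$: the specific constant $15$ is tuned so that $\tfrac{1}{np}\le \tfrac{1}{15}$ kills the residual sum against the available slack. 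The main obstacle in executing the plan is purely the detailed bookkeeping --- verifying the two arithmetic identities, carefully signing the subleading residuals (whose signs depend on $1-2p$), and checking that every term either has a favorable sign or fits within the slack when $np\ge 15$ --- rather than any conceptually hard step; the definite sign of $f^{(6)}$ does the real work.
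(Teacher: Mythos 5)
The paper does not actually prove Lemma~\ref{lem_bias_lower_MLE}: it is recalled verbatim from \cite{jiao2014minimax} and used as a black box. So there is no paper-internal proof to compare against, and your argument must stand on its own.

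On its own, your plan is sound and the two structural observations that make it work are correct. The sign of $f^{(6)}(x)=-24/x^5$ does indeed render the degree-$5$ Taylor remainder nonpositive for all $\hat p>0$, so $f(p)-f(\hat p)\ge -\sum_{k=1}^5\tfrac{(\hat p-p)^k}{k!}f^{(k)}(p)$ on $\{\hat p>0\}$; the atom is handled by $f(0)=0$; and $\bE[(\hat p-p)^k\mathbbm{1}(\hat p>0)]=\mu_k-(-p)^k\pi_0$ is right. The coefficient identity for the $\pi_0$ pieces is exact: $\sum_{k=1}^5\tfrac{f^{(k)}(p)}{k!}(-p)^k=p\ln p+\tfrac{p}{5}$, which cancels $-p\ln p\cdot\pi_0$ and leaves the nonnegative bonus $p\pi_0/5$. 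The $\mu_3$/$\mu_4$-leading identity $\tfrac14(1-p)^2-\tfrac16(1-p)(1-2p)=\tfrac{1-p^2}{12}$ also checks out, giving $\tfrac{1-p}{2n}+\tfrac{1}{12n^2p}-\tfrac{p}{12n^2}$ with slack $\tfrac{1}{30n^2p}$.

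The only place your write-up is genuinely thin is the final residual accounting, which you wave at rather than do. It does close, but the margin is not generous and the bound is not of the form ``a constant times $1/(np)$ kills it'' for each term separately; you have to combine them. Concretely, writing $q=np\ge 15$ and $u=1-p$, all the remaining pieces (the two subleading $\mu_4$ terms and all of $\mu_5$) factor as
\begin{equation}
\frac{1}{n^2p}\left( \frac{u}{q}\Bigl[\tfrac{1}{12}-\tfrac{u^2}{2}\Bigr]
+ \frac{u(2u-1)}{q^2}\Bigl[-\tfrac{1}{20}+\tfrac{3u(1-u)}{5}\Bigr]\right).
\end{equation}
The first bracket is minimized at $u=1$ where it equals $-\tfrac{5}{12}$, so that term is $\ge -\tfrac{5}{12q}\ge -\tfrac{1}{36}$ in units of $\tfrac{1}{n^2p}$; the second is $\ge -\tfrac{1}{20q^2}\ge -\tfrac{1}{4500}$. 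Their sum is $\ge -\tfrac{1}{36}-\tfrac{1}{4500}\approx -0.0287$, which sits inside $-\tfrac{1}{30}\approx -0.0333$, so $np\ge 15$ suffices (indeed $np\ge 13$ would already do). You should also note explicitly that when $p>1/2$ the $\mu_5$-dominant term changes sign and becomes helpful, so the worst case is $p\to 0^+$ with $np\to 15$, which is where the numbers above come from. With that bookkeeping spelled out, the proof is complete and correct.
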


If we choose
\begin{align}\label{eq:def_P0}
  P_0=\left(\frac{15}{n},\frac{15}{n},\cdots,\frac{15}{n},1-\frac{15K}{n},0,\cdots,0\right)\in\mathcal{M}_S
\end{align}
where
\begin{align}
  K = \min&\left\{\lfloor\frac{n}{15}\rfloor, S-1, \max\left\{N\in \mathbb{N}: -\frac{15N}{n}\ln\left(\frac{15}{n}\right) \right.\right.\\
  &\left.\left. \qquad\qquad - \left(1-\frac{15N}{n}\right)\ln \left(1-\frac{15N}{n}\right)\le H\right\}\right\}
\end{align}
we have
\begin{align}
  |\mathsf{Bias}(H(P_n))| &\ge K\cdot \left(\frac{n-15}{2n^2} + \frac{1}{300n} - \frac{5}{4n^3}\right) \\
  &\gtrsim
\frac{1}{n}\min\left\{n, S, \frac{nH}{\ln n}\right\} \gtrsim \frac{S}{n}
\end{align}
where we have used the assumption $S\ln S\le e^2nH$ and $H\ge H_0>0$ in the last inequality. Hence, we have proved that
\begin{align}
   \sup_{P\in\mathcal{M}_S(H)} \bE_P |H(P_n)-H(P)|^2 \gtrsim
      \left(\frac{S}{n}\right)^2 \qquad \text{if }S\ln S\le e^2nH. \label{eq:lower_bias_1}
\end{align}

For the lower bound in the case where $S\ln S> e^2nH$, we consider another distribution
\begin{align}
  P_1=\left(\frac{A}{S-1},\cdots,\frac{A}{S-1},1-A\right)\in\mathcal{M}_S
\end{align}
where $A\in(0,1)$ is the solution to the equation
\begin{align}\label{eq:def_A}
  -A\ln \left(\frac{A}{S-1}\right) - (1-A)\ln(1-A) = H.
\end{align}

From (\ref{eq:def_A}) it is easy to show that
\begin{align}
  \frac{H}{\ln S} \le A = \frac{H}{\ln S}\left(1 + O\left(\frac{\ln\ln S}{\ln S}\right)\right) = \frac{H}{\ln S}\left(1+o(1)\right)
\end{align}
then Lemma \ref{lem_bias} tells us that
\begin{align}
  |\mathsf{Bias}(H(P_n))| &\ge \sum_{i=1}^{S-1} -p_i\ln (np_i) = -A\ln\left(\frac{nA}{S-1}\right) \\
  & \ge \frac{H}{\ln S}\ln\left(\frac{nA}{S-1}\right)\\
   &= \frac{H}{\ln S}\ln\left(\frac{S\ln S}{nH}\right) + o\left(\frac{H}{\ln S}\right).\label{eq:lower_bias_2}
\end{align}

The lower bounds in Theorem \ref{th_MLE} for the bias part can thus be established by combining (\ref{eq:lower_bias_1}) and (\ref{eq:lower_bias_2}).

We now turn to the lower bound for variance. We will actually prove a stronger result: a minimax lower bound for all estimators for the $L_2$ risk, which naturally is also a lower bound for the maximum risk of the MLE. We use Le Cam's two-point method here. Suppose we observe a random vector ${\bf Z} \in (\mathcal{Z},\mathcal{A})$ which has distribution $P_\theta$ where $\theta \in \Theta$. Let $\theta_0$ and $\theta_1$ be two elements of $\Theta$. Let $\hat{T} = \hat{T}({\bf Z})$ be an arbitrary estimator of a function $T(\theta)$ based on $\bf Z$. We have the following general minimax lower bound.
\begin{lemma}\label{lem_twopoint}
  \cite[Sec. 2.4.2]{Tsybakov2008} Denoting the Kullback-Leibler divergence between $P$ and $Q$ by
  \begin{align}
    D(P\|Q) \triangleq \begin{cases}
      \int \ln\left(\frac{dP}{dQ}\right)dP, &\text{if }P\ll Q,\\
      +\infty, &\text{otherwise}.
    \end{cases}
  \end{align}
  we have
  \begin{align}
\inf_{\hat{T}} \sup_{\theta \in \Theta} \bP_\theta\left( |\hat{T} - T(\theta)| \geq \frac{|T(\theta_1)-T(\theta_0)|}{2} \right) \geq
\frac{1}{4}\exp\left(-D\left(P_{\theta_1}\|P_{\theta_0}\right)\right).
  \end{align}
\end{lemma}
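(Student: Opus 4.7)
The plan is to reduce the estimation problem to binary hypothesis testing via the standard Le Cam two-point argument. First I would introduce the test
\[
\psi = \mathbbm{1}\!\left( |\hat{T} - T(\theta_0)| \geq \tfrac{1}{2}|T(\theta_1) - T(\theta_0)| \right),
\]
so that, by the triangle inequality, the event $\{|\hat{T} - T(\theta_j)| < \tfrac{1}{2}|T(\theta_1)-T(\theta_0)|\}$ forces $\psi = j$ for $j \in \{0,1\}$. Enlarging the sup to $\theta\in\Theta$ and then replacing the supremum over $\theta \in \{\theta_0,\theta_1\}$ by the average yields
\[
\sup_{\theta\in\Theta} \bP_\theta\!\left(|\hat T - T(\theta)| \ge \tfrac{1}{2}|T(\theta_1)-T(\theta_0)|\right) \ge \tfrac{1}{2}\bigl[\bP_{\theta_0}(\psi=1) + \bP_{\theta_1}(\psi=0)\bigr].
\]

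The second step would lower-bound the sum of Type I and Type II errors uniformly in $\psi$. By the Neyman--Pearson lemma, the infimum of $\bP_{\theta_0}(\psi=1) + \bP_{\theta_1}(\psi=0)$ over all measurable tests equals $1 - \|P_{\theta_0} - P_{\theta_1}\|_{\mathrm{TV}}$, which reduces the task to controlling the total variation distance between the two candidate measures.

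Finally, I would invoke the Bretagnolle--Huber inequality $1 - \|P-Q\|_{\mathrm{TV}} \ge \tfrac{1}{2}\exp(-D(P\|Q))$; combined with the previous two displays (taking $P = P_{\theta_1}$ and $Q = P_{\theta_0}$), this yields the promised factor $\tfrac{1}{4}\exp(-D(P_{\theta_1}\|P_{\theta_0}))$. The main obstacle is the Bretagnolle--Huber inequality itself; its derivation relies on the elementary bound $\min(a,b) \le \sqrt{ab}$ together with Jensen's inequality applied to the Hellinger affinity $\int \sqrt{dP\,dQ}$, and is the only nontrivial analytical ingredient. Since each step is uniform in the estimator, taking the infimum over $\hat T$ on the left-hand side completes the argument.
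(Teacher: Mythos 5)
Your proof is correct and follows essentially the same route as the cited source (Tsybakov, Sec.\ 2.4.2, Thm.\ 2.2): reduce estimation to a binary test via the triangle-inequality argument, pass from the maximum over $\theta\in\{\theta_0,\theta_1\}$ to the average of the two error probabilities, identify the infimum of that sum over tests with $1-V(P_{\theta_0},P_{\theta_1})$, and finish with the Bretagnolle--Huber bound $1-V(P,Q)\ge\frac{1}{2}\exp(-D(P\|Q))$, which stacks the extra $\tfrac12$ onto the averaging $\tfrac12$ to give the stated $\tfrac14$. The paper itself does not prove this lemma but cites it, so there is nothing to reconcile; the one cosmetic remark is that the identity $\inf_\psi[\bP_{\theta_0}(\psi=1)+\bP_{\theta_1}(\psi=0)]=1-V(P_{\theta_0},P_{\theta_1})$ is more naturally attributed to Le Cam (or to the elementary characterization of total variation) than to the Neyman--Pearson lemma, though the two are of course closely related.
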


Applying this lemma to the Poissonized model $n\hat{p}_i\sim\mathsf{Poi}(np_i),1\le i\le S$, we know that for $\theta_1=(p_1,p_2,\cdots,p_S), \theta_0=(q_1,q_2,\cdots,q_S)$,
\begin{align}
  D\left(P_{\theta_1}\|P_{\theta_0}\right) &= \sum_{i=1}^S D\left(\mathsf{Poi}(np_i)\|\mathsf{Poi}(nq_i)\right)\\
  &= \sum_{i=1}^S \sum_{k=0}^\infty \bP\left(\mathsf{Poi}(np_i)=k\right)\cdot k\ln\frac{p_i}{q_i} \\
  &= \sum_{i=1}^S np_i\ln\frac{p_i}{q_i} = nD(\theta_1\|\theta_0),
\end{align}
then for $\Delta=|H(\theta_1)-H(\theta_0)|$, Markov's inequality yields
\begin{align}
  \inf_{\hat{H}} \sup_{P \in \mathcal{M}_S(H)} \bE_P\left( \hat{H}-H(P)\right)^2 &\ge \frac{\Delta^2}{4}\cdot
  \inf_{\hat{H}} \sup_{P \in \mathcal{M}_S(H)} \bP\left( |\hat{H}-H(P)| \ge \frac{\Delta}{2}\right)\\
  &\ge \frac{\Delta^2}{16}\exp\left(-nD(\theta_1\|\theta_0)\right).
\end{align}

Fix $\epsilon\in(0,1)$ to be specified later, and let
\begin{align}
  \theta_1 &= \left(\frac{A}{S-1},\cdots,\frac{A}{S-1},1-A\right),\\
  \theta_0 &= \left(\frac{A(1-\epsilon)}{S-1},\cdots,\frac{A(1-\epsilon)}{S-1},1-A+A\epsilon\right),
\end{align}
where $A$ is the solution to (\ref{eq:def_A}). Direct computation yields
\begin{align}
  D(\theta_1\|\theta_0) = A\ln\frac{1}{1-\epsilon} + (1-A)\ln\frac{1-A}{1-A+A\epsilon} \triangleq h(\epsilon),
\end{align}
and it can be directly verified that $h(0)=h'(0)=0$, and $|h''(0)| = \frac{1-A}{A} >0$. Hence, for $\epsilon$ small enough we have $D(\theta_1\|\theta_0) \le \epsilon^2/A$. By choosing $\epsilon=(nA)^{-\frac{1}{2}}\lesssim1$, we have
\begin{align}
  \Delta &= |H(\theta_1)-H(\theta_0)| \\
  &= \left| A\ln(S-1) + H_b(A) - A(1-\epsilon)\ln(S-1) - H_b(A-A\epsilon)\right|\\
&\gtrsim A\epsilon\ln \left(\frac{S-1}{A}\right),
\end{align}
where
\begin{align}
  H_b(x) \triangleq -x\ln x-(1-x)\ln(1-x).
\end{align}

Hence, by Lemma \ref{lem_twopoint} and $nD(\theta_1\|\theta_0)\le1, A\asymp H/\ln S$ we can obtain the following minimax lower bound under the Poissonized model
\begin{align}\label{eq:var_lower}
  \inf_{\hat{H}}\sup_{P\in\mathcal{M}_S(H)} \bE_P |\hat{H}-H(P)|^2
  &\gtrsim \left[ A\epsilon\ln \left(\frac{S-1}{A}\right)\right]^2 \\
  &\asymp \frac{H}{n\ln S}\left[\ln\left(\frac{S\ln S}{H}\right)\right]^2\\
&\asymp \frac{H\ln S}{n}.
\end{align}

The corresponding minimax lower bound for the variance in the Multinomial model follows from Lemma \ref{lemma.poissonmultinomial}. The proof of Theorem \ref{th_MLE} is complete by combining the lower bounds for the bias and the variance.

\section{Proof of Upper Bounds in Theorem \ref{th_minimax}}
Define
\begin{align}
  \xi \triangleq \xi(X,Y) = L_H(X) \mathbbm{1}(Y \leq 2\Delta) + U_H(X) \mathbbm{1}(Y>2\Delta),
\end{align}
where $nX\overset{D}{=}nY\sim \mathsf{Poi}(np)$, and $X,Y$ are independent.
We first recall the following lemma from \cite{jiao2014minimax}.

\begin{lemma}\label{lem_optimal}
  Suppose $0<c_1=16(1+\delta), 0<8c_2\ln 2=\epsilon<1, \delta>0$. Then the bias and variance of $\xi(X,Y)$ are given as follows:
\begin{align}
  |\mathsf{Bias}(\xi)| &\lesssim \frac{1}{n\ln n}\\
  \mathsf{Var}(\xi) &\lesssim \frac{(\ln n)^4}{n^{2-\epsilon}} + \frac{p(\ln p)^2}{n}
\end{align}
\end{lemma}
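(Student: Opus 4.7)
My approach would be to exploit the two-branch structure of $\xi = L_H(X)\mathbbm{1}(Y\le 2\Delta) + U_H(X)\mathbbm{1}(Y>2\Delta)$ by conditioning on the independent gating variable $Y$. The $L_H$ branch is native to the ``small $p$'' regime and the $U_H$ branch to the ``large $p$'' regime; in each regime, Poisson tail bounds ensure the ``wrong'' branch contributes only negligibly. By independence of $X$ and $Y$,
\begin{align*}
\bE\xi &= \bE L_H(X)\cdot\bP(Y\le 2\Delta) + \bE U_H(X)\cdot\bP(Y>2\Delta),\\
\Var(\xi) &= \Var(L_H(X))\bP(Y\le 2\Delta) + \Var(U_H(X))\bP(Y>2\Delta)\\
&\quad + (\bE L_H(X)-\bE U_H(X))^2\bP(Y\le 2\Delta)\bP(Y>2\Delta),
\end{align*}
which cleanly decomposes both quantities.

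For the bias, I would partition by $p$ relative to $\Delta = c_1\ln n/n$. When $p\le\Delta$, the Poisson upper tail (the choice $c_1 = 16(1+\delta)$ ensures $\bP(Y>2\Delta)\le n^{-c}$ for $c$ arbitrarily large) kills the $U_H$ contribution regardless of its magnitude, and on this regime $\bE S_{K,H}(X) = \sum_k g_{k,H}(4\Delta)^{-k+1}p^k$ is the near-best polynomial approximation to $-p\ln p$ on $[0,4\Delta]$ with uniform error $\Theta(\Delta/K^2) = \Theta(1/(n\ln n))$; the truncation $L_H = \min(S_{K,H},1)$ changes the bias only negligibly since $S_{K,H}(X)$ concentrates near $-p\ln p<1$. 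When $p\ge 4\Delta$ the situation is symmetric: $\bP(Y\le 2\Delta)$ is super-polynomially small, and on the high-probability event $X$ concentrates near $p$ with $I_n(X)\equiv 1$, so $U_H(X) = -X\ln X + 1/(2n)$ is the bias-corrected MLE whose bias is $O(1/(n^2p)) = O(1/(n\ln n))$ for $p\gtrsim \ln n/n$ by a Taylor expansion of $-x\ln x$. In the transition window $\Delta<p<4\Delta$ both branches are active, but each branch's bias is already $\preceq 1/(n\ln n)$, and the $C^4$ smoothness of $I_n$ (Lemma \ref{lemma.gxa}) controls any extra error from the interpolation.

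For the variance, the $U_H$ term is handled by a Poisson delta-method calculation giving $\Var(U_H(X))\preceq p(\ln p)^2/n$ plus lower-order corrections from the interpolation. For $\Var(L_H(X))\le \Var(S_{K,H}(X))$, I would expand $S_{K,H}(X)^2$ in factorial moments, use the identity $\bE\prod_{r=0}^{k-1}(X-r/n) = p^k$ together with variance bounds for these products under $nX\sim\mathsf{Poi}(np)$, and bound $|g_{k,H}|$ via a classical Markov-type inequality for coefficients of near-best polynomial approximants (roughly $|g_{k,H}|\preceq 2^{O(K)}$). The exponent $\epsilon = 8c_2\ln 2$ in $(\ln n)^4/n^{2-\epsilon}$ emerges from $2^{2K} = n^{2c_2\ln 2}$, the factor $8$ arising from squaring the coefficients, the $4\Delta$-scaling, and the Poisson moment growth. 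The cross-term $(\bE L_H-\bE U_H)^2\bP(Y\le 2\Delta)\bP(Y>2\Delta)$ is negligible because whenever both indicator events carry non-negligible probability, both $\bE L_H$ and $\bE U_H$ are close to $-p\ln p$; otherwise one of the probabilities is super-polynomially small.

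The hardest step is the variance bound for $S_{K,H}(X)$: one must track the large polynomial coefficients $|g_{k,H}|(4\Delta)^{-k+1}$ together with the moments of a polynomial of degree $K = c_2\ln n$ in a Poisson random variable, obtaining the precise exponent $\epsilon = 8c_2\ln 2$ rather than a loose power of $n$. The bias analysis, by contrast, relies mainly on standard facts about best polynomial approximation of $-x\ln x$, a Taylor expansion of the bias-corrected MLE, and Poisson tail bounds to decouple the two branches.
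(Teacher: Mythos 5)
The paper does not actually prove this lemma: it is introduced with the words ``We first recall the following lemma from [jiao2014minimax],'' and no proof is supplied in this manuscript. So there is no paper proof to compare against; I can only assess the proposal on its own terms and against the supporting lemmas the paper does state (Lemmas~\ref{lem_ibragimov}, \ref{lem_small_p}, \ref{lemma.poissontail}).

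Your decomposition of $\mathbb{E}\xi$ and $\mathsf{Var}(\xi)$ by conditioning on the independent gating variable $Y$ is mathematically correct (I verified the cross term $(\mathbb{E}L_H-\mathbb{E}U_H)^2\,q(1-q)$ does arise exactly as you wrote), and the regime split in $p$ is the right skeleton. Three places where the sketch glosses over genuinely nontrivial work: (1) The Poisson tail $\mathbb{P}(Y>2\Delta)$ for $p\le\Delta$ is \emph{not} super-polynomially small, nor is the exponent ``arbitrarily large.'' From Lemma~\ref{lemma.poissontail} one gets $n^{-c_1\ln(4/e)}$ when $p\asymp\Delta$; the constant $c_1=16(1+\delta)$ is chosen precisely so that this fixed polynomial rate beats the size of the contamination, and if you believe the decay is super-polynomial you would not see why that specific numerical choice of $c_1$ is needed. (2) The assertion that the truncation $L_H=\min(S_{K,H},1)$ affects the bias negligibly ``since $S_{K,H}(X)$ concentrates near $-p\ln p<1$'' is not automatic: the coefficients $g_{k,H}(4\Delta)^{-k+1}$ grow geometrically, so $S_{K,H}(X)$ can be enormous on the Poisson upper tail. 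Bounding $\mathbb{E}[S_{K,H}(X)\mathbbm{1}(S_{K,H}(X)>1)]$ requires the same second-moment control as the variance bound, paired with a tail bound on $X$; it is a real step, not a remark. (3) The variance bound for $S_{K,H}(X)$---which you correctly identify as the crux and which produces the exponent $\epsilon=8c_2\ln 2$---is stated at the level of ``one must track'' rather than carried out; the one concrete ingredient the paper does give you, the bound $\mathbb{E}S_{K,H}^2(X)\le 2^{10c_2\ln 2}(4c_1\ln n)^4 p/n$ in Lemma~\ref{lem_small_p}, only covers $p<\frac{1}{en\ln n}$, so you would still need the analogue of the coefficient bound up to $p\asymp\Delta$. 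In summary: a sound outline with the right decomposition and the right tools, but the most load-bearing estimates (tail exponent arithmetic, truncation control, second-moment bound on the polynomial estimator) are asserted rather than proved.
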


In light of Lemma \ref{lem_optimal}, we have
\begin{align}
  |\mathsf{Bias}(\hat{H})| &\le \sum_{i=1}^S |\mathsf{Bias}(\xi(\hat{p}_{i,1},\hat{p}_{i,2}))| \lesssim \sum_{i=1}^S \frac{1}{n\ln n} = \frac{S}{n\ln n}\\
  \mathsf{Var}(\hat{H}) &= \sum_{i=1}^S \mathsf{Var}(\xi(\hat{p}_{i,1},\hat{p}_{i,2})) \le \sum_{i=1}^S \left(\frac{(\ln n)^4}{n^{2-\epsilon}} + \frac{p_i(\ln p_i)^2}{n}\right) \\
  &\lesssim \frac{S(\ln n)^4}{n^{2-\epsilon}} + \frac{H\ln S}{n}
\end{align}
where we have used Lemma \ref{lem_varentropy} in the last step. Hence,
\begin{align}
  \bE_P\left(\hat{H}-H(P)\right)^2 &= |\mathsf{Bias}(\hat{H})|^2 +  \mathsf{Var}(\hat{H}) \\
  &\lesssim \frac{S^2}{(n\ln n)^2} + \frac{S(\ln n)^4}{n^{2-\epsilon}} + \frac{H\ln S}{n}.
\end{align}

When $S\ln S\le enH\ln n$, for $\epsilon$ small enough, say, $\epsilon<\frac{1}{2}$, we have
\begin{align}
  \frac{S(\ln n)^4}{n^{2-\epsilon}}\lesssim \sqrt{\frac{S^2}{(n\ln n)^2}\cdot \frac{H\ln S}{n}} \le \frac{S^2}{(n\ln n)^2} + \frac{H\ln S}{n}
\end{align}
where we have used the assumption that $H\ge H_0>0$. Hence, the term $\frac{S(\ln n)^4}{n^{2-\epsilon}}$ is negligible when compared with others, and we have reached the end for the case $S\ln S\le e^2nH\ln n$.

For the case where $S\ln S\ge e^2nH\ln n$, we need stronger results for the bias and variance in the regime where $p<\frac{1}{en\ln n}$. The results are summarized in the following lemma.
\begin{lemma}\label{lem_small_p}
  If $0<c_2\le1\le c_1$, for $nX\sim\mathsf{Poi}(np),0<p<\frac{1}{en\ln n}$, we have
  \begin{align}
  |\bE S_{K,H}(X) + p\ln p| &\le -p\ln(pn\ln n) + \left(D_p + \ln(4c_1/c_2^2)\right)p\\
  \bE S_{K,H}^2(X) &\le 2^{10c_2\ln 2}\frac{(4c_1\ln n)^{4}p}{n}
\end{align}
  where the constant $D_p$ is given in Lemma \ref{lem_ibragimov}.
\end{lemma}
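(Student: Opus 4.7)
For the bias bound, I would start from the Poisson identity $\bE\prod_{r=0}^{k-1}(X-r/n)=p^k$, which gives $\bE S_{K,H}(X) = \sum_{k=1}^{K}g_{k,H}(4\Delta)^{1-k}p^k$. Substituting $g_{1,H}=r_{1,H}-\ln(4\Delta)$ and $g_{k,H}=r_{k,H}$ for $k\ge 2$, and writing $y=p/(4\Delta)$, a short algebraic manipulation yields the identity
\[
\bE S_{K,H}(X) + p\ln p \;=\; 4\Delta\bigl(R(y)-R(0)\bigr),
\]
where $R(x):=P_K(x)-(-x\ln x)$ is the pointwise error of the best degree-$K$ polynomial approximation $P_K(x)=\sum_k r_{k,H}x^k$ of $-x\ln x$ on $[0,1]$, and note that $R(0)=r_{0,H}$. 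I would then invoke Lemma~\ref{lem_ibragimov} to obtain a sharp pointwise estimate roughly of the form $|R(y)-R(0)|\preceq -y\ln(yK^2)+D_p\,y$ valid for $y$ below an explicit threshold; the hypothesis $p<1/(en\ln n)$ combined with $K=c_2\ln n$ and $4\Delta=4c_1\ln n/n$ ensures $y=pn/(4c_1\ln n)$ lies well within this regime. Plugging these choices in gives $yK^2=pnc_2^2/(4c_1)$, and the stated bound $-p\ln(pn\ln n)+(D_p+\ln(4c_1/c_2^2))p$ emerges after reorganizing the logarithmic constants.

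For the second-moment bound, the plan is to expand the square and evaluate the mixed Poisson moments. Using the Vandermonde-type identity $m^{\underline{j}}m^{\underline{k}}=\sum_{i=0}^{\min(j,k)}\binom{j}{i}\binom{k}{i}i!\,m^{\underline{j+k-i}}$ for falling factorials, together with $\bE[m^{\underline{\ell}}]=(np)^\ell$ for $m=nX\sim\mathsf{Poi}(np)$, one obtains
\[
\bE\!\left[\prod_{r=0}^{j-1}(X-\tfrac{r}{n})\prod_{s=0}^{k-1}(X-\tfrac{s}{n})\right]=\sum_{i=0}^{\min(j,k)}\binom{j}{i}\binom{k}{i}i!\,\frac{p^{j+k-i}}{n^i}.
\]
Regrouping the resulting double sum by $i$ recasts $\bE S_{K,H}^2(X)$ as $(4\Delta)^2\sum_{i=0}^{K}i!\,(np)^{-i}B_i(y)^2$, where $B_i(y):=\sum_{k\ge i}g_{k,H}\binom{k}{i}y^k$ is essentially a weighted derivative of $P_K$ at $y$. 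I would bound $|B_i(y)|$ using the classical Chebyshev-coefficient estimate $|r_{k,H}|\preceq 2^{k}\,\|{-x\ln x}\|_{\infty,[0,1]}$ for the best polynomial approximant of a bounded function, so that each $B_i$ is controlled term by term. Summing over $i$, the dominant contribution is linear in $p$ and carries the factor $(4c_1\ln n)^4$ from $(4\Delta)^2$ multiplied by additional powers of $K=c_2\ln n$, while the $2^{10c_2\ln 2}$ prefactor arises from the geometric blow-up $2^{O(K)}=n^{O(c_2\ln 2)}$ of the Chebyshev coefficients, squared and balanced against the small-$p$ factor $(np)^{-i}\le 1$ afforded by $p\le 1/(en\ln n)$.

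The main obstacle lies in the bias part: the crude uniform estimate $\|R\|_{\infty}\asymp 1/K$ would only give $|\bE S_{K,H}(X)+p\ln p|\preceq 4\Delta/K\asymp 1/n$, whereas the statement demands a bound that scales with $p\asymp 1/(n\ln n)$ as $p\to 0$. Extracting the extra factor of $y\propto p$ from the difference $R(y)-R(0)$, by exploiting both the smoothness of $P_K$ away from the endpoint and the precise pointwise behavior of the best approximation error near zero supplied by Lemma~\ref{lem_ibragimov}, is the crux of the argument. The second-moment bookkeeping is comparatively routine, but still requires care in balancing the coefficient growth $2^K=n^{c_2\ln 2}$ against $(np)^{-i}$, which is precisely why the hypothesis $p\le 1/(en\ln n)$ is imposed.
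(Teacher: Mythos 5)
Your bias-bound argument coincides with the paper's: both recast $\bE S_{K,H}(X)+p\ln p$ as $4\Delta$ times the pointwise approximation error at $y=p/(4\Delta)$. Your identity $\bE S_{K,H}(X)+p\ln p = 4\Delta\bigl(R(y)-R(0)\bigr)$ is correct and is just the paper's $4\Delta\bigl(p_K(y)+y\ln y\bigr)$ in different notation, and the invocation of Lemma~\ref{lem_ibragimov} (with $C=1$, which is legitimate since $y<\frac{1}{4ec_1(\ln n)^2}\le\frac{1}{c_2^2(\ln n)^2}=K^{-2}$ by $c_1\ge1\ge c_2$) is exactly what the paper does. One arithmetic slip: $yK^2 = \frac{pn}{4c_1\ln n}\cdot c_2^2(\ln n)^2 = \frac{pnc_2^2\ln n}{4c_1}$, not $pnc_2^2/(4c_1)$; you dropped a factor of $\ln n$. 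With the correction, $-p\ln(yK^2)=-p\ln(pn\ln n)+p\ln(4c_1/c_2^2)$ and the stated bound follows.

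For the second moment, the paper itself offers no proof (it defers to an external lemma), so there is nothing in the source to compare against. Your regrouping $\bE S_{K,H}^2(X)=(4\Delta)^2\sum_{i=0}^K i!\,(np)^{-i}B_i(y)^2$ via the falling-factorial product identity is correct, and controlling the coefficient growth $\sum_k|g_{k,H}|$ is the right second ingredient. But the stated mechanism for closing the argument is wrong: you claim that $p\le 1/(en\ln n)$ yields ``$(np)^{-i}\le1$,'' yet this hypothesis forces $np\le 1/(e\ln n)<1$, so $(np)^{-i}\ge1$ and in fact blows up with $i$. What the small-$p$ hypothesis actually tames is the combination $\frac{y^2}{np}=\frac{pn}{16c_1^2(\ln n)^2}\le\frac{1}{16ec_1^2(\ln n)^3}$, and the $i$-th term is roughly $i!\bigl(\tfrac{y^2}{np}\bigr)^i$ times coefficient factors, since $B_i(y)$ carries $y^i$ as its leading power. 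As written, the balancing claim mislocates the mechanism and would not close; the repair is to track $y^{2i}/(np)^i$ rather than $(np)^{-i}$ alone, and to verify that the factorial and $2^{O(K)}$ coefficient growth are dominated by the $(\ln n)^{-3i}$ decay of $\bigl(\tfrac{y^2}{np}\bigr)^i$ for $i\le K\asymp\ln n$.
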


Using the Poisson tail bound (cf. Lemma \ref{lemma.poissontail}) and similar argument to \cite[Lemma 8]{jiao2014minimax}, we have the following lemma.
\begin{lemma}\label{lem_optimal}
  Suppose $0<c_1=16(1+\delta), 0<10c_2\ln 2=\epsilon<1, \delta>0$. Then for $0<p<\frac{1}{en\ln n}$, we have
\begin{align}
  |\mathsf{Bias}(\xi)| &\le -p\ln(pn\ln n) + c_3p\\
  \mathsf{Var}(\xi) &\lesssim \frac{(\ln n)^4p}{n^{1-\epsilon}}
\end{align}
where $c_3$ is some universal constant which only depends on $c_1$ and $c_2$.
\end{lemma}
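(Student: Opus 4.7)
The plan is to split the estimator $\xi$ according to the disjoint indicators $\mathbbm{1}(Y\le 2\Delta)$ and $\mathbbm{1}(Y>2\Delta)$, exploit the independence of $X$ and $Y$ to factor expectations, and show that in the regime $0<p<1/(en\ln n)$ the ``smooth'' branch $U_H(X)\mathbbm{1}(Y>2\Delta)$ is negligible thanks to a Poisson tail bound; the ``nonsmooth'' branch is then handled by Lemma~\ref{lem_small_p}.

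By independence, $\bE\xi = \bE L_H(X)\cdot\bP(Y\le 2\Delta) + \bE U_H(X)\cdot\bP(Y>2\Delta)$. Since $nY\sim\mathsf{Poi}(np)$ with mean $np<1/(e\ln n)$ while the threshold $2\Delta n=2c_1\ln n$ is $\Omega(\ln n)$, Lemma~\ref{lemma.poissontail} gives $\bP(Y>2\Delta)\le n^{-C}$ for $C$ as large as we wish by choosing $c_1$ large enough. For the bias, write
\begin{align*}
\bE\xi + p\ln p = \bigl(\bE L_H(X)+p\ln p\bigr)\bP(Y\le 2\Delta) + \bE U_H(X)\,\bP(Y>2\Delta) - p\ln p\cdot\bP(Y>2\Delta).
\end{align*}
The first factor is handled by decomposing $\bE L_H(X) = \bE S_{K,H}(X) - \bE(S_{K,H}(X)-1)^+$. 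Lemma~\ref{lem_small_p} gives $|\bE S_{K,H}(X)+p\ln p|\le -p\ln(pn\ln n)+O(p)$, and Cauchy--Schwarz together with Markov's inequality yields
\begin{align*}
\bE(S_{K,H}(X)-1)^+\le \bigl(\bE S_{K,H}^2(X)\bigr)^{1/2}\bP(S_{K,H}(X)>1)^{1/2}\le \bE S_{K,H}^2(X)\preceq \frac{(\ln n)^4 p}{n^{1-\epsilon}}.
\end{align*}
Since $p<1/(en\ln n)$ implies $-p\ln(pn\ln n)\ge p$, both the $O(p)$ slack of Lemma~\ref{lem_small_p} and the above truncation error are absorbed into the target $-p\ln(pn\ln n)$. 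The remaining two terms in the decomposition satisfy $|p\ln p|\preceq p\ln n$ and a crude polynomial-in-$n$ bound $|\bE U_H(X)|\preceq \mathrm{poly}(n)$, each of which is annihilated by the factor $n^{-C}$.

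For the variance, independence and disjointness of the indicators give
\begin{align*}
\Var(\xi)\le\bE\xi^2 = \bE L_H^2(X)\,\bP(Y\le 2\Delta) + \bE U_H^2(X)\,\bP(Y>2\Delta).
\end{align*}
Because truncation at $1$ can only shrink absolute value, $L_H^2\le S_{K,H}^2$ pointwise, so the first piece is $\preceq (\ln n)^4 p/n^{1-\epsilon}$ by Lemma~\ref{lem_small_p}. The $U_H^2$ piece is again killed by the super-polynomially small tail probability.

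The main technical obstacle is calibrating the constants so that (i) the Poisson tail exponent (determined by $c_1$) beats every crude polynomial bound on $|\bE U_H|$ and $\bE U_H^2$, and (ii) the ``loss factor'' $n^\epsilon$ with $\epsilon=10c_2\ln 2$ in the variance bound of $S_{K,H}$ remains compatible with $(\ln n)^4 p/n^{1-\epsilon}$ being dominated by $-p\ln(pn\ln n)$ in the bias comparison. This is essentially the bookkeeping carried out for the complementary regime in \cite[Lem.~8]{jiao2014minimax}, and no new ideas are required.
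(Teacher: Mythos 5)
Your decomposition — split $\xi$ across the disjoint indicators, use independence of $X$ and $Y$ to factor, invoke Lemma~\ref{lem_small_p} on the polynomial branch (with the $\min\{\cdot,1\}$ truncation handled by Cauchy--Schwarz plus Markov), and kill the $U_H$ branch with a Poisson tail — is exactly the argument the paper has in mind; the paper's own ``proof'' is the one-line remark that the claim follows from the Poisson tail bound and the argument of \cite[Lem.~8]{jiao2014minimax}, so you have supplied the details it elides. Two small slips worth fixing. First, the decomposition identity you wrote has the wrong sign: since $\bE\xi + p\ln p = (\bE L_H(X)+p\ln p)\bP(Y\le 2\Delta) + (\bE U_H(X)+p\ln p)\bP(Y>2\Delta)$, the last term should be $+\,p\ln p\cdot\bP(Y>2\Delta)$, not $-$; this does not affect the triangle-inequality bound. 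Second, and more substantively, the statement that the cross-branch terms are ``annihilated by the factor $n^{-C}$'' glosses over the fact that the target $-p\ln(pn\ln n)\ge p$ shrinks with $p$, and $p$ has no lower bound on $(0,\tfrac{1}{en\ln n})$; a fixed power $n^{-C}$ alone is not enough, and the claim $|p\ln p|\preceq p\ln n$ is false for $p$ far below $1/n$. What rescues the argument is that the Chernoff-type Poisson bound $\bP(Y>2\Delta)\le\left(\frac{enp}{2c_1\ln n}\right)^{2c_1\ln n}$ itself scales like $(np)^{2c_1\ln n}$, which in particular carries a factor of $np$; combining this with $(np)^{m}|\ln(np)|\le\frac{1}{em}$ shows that $|p\ln p|\,\bP(Y>2\Delta)$ and $\bE|U_H(X)|\,\bP(Y>2\Delta)$ are both $\preceq p/\mathrm{polylog}(n)\preceq -p\ln(pn\ln n)$ uniformly in $p$ on the stated range. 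With those two points tightened, your proof is a correct rendering of the approach the paper cites.
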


Now we proceed to bound the total bias and variance. For the bias, we can write
\begin{align}
  \left|\mathsf{Bias}\left(\hat{H}\right)\right|
  &= \sum_{i: p_i\le \frac{1}{en\ln n}} \left|\mathsf{Bias}\left(\xi(\hat{p}_{i,1},\hat{p}_{i,2})\right)\right|
  + \sum_{i: p_i> \frac{1}{en\ln n}} \left|\mathsf{Bias}\left(\xi(\hat{p}_{i,1},\hat{p}_{i,2})\right)\right|\\
  &\le \sum_{i: p_i\le \frac{1}{en\ln n}} \left[-p_i\ln(p_in\ln n) + c_3p_i\right] + \sum_{i: p_i> \frac{1}{en\ln n}} \frac{O(1)}{n\ln n}\\
  &\le \sum_{i: p_i\le \frac{1}{en\ln n}} -p_i\ln(p_in\ln n) +  c_3 \sum_{i: p_i\le \frac{1}{en\ln n}} p_i + O(1)\cdot \sum_{i: p_i> \frac{1}{en\ln n}} \frac{1}{n\ln n} \\
  &\equiv B_1 + c_3B_2 +  O(1)\cdot B_3.
\end{align}

Using similar arguments in the proof of upper bound in Theorem \ref{th_MLE}, we can show that
\begin{align}
  B_1 &\le \frac{H}{\ln S}\ln\left(\frac{S\ln S}{nH\ln n}\right) + O\left(\frac{H}{\ln S}\cdot \frac{\ln\ln S}{\ln S}\right)\\
  B_2 &\le \frac{H}{\ln(en\ln n)} = O\left(\frac{H}{\ln n}\right)\\
  B_3 &\lesssim \frac{1}{n\ln n} + \frac{H}{\ln(en\ln n)} = O\left(\frac{H}{\ln n}\right).
\end{align}

Summing up the bias yields
\begin{align}
  |\mathsf{Bias}(\hat{H})| &\le B_1 + B_2 + B_3 \le \frac{H}{\ln S}\ln\left(\frac{S\ln S}{nH\ln n}\right) + O\left(\frac{H}{\ln S}\right).
\end{align}

As for the total variance, we have
\begin{align}
  \mathsf{Var}(\hat{H}) &= \sum_{i:p_i\ge \frac{1}{en\ln n}} \mathsf{Var}(\xi(\hat{p}_{i,1},\hat{p}_{i,2})) +  \sum_{i:p_i< \frac{1}{en\ln n}} \mathsf{Var}(\xi(\hat{p}_{i,1},\hat{p}_{i,2}))\\
 &\lesssim \sum_{i:p_i\ge \frac{1}{en\ln n}} \left(\frac{(\ln n)^4}{n^{2-\epsilon}} + \frac{p(\ln p)^2}{n}\right) + \sum_{i:p_i< \frac{1}{en\ln n}} \frac{(\ln n)^4p}{n^{1-\epsilon}}\\
 &\lesssim \left(\frac{(\ln n)^5}{n^{1-\epsilon}} + \frac{(\ln n)^2}{n}\right) + \frac{(\ln n)^4}{n^{1-\epsilon}} \le O\left(\frac{(\ln n)^5}{n^{1-\epsilon}}\right)
\end{align}
Combining the total bias and variance constitutes a complete proof of the upper bounds in Theorem \ref{th_minimax}.

\section{Proof of Lower Bounds in Theorem \ref{th_minimax}}
When $S\ln S\le e^2nH\ln n$, the lower bound for the squared bias, i.e., the $\frac{S^2}{(n\ln n)^2}$ term, can be obtained using a similar argument in \cite{Wu--Yang2014minimax}. Specifically, we can assign two product measures $\mu_0^N$ and $\mu_1^N$ to the first $N(\le S)$ components in the distribution vector $P$, where
\begin{align}
  \mathsf{supp}(\mu_i) = \{0\} \cup \left[\frac{1}{a_1n\ln n}, \frac{a_2\ln n}{n}\right], \qquad i=0,1
\end{align}
for some constants $a_1,a_2>0$, and
\begin{align}
  \int_0^1 t\mu_i(dt) = \frac{1}{a_1n\ln n},\qquad i=0,1.
\end{align}

In particular,
\begin{align}
  \int_0^1 -t\ln t\mu_1(dt) - \int_0^1 -t\ln t\mu_0(dt) \gtrsim \frac{1}{n\ln n}
\end{align}
and
\begin{align}
  \inf_{\hat{H}}\sup_{P\in\mathcal{M}_S} \bE_P\left(\hat{H}-H(P)\right)^2
   &\gtrsim \left[N\left(\int_0^1 -t\ln t\mu_1(dt) - \int_0^1 -t\ln t\mu_0(dt)\right)\right]^2\\
  & \gtrsim \frac{N^2}{(n\ln n)^2}.\label{eq:lower_minimax}
\end{align}

In \cite{Wu--Yang2014minimax}, $N=S$. However, in our case, we have an additional constraint that $H(P)\le H$. Since
\begin{align}
  \bE_{\mu_i}[-p\ln p] &= \int_0^1 -t\ln t\mu_i(dt)\\
  &\le \ln(a_1n\ln n)\int_0^1 t\mu_i(dt)\\
  & = \frac{a_1\ln(a_1n\ln n)}{n\ln n}\asymp \frac{1}{n}
\end{align}
we have
\begin{align}
  \bE_{\mu_i^N}H(P) = N\bE_{\mu_i}[-p\ln p] \asymp \frac{N}{n}.
\end{align}

One can show that the measures $\mu_i^N,i = 0,1$ are highly concentrated around their expectations~\cite{Wu--Yang2014minimax}. Hence, in order to ensure $H(P)\le H$ with overwhelming probability, we can set $N\asymp \min\{nH, S\}$, and the condition $S\ln S\le e^2nH\ln n$ and $H\ge H_0>0$ yield that $nH\gtrsim S$, and thus $N\gtrsim S$. Hence by (\ref{eq:lower_minimax}),
\begin{align}
  \inf_{\hat{H}}\sup_{P\in\mathcal{M}_S(H)} \bE_P\left(\hat{H}-H(P)\right)^2 \gtrsim  \frac{N^2}{(n\ln n)^2} \gtrsim \frac{S^2}{(n\ln n)^2}.
\end{align}

The variance bound $\frac{H\ln S}{n}$ has been given in (\ref{eq:var_lower}), and so far we have completed the proof of the first part. As for the second part, the key lemma we will employ is the so-called method of two fuzzy hypotheses presented in Tsybakov \cite{Tsybakov2008}. Below we briefly review this general minimax lower bound.

Suppose we observe a random vector ${\bf Z} \in (\mathcal{Z},\mathcal{A})$ which has distribution $P_\theta$ where $\theta \in \Theta$. Let $\sigma_0$ and $\sigma_1$ be two prior distributions supported on $\Theta$. Write $F_i$ for the marginal distribution of $\mathbf{Z}$ when the prior is $\sigma_i$ for $i = 0,1$. For any function $g$ we shall write $\bE_{F_i} g(\mathbf{Z})$ for the expectation of $g(\mathbf{Z})$ with respect to the marginal distribution of $\mathbf{Z}$ when the prior on $\theta$ is $\sigma_i$. We shall write $\bE_\theta g(\mathbf{Z})$ for the expectation of $g(\mathbf{Z})$ under $P_\theta$. Let $\hat{T} = \hat{T}({\bf Z})$ be an arbitrary estimator of a function $T(\theta)$ based on $\bf Z$. We have the following general minimax lower bound.

\begin{lemma}\cite[Thm. 2.15]{Tsybakov2008} \label{lemma.tsybakov}
Given the setting above, suppose there exist $\zeta\in \mathbb{R}, s>0, 0\leq \beta_0,\beta_1 <1$ such that
\begin{align}
\sigma_0(\theta: T(\theta) \leq \zeta -s) & \geq 1-\beta_0 \\
\sigma_1(\theta: T(\theta) \geq \zeta + s) & \geq 1-\beta_1.
\end{align}
If $V(F_1,F_0) \leq \eta<1$, then
\begin{equation}
\inf_{\hat{T}} \sup_{\theta \in \Theta} \bP_\theta\left( |\hat{T} - T(\theta)| \geq s \right) \geq \frac{1-\eta - \beta_0 - \beta_1}{2},
\end{equation}
where $F_i,i = 0,1$ are the marginal distributions of $\mathbf{Z}$ when the priors are $\sigma_i,i = 0,1$, respectively.
\end{lemma}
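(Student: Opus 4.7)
The plan is to reduce the estimation problem to a binary hypothesis test and then invoke Le~Cam's classical inequality linking testing error to total variation distance. Given an arbitrary estimator $\hat{T}=\hat{T}(\mathbf{Z})$, I would introduce the plug-in test
\begin{align}
\psi(\mathbf{Z}) \triangleq \mathbbm{1}\{\hat{T}(\mathbf{Z}) \ge \zeta\}
\end{align}
and bound its type-I and type-II errors under the marginals $F_0$ and $F_1$ induced by the priors $\sigma_0$ and $\sigma_1$.

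First I would observe the key deterministic implication: on the event $\{T(\theta)\le\zeta-s\}$, if $|\hat{T}-T(\theta)|<s$ then $\hat{T}<\zeta$, so $\psi=0$. Contrapositively, $\{T(\theta)\le\zeta-s,\ \psi=1\}\subseteq\{|\hat{T}-T(\theta)|\ge s\}$. Integrating first over $\mathbf{Z}\mid\theta$ and then over $\theta\sim\sigma_0$, and adjoining the event $\{T(\theta)>\zeta-s\}$ (whose $\sigma_0$-probability is at most $\beta_0$), I would obtain
\begin{align}
\bP_{F_0}(\psi=1) \le \bP_{F_0}(|\hat{T}-T(\theta)|\ge s) + \beta_0.
\end{align}
A symmetric argument under $\sigma_1$ yields $\bP_{F_1}(\psi=0) \le \bP_{F_1}(|\hat{T}-T(\theta)|\ge s) + \beta_1$.

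Next I would invoke the classical Le~Cam lower bound on the total testing error,
\begin{align}
\bP_{F_0}(\psi=1)+\bP_{F_1}(\psi=0) \ge 1 - V(F_1,F_0) \ge 1-\eta,
\end{align}
which follows immediately from the variational formula $V(F_1,F_0)=\sup_A|F_1(A)-F_0(A)|$ applied to the acceptance region $A=\{\mathbf{z}:\psi(\mathbf{z})=1\}$. Combining with the previous two displays gives
\begin{align}
\bP_{F_0}(|\hat{T}-T(\theta)|\ge s)+\bP_{F_1}(|\hat{T}-T(\theta)|\ge s) \ge 1-\eta-\beta_0-\beta_1.
\end{align}

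Finally, since by Fubini $\bP_{F_i}(|\hat{T}-T(\theta)|\ge s)=\bE_{\sigma_i}\bP_\theta(|\hat{T}-T(\theta)|\ge s) \le \sup_{\theta\in\Theta}\bP_\theta(|\hat{T}-T(\theta)|\ge s)$ for $i\in\{0,1\}$, averaging the two terms and taking the infimum over $\hat{T}$ yields the claimed inequality with the factor of $2$ in the denominator. There is no serious obstacle here; the one point to watch is the direction of implication in the reduction from estimation to testing, which is precisely where the separation $s$ enters to guarantee that an accurate estimator must induce a correct test.
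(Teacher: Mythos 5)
Your proof is correct and is precisely the standard argument behind Tsybakov's method of two fuzzy hypotheses; the paper does not supply its own proof of Lemma~\ref{lemma.tsybakov} but imports it verbatim as Theorem~2.15 of \cite{Tsybakov2008}, so there is nothing internal to compare against. Your reduction to the plug-in test $\psi=\mathbbm{1}\{\hat T\ge\zeta\}$, the inflation of the two testing errors by $\beta_0$ and $\beta_1$ via the prior mass conditions, the total-variation lower bound $\bP_{F_0}(\psi=1)+\bP_{F_1}(\psi=0)\ge 1-V(F_1,F_0)$, and the final Fubini-plus-supremum step are exactly the steps of the cited proof.
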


Here $V(P,Q)$ is the total variation distance between two probability measures $P,Q$ on the measurable space $(\mathcal{Z},\mathcal{A})$. Concretely, we have
\begin{equation}
V(P,Q) \triangleq \sup_{A\in \mathcal{A}} | P(A) - Q(A) | = \frac{1}{2} \int |p-q| d\nu,
\end{equation}
where $p = \frac{dP}{d\nu}, q = \frac{dQ}{d\nu}$, and $\nu$ is a dominating measure so that $P \ll \nu, Q \ll \nu$.

First we assume that $S\lesssim n^{\frac{3}{2}}$. In light of Lemma \ref{lemma.tsybakov}, we construct two measures as follows.
\begin{lemma}\label{lem_measure}
  For any $0<\eta<1$ and positive integer $L>0$, there exist two probability measures $\nu_0$ and $\nu_1$ on $[\eta,1]$ such that
  \begin{enumerate}
    \item $\int t^{l} \nu_1(dt) = \int t^{l} \nu_0(dt)$, for all $l=0,1,2,\cdots,L$;
    \item $\int -\ln t \nu_1(dt) - \int -\ln t \nu_0(dt) = 2E_L[-\ln x]_{[\eta,1]}$,
  \end{enumerate}
  where $E_L[-\ln x]_{[\eta,1]}$ is the distance in the uniform norm on $[\eta,1]$ from the function $g(x)=-\ln x$ to the space spanned by $\{1,x,\cdots,x^L\}$.
\end{lemma}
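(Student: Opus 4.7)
The plan is to obtain $(\nu_0,\nu_1)$ as a normalized Jordan decomposition of the extremal signed measure in the standard duality between best uniform polynomial approximation and moment-matching. This is the same device used by Wu and Yang in \cite{Wu--Yang2014minimax} and by Jiao et al.\cite{jiao2014minimax}, and is ultimately a Hahn--Banach / LP duality statement.

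Concretely, I would first invoke the duality identity
\begin{align}
E_L[-\ln x]_{[\eta,1]} \;=\; \sup_{\mu}\; \int_\eta^1 (-\ln t)\, d\mu(t),
\end{align}
where the supremum ranges over finite signed Borel measures $\mu$ on $[\eta,1]$ with $\|\mu\|_{TV}\le 1$ and vanishing moments $\int_\eta^1 t^l\, d\mu(t)=0$ for all $l=0,1,\ldots,L$. By weak-$*$ compactness of the unit ball of signed measures on a compact interval, the supremum is attained by some $\mu^*$. Write $\mu^*=\mu^+-\mu^-$ for the Jordan decomposition. The $l=0$ constraint forces $\mu^+([\eta,1])=\mu^-([\eta,1])=c$, and total variation gives $2c\le 1$.

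Next I would argue that $c=1/2$ at the extremizer: otherwise $\mu^*/(2c)$ remains feasible (same zero-moment constraints, unit total variation) and achieves strictly larger objective value, a contradiction. Therefore define
\begin{align}
\nu_1 \triangleq 2\mu^+, \qquad \nu_0 \triangleq 2\mu^-.
\end{align}
Both are probability measures supported on $[\eta,1]$; their first $L$ moments coincide because $\int t^l\, d\nu_1 - \int t^l\, d\nu_0 = 2\int t^l\, d\mu^* = 0$ for $l=0,1,\ldots,L$; and
\begin{align}
\int_\eta^1 (-\ln t)\, d\nu_1 - \int_\eta^1 (-\ln t)\, d\nu_0 \;=\; 2\int_\eta^1 (-\ln t)\, d\mu^* \;=\; 2E_L[-\ln x]_{[\eta,1]},
\end{align}
which is exactly the second requirement.

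The main obstacle, and the only nontrivial step, is establishing the duality identity itself. The cleanest route is a separating-hyperplane argument in $C[\eta,1]$: the closure of $\mathcal{P}_L + r\cdot B_{C[\eta,1]}$ fails to contain $-\ln x$ precisely when $r < E_L[-\ln x]_{[\eta,1]}$, and a separating continuous linear functional (i.e., a signed measure) with unit total variation that annihilates $\mathcal{P}_L$ delivers the bound. Because this identity is already standard in the entropy-estimation literature, I would simply cite it rather than re-derive it, leaving only the routine Jordan-decomposition bookkeeping above.
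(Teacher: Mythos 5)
Your proposal is correct and is exactly the standard Hahn--Banach duality argument on which Lemma \ref{lem_measure} rests; the paper in fact gives no explicit proof of this lemma, treating it as the classical dual characterization of best uniform approximation (as in Wu and Yang \cite{Wu--Yang2014minimax}). The two points you flag---attaining the supremum via weak-$*$ compactness and the normalization to probability measures via $c=1/2$---you handle correctly, the latter relying on $E_L[-\ln x]_{[\eta,1]}>0$ because $-\ln x$ is not a polynomial on $[\eta,1]$.
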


Based on Lemma \ref{lem_measure}, two new measures $\tilde{\nu}_0,\tilde{\nu}_1$ can be constructed as follows: for $i=0,1$, the restriction of $\tilde{\nu}_i$ on $[\eta,1]$ is absolutely continuous with respect to $\nu_i$, with the Radon-Nikodym derivative given by
\begin{align}\label{eq:derivative}
  \frac{d\tilde{\nu}_i}{d\nu_i}(t) = \frac{\eta}{t}, \qquad t\in[\eta,1],
\end{align}
and $\tilde{\nu}_i(\{0\})=1-\tilde{\nu}_i([\eta,1])\ge0$. Hence, $\tilde{\nu}_0,\tilde{\nu}_1$ are both probability measures on $[0,1]$, with the following properties
\begin{enumerate}
  \item $\int t^1 \tilde{\nu}_1(dt) = \int t^1 \tilde{\nu}_0(dt) = \eta$;
  \item $\int t^l \tilde{\nu}_1(dt) = \int t^l \tilde{\nu}_0(dt)$, for all $l=2,\cdots,L+1$;
  \item $\int -t\ln t \tilde{\nu}_1(dt) - \int -t\ln t \tilde{\nu}_0(dt) = 2\eta E_L[-\ln x]_{[\eta,1]}$.
\end{enumerate}
The construction of measures $\tilde{\nu}_0,\tilde{\nu}_1$ are inspired by Wu and Yang~\cite{Wu--Yang2014minimax}.

The following lemma characterizes the properties of $E_L[-\ln x]_{[\eta,1]}$.
\begin{lemma}\label{lem_approx}
  If $K\ge eL^2$, there exists a universal constant $D_0\ge 1$ such that
\begin{align}
  E_L[-\ln x]_{[(D_0K)^{-1},1]} \gtrsim \ln\left(\frac{K}{L^2}\right).
\end{align}
\end{lemma}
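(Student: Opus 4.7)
The lemma is a purely approximation-theoretic statement, and my plan is to prove it via the classical divided-difference lower bound: for any $L+2$ distinct nodes $\eta \le x_0 < x_1 < \cdots < x_{L+1} \le 1$ with $c_i = \prod_{j\ne i}(x_i-x_j)^{-1}$, the functional $\sum_{i=0}^{L+1} c_i f(x_i)$ annihilates $\mathcal{P}_L$, so
\[
E_L[-\ln x]_{[\eta,1]} \;\ge\; \frac{\bigl|\sum_i c_i \ln x_i\bigr|}{\sum_i |c_i|}.
\]
It therefore suffices to exhibit one configuration of nodes making this ratio at least a constant times $\ln(K/L^2)$.

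To diagnose the right scale, I would first reduce to the standard interval $[-1,1]$ via the affine map $x = \tfrac{1-\eta}{2}y+\tfrac{1+\eta}{2}$, turning the target into $-\ln(y+c)$ on $[-1,1]$ with $c = (1+\eta)/(1-\eta)$, so the logarithmic singularity $y = -c$ lies at distance $c-1\asymp \eta = 1/(D_0 K)$ from $y=-1$. A classical Joukowski calculation yields the closed-form Chebyshev expansion $-\ln(y+c) = \ln\omega + \sum_{n\ge 1}\tfrac{2(-1)^n}{n\omega^n}T_n(y)$ with $\omega = (1+\sqrt{\eta})/(1-\sqrt{\eta})$. Under the hypothesis $K\ge eL^2$ (equivalently $L\sqrt{\eta}\preceq 1$) we have $\omega^{-n}\asymp 1$ for all $n\le 1/\sqrt{\eta}$, so summing the Chebyshev tail gives $\sum_{n>L}|a_n|\asymp\ln(1/(L\sqrt{\eta}))\asymp\ln(K/L^2)$. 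This confirms the matching upper bound and that $\ln(K/L^2)$ is the correct order.

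For the lower bound I would place the nodes on the Chebyshev scale near the singular endpoint: in the $[-1,1]$ picture, $y_0=-1$ (hence $x_0=\eta$) together with $1+y_i\asymp (i/L)^2$ for $i=1,\ldots,L+1$. Using the Newton form $\sum_i c_i f(x_i) = c_0(f(x_0)-q(x_0))$, where $q\in\mathcal{P}_L$ is the Lagrange interpolant of $-\ln$ through $\{x_1,\ldots,x_{L+1}\}$, the key term is $-\ln x_0 - q(x_0)$ with $-\ln x_0 = \ln(1/\eta)\asymp \ln K$. Bounding $|q(x_0)|$ by the Bernstein--Chebyshev extrapolation $|q(x_0)|\le |T_L(\psi(x_0))|\,\|q\|_{[x_1,x_{L+1}]}$ (where $\psi:[x_1,x_{L+1}]\to[-1,1]$ is affine) and observing that $x_0$ lies at distance $\asymp 1/L^2$ outside $[x_1,x_{L+1}]$, the blowup factor $|T_L(\psi(x_0))|$ is a universal constant (essentially $\cosh 2$), while $\|q\|_{[x_1,x_{L+1}]}\preceq \ln L$ (bounded by the sup of $-\ln x$ there, $\asymp\ln L$). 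Thus $|-\ln x_0 - q(x_0)|\ge \ln(1/\eta)-C_1\ln L$ for a universal $C_1$, and the Chebyshev-product identity for the chosen nodes uniformly bounds $|c_0|/\sum_i|c_i|$ below, yielding the required lower bound after a routine calculation.

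The main obstacle is a sharper control of the Bernstein--Chebyshev extrapolation constant: the naive bound $|T_L(\psi(x_0))|\le\cosh 2$ is too large to close the gap between $\ln K - C_1\ln L$ and $\ln(K/L^2)=\ln K-2\ln L+O(1)$ uniformly over the regime $K\ge eL^2$, since the borderline case $K\asymp L^2$ leaves no slack for a $D_0$-independent constant. The resolution is to refine the extrapolation step by exploiting the specific logarithmic structure of $q$ --- for instance, by expressing $q$ via its own Chebyshev expansion on $[x_1,x_{L+1}]$ and bounding the tail contribution at $x_0$ directly, rather than invoking the worst-case Chebyshev extremizer --- which brings the effective constant down to the natural value $2$. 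This refined accounting is a routine but delicate piece of approximation theory, and parallels similar constructions in Wu and Yang's lower-bound analysis for entropy and $L^p$-norm estimation.
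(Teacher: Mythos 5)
Your approach is genuinely different from the paper's. The paper maps $[\eta,1]$ to $[-1,1]$ via $f_N(x)=-\ln(\frac{1+x}{2}+\frac{1-x}{2N})$, controls $E_L[f_N]$ through the two-sided Petrushev--Popov characterization in terms of the modulus $\tau_1(f,\Delta_n)$ (cited from \cite{Petrushev--Popov2011rational}), and then closes with an averaging trick over degrees $n\in\{L+1,\dots,DL\}$; choosing $D$ large comfortably absorbs the delicate $L$-dependent constants, so the borderline regime $K\asymp L^2$ is handled automatically without precise constant tracking. You instead attempt a direct dual-functional lower bound via divided differences, Chebyshev-scaled node placement near the singular endpoint, and Bernstein--Chebyshev extrapolation. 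This is a legitimate and potentially more elementary route.

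However, as you yourself flag, the argument is not completed, and the gap is substantive rather than cosmetic. First, you need $|-\ln x_0 - q(x_0)|\ge \ln(D_0K)-C_1\ln L$ with $C_1\le 2$ in order to survive the borderline case $K\asymp L^2$: if $C_1>2$ the right-hand side is negative for large $L$ regardless of the choice of $D_0$, and the hidden constant in $\succeq$ cannot fix a sign problem. The naive Bernstein--Chebyshev extrapolation bound gives $C_1$ strictly larger than $2$ (the factor $|T_L(\psi(x_0))|$ multiplies $\|q\|_{[x_1,x_{L+1}]}\gtrsim 2\ln L$), and your proposed refinement --- re-expanding $q$ in Chebyshev polynomials on $[x_1,x_{L+1}]$ and using that low-order coefficients see extrapolation factors close to $1$ --- is plausible but not carried out. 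Second, the closing assertion that ``the Chebyshev-product identity for the chosen nodes uniformly bounds $|c_0|/\sum_i|c_i|$ below'' is stated without proof, and this is not automatic: since $\sum_i c_i=0$ one always has $|c_0|\le\sum_{i\ge1}|c_i|$, and whether $|c_0|$ is a constant fraction of $\sum_i|c_i|$, as opposed to an $O(1/L)$ fraction, depends on sign cancellations among the $c_i$ for these particular nodes. Until both of these steps are nailed down, the proposal is a program that identifies the correct scaling and the correct obstacle, but leaves the crux open.
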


Define
\begin{align}\label{eq:lower_minimax_para1}
L = d_2\ln n,  \quad\eta = \frac{nH}{d_2^2D_0S\ln S\ln n}, \quad M = \frac{H}{\ln S}\cdot \frac{d_1}{S\eta} = \frac{d_1d_2^2D_0\ln n}{n},
\end{align}
with universal positive constants $d_1\in(0,e^{-1}],d_2>2$ to be determined later. Without loss of generality we assume that $d_2\ln n$ is always a positive integer. Due to $S\ln S\ge e^2nH\ln n$, we have $(D_0\eta)^{-1}\ge eL^2$, thus Lemma \ref{lem_approx} yields
\begin{align}\label{eq:appro_err}
   E_L[-\ln x]_{[\eta,1]} \gtrsim \ln\left(\frac{1}{D_0\eta L^2}\right) \gtrsim \ln\left(\frac{S\ln S}{Hn\ln n}\right).
\end{align}

Let $g(x)=Mx$ and let $\mu_i$ be the measures on $[0,M]$ defined by $\mu_i(A)=\tilde{\nu}_i(g^{-1}(A))$ for $i=0,1$. It then follows that
\begin{enumerate}
  \item $\int t^1 \mu_1(dt) = \int t^1 \mu_0(dt) = d_1H/(S\ln S)$;
  \item $\int t^l \mu_1(dt) = \int t^l \mu_0(dt)$, for all $l=2,\cdots,L+1$;
  \item $\int -t\ln t \mu_1(dt) - \int -t\ln t \mu_0(dt) = 2\eta M E_L[-\ln x]_{[\eta,1]}$.
\end{enumerate}

Let $\mu_0^{S-1}$ and $\mu_1^{S-1}$ be product priors which we assign to the length-$(S-1)$ vector $(p_1,p_2,\cdots,p_{S-1})$, and we set $p_S=d_1(1-H/\ln S)$. With a little abuse of notation, we still denote the overall product measure by $\mu_0^S$ and $\mu_1^S$. Note that $P$ may not be a probability distribution, we consider the set of \emph{approximate} probability vectors
\begin{align}
  \mathcal{M}_S(\epsilon, H) \triangleq \left\{P:p_1,p_2,\cdots,p_S\ge 0, \left|\sum_{i=1}^S p_i-d_1\right|\le \epsilon, H(P)\le H\right\},
\end{align}
with parameter $\epsilon>0$ to be specified later, and further define under the Poissonized model,
\begin{align}
  R_P(S,n,H,\epsilon) &\triangleq \inf_{\hat{F}}\sup_{P\in \mathcal{M}_S(\epsilon, H)} \mathbb{E}_P|\hat{H}-H(P)|^2.
\end{align}

\begin{lemma}\label{lem_equiv}
  For any $S,n\in\mathbb{N}$ and $0<\epsilon<d_1$, we have
  \begin{align}
      R(S,n,H) &\ge \frac{1}{2d_1^{2}}R_P\left(S,\frac{2n}{d_1},\ln(d_1-\epsilon)\left(H-\ln(d_1+\epsilon)\right),\epsilon\right)\\
       &\qquad - (\ln S)^2\exp(-\frac{n}{4}) - \frac{\epsilon^2}{d_1^{2}}\cdot \sup_{x\in[d_1-\epsilon,d_1+\epsilon]} \ln^2(ex).
  \end{align}
\end{lemma}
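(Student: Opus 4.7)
The plan is to generalize the Poissonization--Multinomial reduction of Lemma~\ref{lemma.poissonmultinomial} to the setting where the measure $P$ need only have total mass close to $d_1$ rather than exactly $1$, and to transfer a Multinomial estimator for $\mathcal{M}_S(H)$ into a Poisson estimator on $\mathcal{M}_S(\epsilon,H')$ while carefully tracking the scaling by $d_1$. Throughout I read the entropy bound as $H' := (d_1-\epsilon)(H-\ln(d_1+\epsilon))$, treating the prefix ``$\ln$'' in the statement as a typographical slip. Let $\hat H_M$ be any Multinomial estimator for $\mathcal M_S(H)$ with $n$ samples and worst-case squared risk $r$. The aim is to build a Poisson estimator $\hat H_P$ in the rate-$2n/d_1$ model on $\mathcal M_S(\epsilon,H')$ whose squared risk is at most $2 d_1^2 r + 2 d_1^2 (\ln S)^2 e^{-n/4} + 2\epsilon^2\sup_{x\in[d_1-\epsilon,d_1+\epsilon]}\ln^2(ex)$; taking the infimum over $\hat H_M$ and dividing by $2d_1^2$ then yields the claim, absorbing the $d_1^2$ prefactor using $2d_1^2\le 1$.

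For the construction, write $s := \|P\|_1 \in [d_1-\epsilon,d_1+\epsilon]$ and $\tilde P := P/s$, a genuine probability vector on $S$ letters. A direct computation gives the entropy identity
\[
H(P) \;=\; s\,H(\tilde P) - s\ln s,
\]
and hence $H(\tilde P) = H(P)/s + \ln s \le H'/(d_1-\epsilon) + \ln(d_1+\epsilon) \le H$ by the choice of $H'$, so that $\tilde P \in \mathcal M_S(H)$. Under the Poisson model $X_i \sim \mathsf{Poi}(2np_i/d_1)$, the total count $N := \sum_i X_i$ has distribution $\mathsf{Poi}(2ns/d_1)$, and conditional on $N$ the vector $(X_1,\ldots,X_S)$ is Multinomial with parameters $(N,\tilde P)$. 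On $\{N\ge n\}$ I would subsample $n$ of the $N$ observations by hypergeometric thinning, producing a true $\mathsf{Multinomial}(n,\tilde P)$ sample, feed those counts into $\hat H_M$ to obtain $\hat\eta$ estimating $H(\tilde P)$, and output $\hat H_P := d_1\,\hat\eta - d_1\ln d_1$; on $\{N<n\}$ I would simply output $\hat H_P := 0$.

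The risk analysis then splits according to $\{N\ge n\}$ and its complement. Since $\bE N = 2ns/d_1 \ge 2n(1-\epsilon/d_1) > n$ whenever $\epsilon<d_1$, a Chernoff bound for Poisson tails yields $\bP(N<n) \le e^{-n/4}$, and combined with $|H(P)|\le \ln S$ this event contributes at most $(\ln S)^2 e^{-n/4}$ to the squared risk. On $\{N\ge n\}$ I would decompose
\[
\hat H_P - H(P) \;=\; d_1\bigl(\hat\eta - H(\tilde P)\bigr) \;+\; \bigl[(d_1-s)H(\tilde P) - (d_1\ln d_1 - s\ln s)\bigr],
\]
bound the first summand in squared expectation by $d_1^2 r$ via the guarantee on $\hat H_M$ for $\tilde P \in \mathcal M_S(H)$, and rewrite the bracketed deterministic bias via the mean value theorem on $x\mapsto x\ln x$ as $(d_1-s)\bigl(H(\tilde P) - \ln(e\xi)\bigr)$ for some $\xi\in[d_1-\epsilon,d_1+\epsilon]$. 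Combining through $(a+b)^2 \le 2a^2 + 2b^2$ and summing with the $\{N<n\}$ contribution then produces the three advertised terms.

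The main obstacle I anticipate is sharpening the bracketed bias $(d_1-s)\bigl(H(\tilde P) - \ln(e\xi)\bigr)$ so that only the $\epsilon\sup_x|\ln(ex)|$ piece survives, rather than the naive $\epsilon\bigl(\ln S + \sup_x|\ln(ex)|\bigr)$ which carries an unwanted $\epsilon^2 \ln^2 S$ term. I expect this requires replacing the deterministic scale $d_1$ in $\hat H_P$ by the empirical proxy $\hat s := N d_1/(2n)$, which is unbiased for $s$ with fluctuations of order $d_1/\sqrt n$, and setting $\hat H_P := \hat s\,\hat\eta - \hat s\ln\hat s$. Under this refinement the first-order bias depends on $\hat s - s$ rather than on the deterministic gap $d_1 - s$, and its squared expectation is of order $d_1^2 H^2 / n$, which is absorbed into the $(\ln S)^2 e^{-n/4}$ remainder via Poisson concentration. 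This bias-control step is the technically delicate piece; the rest of the argument is a direct adaptation of the proof of Lemma~\ref{lemma.poissonmultinomial}.
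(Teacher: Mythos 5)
Your reading of the statement is right: the prefix ``$\ln$'' in the entropy argument of $R_P$ is a typo for $(d_1-\epsilon)\bigl(H-\ln(d_1+\epsilon)\bigr)$, and your overall reduction is the same as the paper's --- normalize $P$ to $\tilde P=P/\|P\|_1$, observe via $H(P)=sH(\tilde P)-s\ln s$ with $s=\|P\|_1$ that $\tilde P\in\mathcal M_S(H)$ when $P\in\mathcal M_S(\epsilon,H')$, feed the counts into a Multinomial estimator, and rescale by $d_1$. The one structural difference is cosmetic: the paper does not subsample but conditions on $n'=\sum_i Z_i=m$ (under which $\mathbf Z\sim\mathsf{Multinomial}(m,\tilde P)$), sums $R(S,m,\cdot)\mathbb P(n'=m)$ over $m$, and invokes monotonicity of $R$ in the sample size together with the Poisson lower-tail bound to recover $R(S,d_1n/2,\cdot)$ plus the $e^{-d_1n/8}$ term; your hypergeometric subsampling on $\{N\ge n\}$ accomplishes the same thing.

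The place you flagged as ``the technically delicate piece'' is exactly right, and in fact your objection applies to the paper's own displayed identity: with $\hat H_P=d_1(\hat H(\mathbf Z)-\ln d_1)$ one has
\begin{align}
\hat H_P-H(P)=d_1\bigl(\hat H(\mathbf Z)-H(\tilde P)\bigr)+(d_1-s)H(\tilde P)+\bigl(s\ln s-d_1\ln d_1\bigr),
\end{align}
whereas the paper writes this decomposition \emph{without} the cross term $(d_1-s)H(\tilde P)$. That term is genuinely present and, bounded naively, contributes $\epsilon^2(\ln S)^2$ (or $\epsilon^2H^2$ after using $H(\tilde P)\le H$) rather than only the $\epsilon^2\sup_x\ln^2(ex)$ appearing in the lemma's remainder, so your suspicion that the remainder as stated is too small for the deterministic-$d_1$ estimator is well founded. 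However, the fix you propose does not close the gap in the form claimed: replacing $d_1$ by $\hat s:=Nd_1/(2n)$ and outputting $\hat s\,\hat\eta-\hat s\ln\hat s$ produces a term controlled by $\mathsf{Var}(\hat s)\cdot\sup|H(\tilde P)|^2\asymp d_1^2H^2/n$, which is polynomial in $n$, not $O(e^{-n/4})$; Poisson concentration gives you exponential control of $\mathbb P(|\hat s-s|>\delta)$ for fixed $\delta$, not of the squared deviation $\mathbb E(\hat s-s)^2$, so ``absorbed into the $(\ln S)^2 e^{-n/4}$ remainder'' does not go through. To match the lemma exactly one would need either a different decomposition (e.g.\ bounding the cross term by $\epsilon^2 H^2/d_1^2$ and enlarging the stated remainder accordingly, which is harmless in the downstream application since $\epsilon=1/\ln n$) or a sharper argument than either you or the paper supply.
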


In light of Lemma \ref{lem_equiv}, it suffices to consider $R_P(S,n,H,\epsilon)$ to give a lower bound of $R(S,n,H)$. Denote
\begin{align}
  \chi&\triangleq\mathbb{E}_{\mu_1^{S}}H(P) - \mathbb{E}_{\mu_0^{S}}H(P)\\
   &= 2\eta M E_L[-\ln x]_{[\eta,1]}\cdot S\\
   &=\frac{2d_1H}{\ln S}\cdot E_L[-\ln x]_{[\eta,1]}\\
  &\gtrsim \frac{H}{\ln S}\ln\left(\frac{S\ln S}{nH\ln n}\right),
\end{align}
and
\begin{align}
  E_i \triangleq \mathcal{M}_{S}(\epsilon, H)\bigcap \left\{P: |H(P)-\mathbb{E}_{\mu_i^{S}}H(P)|\le\frac{\chi}{4}\right\},\qquad i=0,1.
\end{align}

Denote by $\pi_i$ the conditional distribution defined as
\begin{align}
  \pi_i(A) = \frac{\mu_i^S(E_i\cap A)}{\mu_i^S(E_i)}, \qquad i=0,1.
\end{align}
Now consider $\pi_0,\pi_1$ as two priors. By setting
\begin{align}
  \zeta = \mathbb{E}_{\mu_0^{S}}H(P) + \frac{\chi}{2}, \quad s = \frac{\chi}{4}, \quad \epsilon = \frac{1}{\ln n},
\end{align}
we have $\beta_0=\beta_1=0$ in Lemma \ref{lemma.tsybakov}. Applying union bound yields that
\begin{align}\label{eq:tail_prob}
  \mu_i^{S}[(E_i)^c]&\le \mu_i^{S}\left[\left|\sum_{j=1}^{S} p_j-d_1\right|> \epsilon\right] \\
  &\qquad + \mu_i^{S}\left[|H(P)-\mathbb{E}_{\mu_i^{S}}H(P)|>\frac{\chi}{4}\right] + \mu_i^{S}[H(P)>H]
\end{align}
and the Chebychev inequality tells us that
\begin{align}
  \mu_i^{S}\left[\left|\sum_{j=1}^{S} p_j-d_1\right|> \epsilon\right] &\le \frac{1}{\epsilon^2}\sum_{j=1}^S \mathsf{Var}_{\mu_i^S} (p_j)\\
  &\le \frac{SM^2}{\epsilon^2} \asymp \frac{S(\ln n)^4}{n^2} \\
  &\lesssim \frac{(\ln n)^4}{n^{\frac{1}{2}}} \to 0\label{eq:chebychev_1}\\
  \mu_i^{S}\left[|H(P)-\mathbb{E}_{\mu_i^{S}}H(P)|>\frac{\chi}{4}\right] &\le \frac{16}{\chi^2}\sum_{j=1}^S \mathsf{Var}_{\mu_i^S} (-p_j\ln p_j)\\
 &\le \frac{16S(M\ln M)^2}{\chi^2} \lesssim \frac{S(\ln S)^2(\ln n)^4}{n^2}\\
 & \lesssim \frac{(\ln n)^6}{n^{\frac{1}{2}}} \to 0 \label{eq:chebychev_2}
\end{align}
where we have used our assumption that $S\lesssim n^{\frac{3}{2}}$. For bounding $\mu_i^{S}[H(P)>H]$, we first remark that for $d_1\le e^{-1}$,
\begin{align}
  \bE_{\mu_i^S} H(P) &\le -d_1\ln d_1 + (S-1)\int -t\ln t\mu_i(dt)\\
  &\le -d_1\ln d_1 - S\ln(\eta M)\int t\mu_i(dt)\\
  &= -d_1\ln d_1 + \frac{d_1H}{\ln S}\ln\left(\frac{S\ln S}{d_1H}\right)\\
  &= -d_1\ln d_1 + d_1H - \frac{d_1H}{\ln S}\ln \left(\frac{d_1H}{\ln S}\right)\\
  &\le d_1H - 2d_1\ln d_1
\end{align}
hence, for $d_1$ sufficiently small, say, $d_1\le \min\{\frac{1}{4},f^{-1}\left(\min\{\frac{H_0}{8},\frac{1}{e}\}\right)\}$, where $f(x)=-x\ln x$ is defined in $[0,e^{-1}]$ and $f^{-1}(\cdot)$ denotes the inverse function of $f(\cdot)$, we have
\begin{align}
    \bE_{\mu_i^S} H(P)&\le d_1H - 2d_1\ln d_1\\
    &\le \frac{H}{4} + 2\cdot \min\left\{\frac{H_0}{8},\frac{1}{e}\right\}\\
    & \le \frac{H_0+H}{4}\le \frac{H}{2}.
\end{align}
Hence, similar to (\ref{eq:chebychev_2}), we have
\begin{align}\label{eq:chebychev_3}
  \mu_i^S[H(P)> H]&\le \mu_i^{S}\left[|H(P)-\mathbb{E}_{\mu_i^{S}}H(P)|>\frac{H}{2}\right]\\
 &\le \frac{S(M\ln M)^2}{(H/2)^2}\lesssim \frac{S(\ln n)^4}{n^2}\\
 & \lesssim \frac{(\ln n)^4}{n^{\frac{1}{2}}} \to 0.
\end{align}

Denote by $F_i,G_i$ the marginal probability under prior $\pi_i$ and $\mu_i^S$, respectively, for all $i=0,1$. In light of (\ref{eq:tail_prob}), (\ref{eq:chebychev_1}), (\ref{eq:chebychev_2}) and (\ref{eq:chebychev_3}), we have
\begin{align}
  V(F_i, G_i) \le \mu_i^{S}[(E_i)^c] \to 0.
\end{align}
Moreover, by setting
\begin{align}
  d_1=\min\left\{\frac{1}{4},f^{-1}\left(\min\left\{\frac{H_0}{8},\frac{1}{e}\right\}\right), \frac{1}{d_2^2D_0}\right\}, \quad d_2=10e
\end{align}
it was shown in \cite[Lem. 11]{jiao2014minimax} that
\begin{align}
   V(G_0,G_1) \le \frac{S}{n^6} \lesssim \frac{1}{n^\frac{9}{2}} \to 0.
\end{align}

Hence, the total variational distance is then upper bounded by
\begin{align}
  V(F_0,F_1) &\le V(F_0,G_0) + V(G_0,G_1) + V(G_1,F_1) \to 0
\end{align}
where we have used the triangle inequality of the total variation distance. The idea of converting approximate priors $\mu_i^S$ into priors $\pi_i$ via conditioning comes from Wu and Yang~\cite{Wu--Yang2014minimax}.

Now it follows from Lemma \ref{lemma.tsybakov} and Markov's inequality that
\begin{align}
  R_P(S,n,H,\epsilon) &\ge s^2\inf_{\hat{H}}\sup_{P\in\mathcal{M}_S(\epsilon,H)}\mathbb{P}\left(|\hat{H}-H(P)|\ge s\right)\\
   &\gtrsim \chi^2 \gtrsim \left[ \frac{H}{\ln S}\ln\left(\frac{S\ln S}{nH\ln n}\right)\right]^2
\end{align}
and the desired result follows directly from Lemma \ref{lem_equiv}. Hence we have obtained the desired lower bound in the case $S\lesssim n^{\frac{3}{2}}$.

For $S\gtrsim n^{\frac{3}{2}}$, we can change the parameters in (\ref{eq:lower_minimax_para1}) into
\begin{align}
  L = d_2\ln S,  \quad\eta = \frac{H}{d_2^2D_0S^{\beta}(\ln S)^2}, \quad M = \frac{H}{\ln S}\cdot \frac{d_1}{S\eta} = \frac{d_1d_2^2D_0\ln S}{S^{1-\beta}}
\end{align}
for some small $\beta>0$, say, $\beta=1/4$. Applying the similar analysis yields
\begin{align}
  R(n,S,H) &\gtrsim \left[\frac{H}{\ln S}\ln\left(\frac{S^\beta}{H}\right)\right]^2 \gtrsim H^2\\
  &\asymp \left[\frac{H}{\ln S}\ln\left(\frac{S\ln S}{nH\ln n}\right)\right]^2
\end{align}
which is exactly the desired result.

\section{Future work}

This paper studies the adaptive estimation framework to strengthen the optimality properties of the approximation theoretic entropy estimator proposed in Jiao et al.~\cite{jiao2014minimax}. We remark that the techniques in this paper are by no means constrained to entropy, and we believe analogous results are also true for the estimators of $F_\alpha(P) = \sum_{i = 1}^S p_i^\alpha$ in~\cite{jiao2014minimax}. Furthermore, we find the fact that the sample size enlargement effect still holds in the adaptive estimation setting very intriguing, and we believe there is a larger picture surrounding this theme to be explored.

\section{Acknowledgments}

The authors would like to express their most sincere gratitude to Dany Leviatan for valuable advice on the literature of approximation theory, in particular, for suggesting the result in Lemma~\ref{lem_polynorm}.

\appendix
\section{Auxiliary Lemmas}
The following lemma characterizes the performance of the best uniform approximation polynomial for $-x\ln x, x\in[0,1]$.
\begin{lemma}\label{lem_ibragimov}
  Denote by $\sum_{k=0}^K g_{K,k}x^k$ the $K$-th order best uniform approximation polynomial for $-x\ln x,x\in[0,1]$, then for $p_K(x)=\sum_{k=1}^K g_{K,k}x^k$, we have the norm bound
  \begin{align}\label{eq:bound_norm}
  \sup_{x\in[0,1]} |p_K(x)-(-x\ln x)| \le \frac{D_n}{K^2}
  \end{align}
  where $D_n>0$ is a universal constant for the norm bound. In fact, the following inequality holds:
  \begin{align}
    \limsup_{K\to\infty} K^2\cdot \sup_{x\in[0,1]} |p_K(x)-(-x\ln x)| \le \nu_1(2) \approx 0.453,
  \end{align}
  where the function $\nu_1(p)$ is was introduced by Ibragimov \cite{Ibragimov1946} as the following limit for $p$ positive even integer and $m$ positive integer
  \begin{equation}
\lim_{n\to \infty} \frac{n^p}{(\ln n)^{m-1}} E_n[|x|^p \ln^m |x|]_{[-1,1]} = \nu_1(p).
\end{equation}

Furthermore, we also have the pointwise bound: there exists a universal constant $D_p>0$ such that for any $C\ge1$,
\begin{align}
  |p_K(x) - 2\ln K\cdot x| \le D_pCx, \qquad \forall x\in\left[0,\frac{C}{K^2}\right].
\end{align}
\end{lemma}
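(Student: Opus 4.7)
My plan is to dispatch the three claims in sequence, using Ibragimov's theorem to control the uniform norm bound and its asymptotic constant, and a separate endpoint-derivative analysis at $x=0$ to establish the pointwise estimate.

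For the uniform bound and its sharpness, the plan is to reduce the approximation of $-x\ln x$ on $[0,1]$ to an even-function approximation on $[-1,1]$ via the substitution $x=y^2$. Since $-y^2\ln y^2=-2y^2\ln|y|$ is even in $y$, any best polynomial approximation of degree $2K$ on $[-1,1]$ can be taken even by symmetrization, hence of the form $\sum_j a_j y^{2j}$, which under $x=y^2$ yields a polynomial of degree $K$ in $x$ with the same sup-norm error. This gives
\begin{equation}
E_K[-x\ln x]_{[0,1]}=2\,E_{2K}[\,|y|^2\ln|y|\,]_{[-1,1]},
\end{equation}
so Ibragimov's theorem with $p=2,m=1$ yields $K^2\,E_K[-x\ln x]_{[0,1]}\to\nu_1(2)/2$. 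The polynomial $p_K$ differs from the true best polynomial $P_K(x)=\sum_{k=0}^K g_{K,k}x^k$ only by the removed constant $g_{K,0}$, and since $P_K(0)=g_{K,0}$ while $-x\ln x$ vanishes at $0$, the uniform error bound forces $|g_{K,0}|\le E_K[-x\ln x]_{[0,1]}$. The triangle inequality then gives $\sup_{[0,1]}|p_K(x)+x\ln x|\le 2\,E_K[-x\ln x]_{[0,1]}$, which delivers both the finite constant $D_n$ and the asymptotic constant $\nu_1(2)$ in the refined bound.

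For the pointwise estimate, the plan is first to establish $g_{K,1}=p_K'(0)=2\ln K+O(1)$ and then to bound the quadratic remainder on $[0,C/K^2]$. I would expand $-x\ln x=\sum_k c_k T_k^*(x)$ in shifted Chebyshev polynomials $T_k^*(x)=T_k(2x-1)$; transplanting via $x=(1+\cos\theta)/2$ turns the coefficients into the Fourier cosine coefficients of $g(\theta)\triangleq -2\cos^2(\theta/2)\ln|\cos(\theta/2)|$, whose only singularity is at $\theta=\pi$ and is of type $(\pi-\theta)^2\ln(\pi-\theta)$. Two integrations by parts (both boundary terms vanish because $g$ and $g'$ vanish at $\theta=0,\pi$) reduce the integral to that of a function with a logarithmic singularity, from which the Dirichlet-integral calculation extracts the leading asymptotic $c_k=(-1)^{k+1}/k^3+O(k^{-4})$. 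Combined with $(T_k^*)'(0)=2(-1)^{k+1}k^2$, the truncated Chebyshev series $\tilde{P}_K\triangleq\sum_{k=0}^K c_k T_k^*$ satisfies $\tilde{P}_K'(0)=2H_K+O(1)=2\ln K+O(1)$, where $H_K$ is the $K$th harmonic number. Switching from the Chebyshev projection to a near-best approximation with bounded Lebesgue constant (e.g.\ the de la Vall\'ee Poussin sum $V_K$ satisfying $\|V_Kf-f\|_\infty\le(1+4/\pi)E_K[f]$) makes $\|V_Kf-P_K\|_\infty=O(K^{-2})$, and V.~A.~Markov's endpoint inequality $|q'(0)|\le 2K^2\|q\|_{[0,1]}$ transfers the derivative bound to the best polynomial: $g_{K,1}=P_K'(0)=2\ln K+O(1)$.

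Writing $p_K(x)-2\ln K\cdot x=(g_{K,1}-2\ln K)x+\int_0^x(p_K'(t)-g_{K,1})\,dt$, the first term is $O(x)\le D_pCx$. For the integral, I would bound $|P_K''(t)|\le D''K^2$ uniformly on $[0,1]$: termwise on $V_Kf$ via $|(T_k^*)''|\le 4k^2(k^2-1)/3$ and $|c_k|\le O(k^{-3})$ giving $\|(V_Kf)''\|_\infty\le\frac{4}{3}\sum_{k=1}^{2K}|c_k|k^4=O(K^2)$, with the remainder $(P_K-V_Kf)''$ handled by Markov applied to an $O(K^{-2})$ sup-norm bound, again $O(K^2)$. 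Then $|p_K'(t)-g_{K,1}|\le\int_0^t|P_K''(s)|\,ds\le D''K^2t$, and integrating once more
\begin{equation}
\Bigl|\int_0^x(p_K'(t)-g_{K,1})\,dt\Bigr|\le \frac{D''K^2x^2}{2}\le\frac{D''Cx}{2},\qquad x\le C/K^2,
\end{equation}
since $K^2x\le C$. Summing the contributions yields $|p_K(x)-2\ln K\cdot x|\le D_p Cx$ with universal $D_p$. The main obstacle I anticipate is step one: producing the universal $O(1)$ constant in $g_{K,1}=2\ln K+O(1)$ without an inflating logarithmic factor from the Chebyshev projection's Lebesgue constant, which is exactly why the detour through the de la Vall\'ee Poussin near-best approximation is needed, and verifying the clean $(-1)^{k+1}/k^3$ asymptotic for the Chebyshev coefficients of $-x\ln x$ uniformly in $k$.
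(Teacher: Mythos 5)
Your treatment of the norm bound follows essentially the same path as the paper: both invoke Ibragimov's constant $\nu_1(2)$ and then kill the constant term $g_{K,0}$ via the triangle inequality (using $|g_{K,0}|=|P_K(0)|\le E_K$). The only cosmetic difference is that you derive the identity $E_K[-x\ln x]_{[0,1]}=2E_{2K}[|y|^2\ln|y|]_{[-1,1]}$ explicitly through the substitution $x=y^2$, while the paper quotes this limit from \cite[Lemma~18]{jiao2014minimax}.

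For the pointwise bound, however, you take a genuinely different route. Both arguments hinge on the same key intermediate estimate $\|p_K''\|_{[0,1]}=O(K^2)$, but they obtain it, and then exploit it, in quite different ways. The paper bounds $p_K''$ by combining Ditzian--Totik's weighted inverse inequality $\sup_x|\varphi(x)^2 p_K''(x)|\preceq K^2\omega_\varphi^2(-x\ln x,K^{-1})$ with an explicit computation of the modulus of smoothness $\omega_\varphi^2(-x\ln x,t)=\tfrac{2t^2\ln 2}{1+t^2}$, and then strips the weight via the Remez--Nikolskii inequality of Lemma~\ref{lem_polynorm}; crucially, it then extracts the coefficient $2\ln K$ by the \emph{averaging identity} $\tfrac{K^2}{C}\int_0^{C/K^2}p_K'(t)\,dt=\tfrac{K^2}{C}p_K(C/K^2)\approx 2\ln K-\ln C$, so the logarithm comes out of the \emph{function value} $p_K(C/K^2)$ and no knowledge of $g_{K,1}$ is ever needed. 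You instead compute $g_{K,1}=p_K'(0)=2\ln K+O(1)$ directly from shifted-Chebyshev coefficient asymptotics, passing through a de la Vall\'ee Poussin sum and V.~A.~Markov to transfer the $O(1)$ bound to the true extremal polynomial. Your anticipated obstacle (pinning down the constant in $c_k\sim (-1)^{k+1}/k^3$) is in fact not an obstacle at all: using the classical expansion $\ln(2\cos(\theta/2))=\sum_{k\ge1}(-1)^{k+1}\cos(k\theta)/k$ together with $\cos^2(\theta/2)=\tfrac{1+\cos\theta}{2}$ one gets the closed form $c_k=\tfrac{(-1)^{k+1}}{k(k^2-1)}$ for $k\ge 2$, from which $\sum_{k=2}^K c_k\cdot 2(-1)^{k+1}k^2=\sum_{k=2}^K\tfrac{2k}{k^2-1}=2H_K+O(1)$ and, incidentally, $\|f-\tilde P_K\|_\infty\le\sum_{k>K}|c_k|=O(K^{-2})$ without any Lebesgue-constant detour. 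So your argument closes, and it has the virtue of making the provenance of $2\ln K$ (a harmonic sum) transparent; the paper's argument buys you a shorter proof that needs no coefficient asymptotics at all and generalizes more readily to other functionals, since everything is packaged in the Ditzian--Totik modulus and the Remez-type inequality.
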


\begin{lemma}\label{lem_polynorm}
  \cite[Thm. 8.4.8]{Ditzian--Totik1987} There exists some universal constant $M>0$ such that for any order-$n$ polynomial $p(x)$ in $[0,1]$, we have
  \begin{align}
    \sup_{x\in[0,1]} |p(x)| \le M\cdot\sup_{x\in[n^{-2},1-n^{-2}]} |p(x)|.
  \end{align}
\end{lemma}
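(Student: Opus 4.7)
The plan is to reduce the statement to Chebyshev's extremal inequality via an affine rescaling. First I would normalize so that $A:=\sup_{x\in[n^{-2},1-n^{-2}]}|p(x)|=1$, and apply the affine bijection $\phi(x) = (2x-1)/(1-2n^{-2})$, which sends the sub-interval $[n^{-2},1-n^{-2}]$ onto $[-1,1]$ and the full interval $[0,1]$ onto $[-y_n,y_n]$ with $y_n := 1/(1-2n^{-2})$. The pullback $\tilde p(y) := p(\phi^{-1}(y))$ is again a polynomial of degree at most $n$ with $\|\tilde p\|_{[-1,1]}\le 1$, and bounding $\sup_{[0,1]}|p|$ is equivalent to bounding $\sup_{|y|\le y_n}|\tilde p(y)|$.

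The main step is the classical Chebyshev extremal inequality: if $q$ is a polynomial of degree at most $n$ with $\|q\|_{[-1,1]}\le 1$, then $|q(y)| \le T_n(|y|)$ for every $|y|\ge 1$, where $T_n$ is the Chebyshev polynomial of the first kind. This can be established by a standard equioscillation/sign-count argument: if $|q(y_0)| > T_n(y_0)$ for some $y_0>1$, then the function $T_n - (T_n(y_0)/q(y_0))\,q$ would have at least $n+1$ sign changes in $[-1,y_0]$ (using that $T_n$ equioscillates $n+1$ times on $[-1,1]$), contradicting the degree bound. Applying the inequality to $\tilde p$ gives $\sup_{|y|\le y_n}|\tilde p(y)| \le T_n(y_n)$.

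It then remains to show that $T_n(y_n)$ is bounded by an absolute constant uniformly in $n$. Using the identity $T_n(\cosh t)=\cosh(nt)$, set $t_n := \operatorname{arccosh}(y_n)$. From $\cosh t_n - 1 = 2n^{-2}/(1-2n^{-2})$ and the Taylor expansion $\cosh t - 1 = t^2/2 + O(t^4)$, one obtains $t_n = 2/n + O(n^{-3})$, so $nt_n \to 2$ and therefore $T_n(y_n) = \cosh(nt_n) \le \cosh(3)$ for all $n\ge n_0$. The finitely many small cases $n<n_0$ are absorbed into the constant, yielding a universal $M$ (for instance $M=\cosh(3)$).

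The only delicate point I anticipate is making sure the affine rescaling does not inflate the degree (it clearly does not, being linear in $x$) and that the Chebyshev comparison is invoked on the correctly rescaled interval $[-1,1]$. Conceptually, the threshold $n^{-2}$ is exactly the right one because the outermost Chebyshev node lies at distance $\asymp n^{-2}$ from the endpoint, so removing intervals of this length is precisely the boundary at which the maximum over $[0,1]$ ceases to be controllable by the maximum over the interior; shrinking the removed strip by even a factor growing with $n$ would allow $T_n$ itself to defeat the inequality.
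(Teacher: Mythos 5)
The paper does not actually prove this lemma; it is cited directly to Ditzian and Totik \cite[Thm.~8.4.8]{Ditzian--Totik1987} with no in-paper derivation, so there is no official argument to compare against. That said, your self-contained proof via Chebyshev polynomials is correct and is the standard elementary route to this restricted-range (Remez-type) inequality. The affine normalization $\phi(x)=(2x-1)/(1-2n^{-2})$ mapping $[n^{-2},1-n^{-2}]$ onto $[-1,1]$ and $[0,1]$ onto $[-y_n,y_n]$ with $y_n=(1-2n^{-2})^{-1}$ is exactly right; the comparison $|q(y)|\le T_n(|y|)$ for $|y|\ge1$ when $\|q\|_{[-1,1]}\le1$ is the classical extremal property of Chebyshev polynomials; and the asymptotics $y_n-1\asymp n^{-2}$, $t_n=\mathrm{arccosh}(y_n)\asymp n^{-1}$, $T_n(y_n)=\cosh(nt_n)=O(1)$ are precisely why the $n^{-2}$ strip-width is the correct sharp scale, as you note.

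Two small polish points. First, in the extremal step it is cleaner to count zeros rather than sign changes: the alternation of $T_n-\frac{T_n(y_0)}{q(y_0)}q$ at the $n+1$ equioscillation nodes yields $n$ zeros in $(-1,1)$, and the zero at $y_0>1$ (which need not be a sign change) gives $n+1$ distinct zeros for a nonzero polynomial of degree at most $n$, a contradiction. Second, you can avoid the two-regime argument (``for $n\ge n_0$ plus finitely many small cases'') by noting that for $n\ge2$ one has $1-2n^{-2}\ge\frac12$, hence
\begin{align}
\cosh t_n-1=\frac{2n^{-2}}{1-2n^{-2}}\le\frac{4}{n^2},
\end{align}
and since $\cosh t-1\ge t^2/2$ this gives $t_n\le 2\sqrt2/n$, so $T_n(y_n)=\cosh(nt_n)\le\cosh(2\sqrt2)$ uniformly in $n\ge2$; for $n\le1$ the interval $[n^{-2},1-n^{-2}]$ is empty and the statement is vacuous. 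Thus $M=\cosh(2\sqrt2)$ (or the sharper $M=T_2(2)=7$) is an explicit universal constant.
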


The following lemma gives some tails bounds for Poisson and Binomial random variables.
\begin{lemma}\cite[Exercise 4.7]{mitzenmacher2005probability}\label{lemma.poissontail}
If $X\sim \mathsf{Poi}(\lambda)$, or $X\sim \mathsf{B}(n,\frac{\lambda}{n})$, then for any $\delta>0$, we have
\begin{align}
\bP(X \geq (1+\delta) \lambda) & \leq \left( \frac{e^\delta}{(1+\delta)^{1+\delta}} \right)^\lambda \\
\bP(X \leq (1-\delta)\lambda) & \leq  \left( \frac{e^{-\delta}}{(1-\delta)^{1-\delta}} \right)^\lambda \leq e^{-\delta^2 \lambda/2}.
\end{align}
\end{lemma}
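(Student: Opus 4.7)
The plan is to use the standard Chernoff bounding technique, since Lemma \ref{lemma.poissontail} is the classical Chernoff tail bound for Poisson and Binomial random variables. First I will handle the upper tail. By Markov's inequality applied to the function $e^{tX}$ with $t > 0$,
\begin{align}
  \bP(X \geq (1+\delta)\lambda) \leq e^{-t(1+\delta)\lambda}\, \bE[e^{tX}].
\end{align}
For $X \sim \mathsf{Poi}(\lambda)$ the moment generating function is $\bE[e^{tX}] = e^{\lambda(e^t - 1)}$, and for $X \sim \mathsf{B}(n, \lambda/n)$ we have $\bE[e^{tX}] = (1 - \lambda/n + (\lambda/n) e^t)^n$. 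Using the elementary inequality $1 + x \leq e^x$ with $x = (\lambda/n)(e^t - 1)$, the binomial MGF is dominated by the Poisson MGF $e^{\lambda(e^t-1)}$. Hence a single calculation handles both cases.

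Next I will optimize over $t > 0$. The exponent $\lambda(e^t - 1) - t(1+\delta)\lambda$ is minimized at $t^{\ast} = \ln(1+\delta) > 0$, and substituting back gives
\begin{align}
  \bP(X \geq (1+\delta)\lambda) \leq \exp\bigl(\lambda(e^{t^{\ast}} - 1 - t^{\ast}(1+\delta))\bigr) = \left(\frac{e^\delta}{(1+\delta)^{1+\delta}}\right)^\lambda,
\end{align}
which is the first inequality. For the lower tail I will repeat the argument with $t < 0$: Markov gives $\bP(X \leq (1-\delta)\lambda) = \bP(e^{tX} \geq e^{t(1-\delta)\lambda}) \leq e^{-t(1-\delta)\lambda}\bE[e^{tX}]$, the same MGF bound applies, and the optimizer is $t^{\ast} = \ln(1-\delta) < 0$, yielding $\bigl(e^{-\delta}/(1-\delta)^{1-\delta}\bigr)^\lambda$.

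Finally, to upgrade the lower-tail bound to $e^{-\delta^2 \lambda/2}$, it suffices to prove the scalar inequality $(1-\delta)\ln(1-\delta) + \delta \geq \delta^2/2$ for $\delta \in (0,1)$. Let $f(\delta) = (1-\delta)\ln(1-\delta) + \delta - \delta^2/2$; then $f(0) = 0$ and $f'(\delta) = -\ln(1-\delta) - \delta = \sum_{k \geq 2} \delta^k / k \geq 0$ by the Taylor series of $-\ln(1-\delta)$. Hence $f(\delta) \geq 0$, which gives the required bound after multiplication by $\lambda$ and exponentiation.

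There is no real obstacle here; the only noteworthy point is the reduction from Binomial to Poisson via $1+x \leq e^x$, which lets the same Chernoff calculation handle both distributions in one stroke. The rest is routine optimization and a one-line Taylor argument.
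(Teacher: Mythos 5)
Your proof is correct, and since the paper does not prove this lemma itself but simply cites it to a standard textbook exercise, the Chernoff argument you supply---Markov on $e^{tX}$, the MGF domination $1+p(e^t-1)\leq e^{p(e^t-1)}$ that subsumes the Binomial case under the Poisson one, optimization at $t^\ast=\ln(1\pm\delta)$, and the Taylor-series verification of $(1-\delta)\ln(1-\delta)+\delta\geq\delta^2/2$---is exactly the intended derivation. No gaps.
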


\section{Proof of Lemmas}
\subsection{Proof of Lemma \ref{lem_varentropy}}
For $S=1,2$, the result is obvious, and we assume in the sequel that $S\ge 3$. Denote $H(P)=\sum_{i=1}^S -p_i\ln p_i$ by $H$, we construct the Lagrangian:
\begin{align}
  \mathcal{L} = \sum_{i=1}^S p_i(\ln p_i)^2 + \lambda \left(\sum_{i=1}^S -p_i\ln p_i - H\right) + \mu \left(\sum_{i=1}^S p_i-1\right).
\end{align}
By taking the derivative with respect to $p_i$, we obtain that
\begin{align}
  \frac{\partial \mathcal{L}}{\partial p_i} = (\ln p_i)^2 + 2 \ln p_i - \lambda (1+\ln p_i) + \mu
\end{align}
is a quadratic form of $\ln p_i$, so the equation $\frac{\partial \mathcal{L}}{\partial p_i}=0$ has at most two solutions.

Hence, we conclude that components of the maximum achieving distribution can only take two values $p_i\in\{q_1,q_2\}$, and suppose $q_1$ appears $m$ times. We distinguish the analysis into two cases.
\subsubsection{Case I}
If $\min\{q_1,q_2\}\ge \frac{1}{S^2}$, we have $-\ln p_i\le 2\ln S$ for all $i$. Hence,
\begin{align}
  \sum_{i=1}^S p_i(\ln p_i)^2 \le 2\ln S\cdot \sum_{i=1}^S -p_i\ln p_i = 2H\ln S.
\end{align}
\subsubsection{Case II}
If $q_1$ or $q_2$ is smaller than $\frac{1}{S^2}$, without loss of generality we can assume that $q_1\le q_2$ and $q_1< \frac{1}{S^2}$. Then
\begin{align}
  \sum_{i=1}^S p_i(\ln p_i)^2 &= \left(\sum_{i=1}^S -p_i\ln p_i\right)^2 + \sum_{1\le i<j\le S} p_ip_j(\ln p_i-\ln p_j)^2\\
  &= H^2 + m(S-m)q_1q_2(\ln q_1-\ln q_2)^2\\
  &\le H^2 + m(S-m)q_1q_2 (\ln q_1)^2\\
  &\le H^2 + Sq_1 (\ln q_1)^2 \label{eq:appen_basic}\\
  &\le H^2 + S\cdot\frac{1}{S^2}\left(\ln \frac{1}{S^2}\right)^2 \label{eq:appen_monotone}\\
  &= H^2 + \frac{4(\ln S)^2}{S} \\
  &\le H\ln S + 3
\end{align}
where we have used the inequalities $m\le S, (S-m)q_2\le 1$ in (\ref{eq:appen_basic}), and the monotonically increasing property of $x(\ln x)^2$ for $x\in[0, e^{-2}]$ in (\ref{eq:appen_monotone}). The last inequality follows from $H\le \ln S$ and $(\ln S)^2/S \le 4/e^2 < 3/4$.

\subsection{Proof of Lemma \ref{lem_bias}}
  Define $f(x)=-x\ln x$ on $[0,1]$, and the order-$n$ Bernstein polynomial
  \begin{align}
    B_n(f,x) = \sum_{j=0}^n f\left(\frac{j}{n}\right)\cdot \binom{n}{j}x^j(1-x)^{n-j}.
  \end{align}

  It is obvious that $\bE f(\hat{p})=B_n(f,p)$ for $n\hat{p}\sim \mathsf{B}(n,p)$. For $n\ge k$, it follows from \cite[Eqn. (2.3), Chpt. 10]{Devore--Lorentz1993} that
  \begin{align}\label{eq:bernstein_derivative}
    B_n^{(k)}(f,x) = \frac{n!}{(n-k)!}\sum_{j=0}^{n-k}\Delta^k f\left(\frac{j}{n}\right)\cdot \binom{n-k}{j}x^j(1-x)^{n-k-j},
  \end{align}
  where $\Delta^k f(x)$ is the order-$k$ forward difference of $f(\cdot)$ at $x$ with step size $h=1/n$:
  \begin{align}
    \Delta^k f(x) = \sum_{j=0}^k (-1)^{k-j}\binom{k}{j}f\left(x+\frac{j}{n}\right).
  \end{align}

  Since $f^{(3)}(\cdot)\ge 0$, the mean value theorem for the forward difference shows that $\Delta^3 f(\cdot)\ge 0$, and $B_n^{(3)}(f,x)\ge 0$ by (\ref{eq:bernstein_derivative}). Hence $B_n^{(2)}(f,x)$ is monotonically non-decreasing with respect to $x\in[0,1]$, which yields that for $0\le x\le 1/n$,
  \begin{align}\label{eq:bernstein_bound}
    -2\ln2\cdot (n-1) &= B_n^{(2)}(f,0) \le B_n^{(2)}(f,x) \le B_n^{(2)}\left(f,\frac{1}{n}\right)\\
    &\le n(n-1)\cdot \Delta^2 f(0)\left(1-\frac{1}{n}\right)^{n-2} \\
    &\le -\frac{2\ln 2(n-1)}{e}.
  \end{align}

  The proof is completed by applying (\ref{eq:bernstein_bound}) and the Taylor's formula
  \begin{align}
    \bE f(\hat{p}) &= B_n(f,p) = B_n(f,0) + B_n^{(1)}(f,0)p + \frac{1}{2}B_n^{(2)}(f,\xi)p^2\\
    &\in \left[p\ln n- \ln 2(n-1)p^2, p\ln n- \frac{\ln 2}{e}(n-1)p^2\right]
  \end{align}
  for some $\xi\in [0,n^{-1}]$.

\subsection{Proof of Lemma \ref{lem_entropyfunction}}
For $0<x_1\le x_2<1/e$, we write $-y_1\ln y_1=x_1, -y_2\ln y_2=x_2$. It follows from $x_1\le x_2$ that $y_1\le y_2$. By the mean value theorem, 
\begin{align}
  x_2 - x_1 = f(y_2) - f(y_1) = (y_2 - y_1) f'(\xi)
\end{align}
for some $\xi\in[y_1,y_2]$, thus 
\begin{align}\label{eq:meanvalue}
  f^{-1}(x_2) - f^{-1}(x_1) = y_2 - y_1 = \frac{x_2 - x_1}{f'(\xi)} \le \frac{x_2-x_1}{f'(y_2)} = \frac{x_2-x_1}{-\ln y_2-1}.
\end{align}

Now we give an upper bound for $y_2$ in terms of $x_2$. Since $y_2<1/e$, we have
\begin{align}
  y_2 = \frac{x_2}{-\ln y_2} \le \frac{x_2}{-\ln (1/e)} = x_2.
\end{align}
Substituting this result in $-y_2\ln y_2=x_2$ once more yields a refined inequality
\begin{align}\label{eq:upperbound_f}
  y_2 = \frac{x_2}{-\ln y_2} \le \frac{x_2}{-\ln x_2}.
\end{align}

Combining (\ref{eq:meanvalue}) and (\ref{eq:upperbound_f}) gives
\begin{align}\label{eq:f_difference}
  f^{-1}(x_2) - f^{-1}(x_1) \le \frac{x_2-x_1}{-\ln (-x_2/\ln(x_2) )-1}
\end{align}
which completes the proof of the first inequality. For the second equality, we write
\begin{align}
  z = f\left(\frac{1}{y\ln(y\ln y)}\right) = \frac{\ln(y\ln(y\ln y))}{y\ln(y\ln y)}.
\end{align}

It is clear that $z\ge y^{-1}$, thus 
\begin{align}
  f^{-1}(y^{-1})\le f^{-1}(z) = \frac{1}{y\ln(y\ln y)}.
\end{align}
On the other hand, (\ref{eq:f_difference}) asserts that
\begin{align}
  f^{-1}(z) -f^{-1}(y^{-1}) &\le \frac{z-y^{-1}}{-\ln (-z/\ln z)-1}\\
  &\le \frac{1}{\ln(1/z)-1}\cdot \frac{1}{y\ln(y\ln y)}\ln\left(1+\frac{\ln\ln y}{\ln y}\right)\\
  &\le \frac{1}{\ln(1/z)-1}\cdot \frac{\ln\ln y}{y(\ln y)^2}\\
  &= O\left(\frac{\ln\ln y}{y(\ln y)^3}\right).
\end{align}

Combing these two inequalities we have
\begin{align}
  \left|\frac{1}{y\ln(y\ln y)} -f^{-1}(y^{-1})\right| = |f^{-1}(z) -f^{-1}(y^{-1})| = O\left(\frac{\ln\ln y}{y(\ln y)^3}\right)
\end{align}
as desired.

\subsection{Proof of Lemma \ref{lem_small_p}}
For the bias, it is straightforward to see that for $nX\sim\mathsf{Poi}(np)$, we have
\begin{align}
  \bE S_{K,H}(X) + p\ln p &= \sum_{k=1}^K r_{K,H}(4\Delta)^{-k+1}p^k + p\ln p \\
  &= 4\Delta\left[p_K\left(\frac{p}{4\Delta}\right) - \ln(4\Delta)\cdot \frac{p}{4\Delta} \right] + p\ln p\\
  &= 4\Delta\left[p_K\left(\frac{p}{4\Delta}\right) - 2\ln K\cdot \frac{p}{4\Delta} \right] + p\ln\left(\frac{c_2^2n\ln n}{4c_1}\right) + p\ln p
\end{align}
where $p_K(x) \triangleq \sum_{k=1}^K g_{K,H}x^k$ is the best approximating polynomial appearing in Lemma \ref{lem_ibragimov}. Since $\frac{p}{4\Delta}\le \frac{1}{K^2}$, Lemma \ref{lem_ibragimov} asserts that
\begin{align}
 \left|p_K\left(\frac{p}{4\Delta}\right) - 2\ln K\cdot \frac{p}{4\Delta}\right|\le D_p\cdot\frac{p}{4\Delta}
\end{align}
and we conclude that
\begin{align}
  | \bE S_{K,H}(X) + p\ln p| \le -p\ln(pn\ln n) + \left(D_p+\ln(4c_1/c_2^2)\right)p.
\end{align}

The proof for the second part is similar to \cite[Lem. 5]{jiao2014minimax}.

\subsection{Proof of Lemma \ref{lem_approx}}
By defining
\begin{align}
  f_N(x) = -\ln\left(\frac{1+x}{2}+\frac{1-x}{2N}\right),\qquad -1\le x\le 1
\end{align}
we have $E_L[f_N]_{[-1,1]}=E_L[-\ln x]_{[N^{-1},1]}$. Let $\Delta_L(x)=\frac{\sqrt{1-x^2}}{L}+\frac{1}{L^2}$ and define the following modulus of continuity for $f$:
\begin{align}
  \tau_1(f,\Delta_L) \triangleq \sup\left\{|f(x)-f(y)|:x,y\in[-1,1],|x-y|\le \Delta_L(x)\right\}
\end{align}

We have the following lemma.
\begin{lemma}
  There are an upper bound and a lower bound for $\tau_1(f_N,\Delta_L)$:
  \begin{align}
    \ln\left(\frac{N}{2L^2}\right)\le \tau_1(f_N,\Delta_L) \le \ln\left(\frac{2N}{L^2}\right), \qquad \forall L\le \frac{\sqrt{N}}{10}
\end{align}
\end{lemma}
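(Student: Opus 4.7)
My plan is to change variables through the affine map $u(x) = \frac{(N-1)x + (N+1)}{2N}$, which sends $[-1,1]$ bijectively onto $[1/N,1]$, so that $f_N(x) = -\ln u(x)$ and $|f_N(x) - f_N(y)| = \left|\ln\bigl(u(y)/u(x)\bigr)\right|$. The task then reduces to exhibiting a witness pair for the lower bound on $\tau_1$ and showing $L^2/(2N) \le u(y)/u(x) \le 2N/L^2$ whenever $x,y \in [-1,1]$ satisfy $|y-x| \le \Delta_L(x)$.

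For the lower bound, I would take $x = -1$ and $y = -1 + 1/L^2 = x + \Delta_L(-1)$. One computes $u(x) = 1/N$ and $u(y) = 1/N + (N-1)/(2NL^2)$, so
\begin{align}
|f_N(x) - f_N(y)| = \ln\!\Bigl(1 + \tfrac{N-1}{2L^2}\Bigr) \ge \ln\!\Bigl(\tfrac{N}{2L^2}\Bigr),
\end{align}
where the last step holds whenever $L \ge 1$.

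For the upper bound, set $R(x) = \frac{(N-1)\Delta_L(x)}{2N\,u(x)}$, so that linearity of $u$ gives $u(y)/u(x) \le 1 + R(x)$ and, when $y \ge x - \Delta_L(x)$, also $u(y)/u(x) \ge 1 - R(x)$. With $s = 1 + x \in [0,2]$ and $\sqrt{1-x^2} = \sqrt{s(2-s)}$, the ratio becomes $R(s) = \frac{(N-1)(\sqrt{s(2-s)}/L + 1/L^2)}{(N-1)s + 2}$, and the AM-GM inequality $(N-1)s + 2 \ge 2\sqrt{2(N-1)s}$ applied to the square-root term yields the uniform estimate $R(s) \le \frac{N-1}{2L^2} + \frac{\sqrt{N-1}}{2L}$. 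Routine algebra then shows $1 + R(s) \le 2N/L^2$ under the hypothesis $L \le \sqrt{N}/10$.

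The main obstacle is the matching lower bound $u(y)/u(x) \ge L^2/(2N)$ for $y < x$, because $1 - R(x)$ can vanish or become negative when $x$ is near $-1$. I would split at the threshold $s_0$ defined by the equation $\Delta_L(x) = s$; squaring gives a quadratic in $s$ whose relevant root is $s_0 = (2 + \sqrt{3 - L^{-2}})/(L^2+1)$. For $s \le s_0$ the admissible $y$-interval extends down to $-1$, so one uses $u(y) \ge 1/N$ together with the direct computation $u(x) \le u(s_0 - 1) \le 2/L^2$; for $s > s_0$, I would verify $R'(s) \le 0$ by reducing the sign condition to the elementary inequality $2 - s(N+1) \le (N-1)\sqrt{s(2-s)}/L$ (trivial once $s \ge 2/(N+1)$, which is implied by $s \ge s_0$ under the hypothesis), so that $R(s) \le R(s_0) = 1 - 2/((N-1)s_0 + 2)$, and the target inequality $R(s_0) \le 1 - L^2/(2N)$ follows from the explicit expression for $s_0$. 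I expect this threshold analysis with accompanying monotonicity to be the most delicate part; everything else is routine.
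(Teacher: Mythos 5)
Your proposal is correct. For the lower bound, your witness pair $(x,y)=(-1,-1+1/L^2)$ is essentially the same mechanism as the paper's: the paper picks $x_L$ solving $x_L-\Delta_L(x_L)=-1$ and pairs it with $-1$, then uses $x_L+1\ge 1/L^2$; you instead anchor at $x=-1$ directly, where $\Delta_L(-1)=1/L^2$, which avoids solving the implicit equation and makes the computation cleaner (the two witness pairs yield the same quantity $\ln(1+(N-1)/(2L^2))$). The condition you invoke, $L\ge 1$, is harmless since $L$ is the polynomial degree and hence a positive integer. For the upper bound, the paper does not prove it at all but simply cites \cite[Lem.~4]{Wu--Yang2014minimax}; your argument is a genuinely self-contained alternative. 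Reducing everything via the affine map to the two-sided ratio bound $L^2/(2N)\le u(y)/u(x)\le 2N/L^2$ is the right move; the AM-GM bound $R(s)\le \frac{N-1}{2L^2}+\frac{\sqrt{N-1}}{2L}$ suffices for the easy direction; and your threshold analysis correctly identifies $s_0=(2+\sqrt{3-L^{-2}})/(L^2+1)$ as the point where the admissible window $[x-\Delta_L(x),x+\Delta_L(x)]$ stops reaching $-1$, handles $s\le s_0$ via $u(-1)=1/N$ and the estimate $u(s_0-1)\le 2/L^2$ (equivalently $(N-1)s_0+2\le 4N/L^2$, which does hold once $L\le\sqrt{N}/10$ because $s_0\le(2+\sqrt{3})/L^2$ and $2+\sqrt{3}<4$), and handles $s>s_0$ via the monotonicity of $R$, which reduces to $2-(N+1)s\le 0$ for $s\ge s_0\ge 2/(N+1)$. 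I verified each of the ``routine'' claims and they hold. What your approach buys is a proof of the upper bound that is independent of the external reference; what it costs is a somewhat more delicate case analysis than a black-box citation.
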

\begin{proof}
  The upper bound is shown in \cite[Lem. 4]{Wu--Yang2014minimax}. For the lower bound, denote by $x_L\in[-1,1]$ the solution to the equation $x_L-\Delta_L(x_L)=-1$, we have the following closed-form formula:
\begin{align}
  x_L = \frac{L^2-L^4+\sqrt{-3L^2+L^4}}{L^2+L^4}\ge -1 + \frac{1}{L^2}.
\end{align}

Hence, by definition, we have
\begin{align}
  \tau_1(f_N,\Delta_L)&\ge |f_N(x_L) - f_N(-1)|\\
&= \ln\left(\frac{x_L+1}{2}N + \frac{1-x_L}{2}\right)\\
&\ge \ln\left(\frac{x_L+1}{2}N\right)\\
&\ge \ln\left(\frac{N}{2L^2}\right).
\end{align}
\end{proof}

The relationship between $\tau_1(f_N,\Delta_L)$ and $E_L[f_N]_{[-1,1]}$ was shown in \cite[Thm. 3.13, Thm. 3.14]{Petrushev--Popov2011rational} that there exist two universal constants $M_1,M_2>0$ such that
\begin{align}
  E_n[f_N]_{[-1,1]} &\le M_1\tau_1(f_N,\Delta_n) \label{eq:approx_upper}\\
  \frac{1}{n}\sum_{k=0}^n E_k[f_N]_{[-1,1]} &\ge M_2\tau_1(f_N,\Delta_n) \label{eq:approx_lower}
\end{align}

Applying (\ref{eq:approx_upper}) and (\ref{eq:approx_lower}) and setting the approximation order to be $DL$ with constant $D>1$ to be specified later, then given $N=(10D)^2M\ge (10DL)^2$, the non-increasing property of $E_n[f_N]_{[-1,1]}$ with respect to $n$ yields
\begin{align}
  E_L[f_N]_{[-1,1]} &\ge \frac{1}{DL-L}\sum_{n=L+1}^{DL} E_n[f_N]_{[-1,1]}\\
  &\ge \frac{1}{DL}\left(\sum_{n=0}^{DL}E_n[f_N]_{[-1,1]} - E_0[f_N]_{[-1,1]} - \sum_{n=1}^L E_n[f_N]_{[-1,1]}\right)\\
  &\ge M_2\tau_1(f_N,\Delta_{DL}) - \frac{\ln N}{DL} - \frac{M_1}{DL}\sum_{n=1}^L \tau_1(f_N,\Delta_n)\\
  &\ge M_2\ln\left(\frac{N}{2(DL)^2}\right) - \frac{\ln N}{DL} - \frac{M_1}{DL}\sum_{n=1}^L \ln\left(\frac{2N}{n^2}\right)\\
  &\ge M_2\ln\left(\frac{N}{2(DL)^2}\right) - \frac{\ln N}{DL} - \frac{M_1}{DL}\int_1^L \ln\left(\frac{2N}{x^2}\right)dx\\
  &\ge M_2\ln\left(\frac{50K}{L^2}\right) - \frac{\ln K+2\ln(10D)}{DL} - \frac{M_1}{D}\ln\left(\frac{200e^2D^2K}{L^2}\right).
\end{align}

Hence, there exists a sufficiently large constant $D>0$ such that
\begin{align}
  E_L[-\ln x]_{[(100D^2K)^{-1},1]} = E_L[f_N]_{[-1,1]} \gtrsim \ln\left(\frac{K}{L^2}\right)
\end{align}
and this lemma is proved by setting $D_0=\max\{100D^2,1\}$.

\subsection{Proof of Lemma \ref{lem_equiv}}
Fix $\delta>0$. Let $\hat{H}({\bf Z})$ be a near-minimax estimator of $H(P)$ under the Multinomial model. The estimator $\hat{H}({\bf Z})$ obtains the number of samples $n$ from observation $\bf Z$. By definition, we have
  \begin{align}
    \sup_{P\in \mathcal{M}_S(H)} \mathbb{E}_{\mathrm{Multinomial}}|\hat{H}(\mathbf{Z})-H(P)|^2 < R(S,n,H)+\delta,
  \end{align}
  where $R(S,n,H)$ is the minimax $L_2$ risk under the Multinomial model. Note that for any vector $P\in\mathcal{M}_S(\epsilon,H)$ ($P$ is not necessarily a probability distribution), we have
\begin{align}\label{eq:entropy_rela}
  H\left(\frac{P}{\sum_{i=1}^S p_i}\right) = \frac{H(P)}{\sum_{i=1}^S p_i} + \ln\left(\sum_{i=1}^S p_i\right) \le \frac{H(P)}{d_1-\epsilon} + \ln(d_1+\epsilon)
\end{align}
where by definition we have $\left|\sum_{i=1}^S p_i-d_1\right|\le \epsilon$. Hence, given $P\in\mathcal{M}_S(\epsilon, H)$, let $\mathbf{Z}=[Z_1,\cdots,Z_S]^T$ with $Z_i\sim \mathsf{Poi}(np_i)$ and let $n'=\sum_{i=1}^S Z_i\sim \mathsf{Poi}(n\sum_{i=1}^Sp_i)$, (\ref{eq:entropy_rela}) suggests to use the estimator $d_1\left(\hat{H}(\mathbf{Z})-\ln d_1\right)$ to estimate $H(P)$. Note that
\begin{align}
  d_1\left(\hat{H}(\mathbf{Z})-\ln d_1\right)-H(P) = d_1\left(\hat{H}(\mathbf{Z}) - H\left(\frac{P}{\sum_{i=1}^S p_i}\right)\right)
  + \left(\sum_{i=1}^S p_i\right)\ln\left(\sum_{i=1}^S p_i\right) - d_1\ln d_1
\end{align}
the triangle inequality gives (define $A=\sup_{x\in[d_1-\epsilon,d_1+\epsilon]} \ln^2(ex)$)
  \begin{align}
    &\quad \frac{1}{2}\mathbb{E}_P\left|d_1\left(\hat{H}(\mathbf{Z})-\ln d_1\right)-H(P)\right|^2\\
&\le d_1^2\mathbb{E}_P\left|\hat{H}({\bf Z})-H\left(\frac{P}{\sum_{i=1}^S p_i}\right)\right|^2 + \left|\left(\sum_{i=1}^S p_i\right)\ln\left(\sum_{i=1}^S p_i\right) - d_1\ln d_1\right|^2\\
    &\le d_1^{2} \sum_{m = 0}^\infty \mathbb{E}_P\left[\left.\left|\hat{H}(\mathbf{Z})-H\left(\frac{P}{\sum_{i=1}^S p_i}\right)\right|^2\right|n'=m\right]\mathbb{P}(n'=m)+ \epsilon^2A\\
    &\le d_1^{2}\sum_{m=0}^\infty R\left(S,m,\frac{H}{d_1-\epsilon} + \ln(d_1+\epsilon)\right)\mathbb{P}(n'=m) + \delta + \epsilon^2A\\
    &\le d_1^{2}R\left(S,\frac{d_1n}{2},\frac{H}{d_1-\epsilon} + \ln(d_1+\epsilon)\right)\mathbb{P}(n'\ge \frac{d_1n}{2}) + (d_1\ln S)^2\mathbb{P}(n'\le \frac{d_1n}{2}) + \delta +\epsilon^2A\\
    &\le d_1^{2}R\left(S,\frac{d_1n}{2},\frac{H}{d_1-\epsilon} + \ln(d_1+\epsilon)\right)+ (d_1\ln S)^2\exp(-\frac{d_1n}{8}) + \delta + \epsilon^2A,
  \end{align}
  where we have used the fact that conditioned on $n'=m$, $\mathbf{Z}\sim \mathsf{Multinomial}(m,\frac{P}{\sum_ip_i})$, and $R(S,n,H)\le \left(\sup_{P\in\mathcal{M}_S} H(P)\right)^2=(\ln S)^2$. Moreover, the last step follows from Lemma~\ref{lemma.poissontail}. The proof is completed by the arbitrariness of $\delta$ and Lemma~\ref{lemma.poissonmultinomial}.

\subsection{Proof of Lemma \ref{lem_ibragimov}}
It has been shown in \cite[Lemma 18]{jiao2014minimax} that
\begin{align}
  \lim_{K\to\infty} K^2\cdot \sup_{x\in[0,1]}\left|\sum_{k=0}^K g_{K,k}x^k-(-x\ln x)\right| = \frac{\nu_1(2)}{2},
\end{align}
then plugging in $x=0$ yields
\begin{align}
  \limsup_{K\to\infty} K^2\cdot |g_{K,0}| \le \frac{\nu_1(2)}{2}.
\end{align}
Hence, it follows from the triangle inequality that
\begin{align}
  \limsup_{K\to\infty} K^2\cdot \sup_{x\in[0,1]}\left|\sum_{k=1}^K g_{K,k}x^k-(-x\ln x)\right| \le \frac{\nu_1(2)}{2} + \frac{\nu_1(2)}{2} = \nu_1(2),
\end{align}
which completes the proof of the norm bound.

For the pointwise bound, \cite[Thm. 7.3.1]{Ditzian--Totik1987} asserts that there exists a universal positive constant $M_1$ such that
\begin{align}
  \sup_{x\in[0,1]} |(\varphi(x))^2 p_K''(x)| \le M_1K^2\omega_\varphi^2(-x\ln x,K^{-1}),
\end{align}
where $\varphi(x)=\sqrt{x(1-x)}$, and $\omega_\varphi^2(f,t)$ is the second-order Ditzian-Totik modulus of smoothness \cite{Ditzian--Totik1987} defined by
\begin{align}
  \omega_\varphi^2(f,t)\triangleq \sup\left\{\left|f(u)+f(v)-2f\left(\frac{u+v}{2}\right)\right|: u,v\in[0,1], |u-v|\le 2t\varphi\left(\frac{u+v}{2}\right)\right\}.
\end{align}

Direct computation yields
\begin{align}
  \omega_\varphi^2(-x\ln x,t) = \frac{2t^2\ln 2}{1+t^2},
\end{align}
we have
\begin{align}
  \sup_{x\in[0,1]} |x(1-x) p_K''(x)|\le 2M_1\ln 2.
\end{align}

According to Lemma \ref{lem_polynorm}, since $p_K''(x)$ is a polynomial with order $K-2<2K$, there exists some positive constant $M_2$ such that
\begin{align}
  \sup_{x\in[0,1]} |p_K''(x)|
  &\le M_2\sup_{x\in[(2K)^{-2},1-(2K)^{-2}]} |p_K''(x)| \\
  &\le \frac{M_2(2K)^4}{(2K)^2-1}\sup_{x\in[(2K)^{-2},1-(2K)^{-2}]} |x(1-x)p_K''(x)| \\
    &\le \frac{16M_2K^4}{4K^2-1}\sup_{x\in[0,1]} |x(1-x)p_K''(x)| \\
    &\le \frac{32M_1M_2K^4\ln 2}{4K^2-1}\\
    &\le 16M_1M_2K^2\ln2,
\end{align}
hence for any $x,y\in[0,C/K^2]$, we have
\begin{align}
  |p_K'(x)-p_K'(y)| &\le \int_{\min\{x,y\}}^{\max\{x,y\}} |p_K''(t)|dt \\
  &\le 16M_1M_2\ln2\cdot K^2|x-y|\\
  &\le 16M_1M_2C\ln2.
\end{align}

As a result, we know that for any $C\ge 1$, $u=C/K^2$ and $x\in[0,u]$,
\begin{align}
  16M_1M_2C\ln 2&\ge \frac{1}{u}\int_0^{u}\left|p_K'(x)-p_K'(t)\right|dt\\
  &\ge \left|p_K'(x)-\frac{1}{u}\int_0^{u}p_K'(t)\right|dt\\
  &= \left|p_K'(x) - \frac{p_K(u)}{u}\right|\\
  &\ge \left|p_K'(x) + \ln u\right| - \frac{\left|p_K(u)-(-u\ln u)\right|}{u}\\
  &\ge \left|p_K'(x) + \ln u\right| - K^2\sup_{t\in[0,1]} \left|p_K(t)-(-t\ln t)\right|\\
  &\ge \left|p_K'(x) - 2\ln K\right| - \ln C - D_n,
\end{align}
where $D_n$ is the coefficient of the norm bound in (\ref{eq:bound_norm}). Hence, the universal positive constant $D_p\triangleq 16M_1M_2\ln2+1+D_n$ satisfies
\begin{align}
  \left|p_K'(x) - 2\ln K\right| \le D_pC, \qquad \forall x\in\left[0,\frac{C}{K^2}\right],
\end{align}
and it follows that
\begin{align}
  \left|p_K(x) - 2\ln K\cdot x\right| &\le \int_0^x \left|p_K'(t) - 2\ln K\right|dt \\
  &\le \int_0^x D_pCdt = D_pCx.
\end{align}

\bibliographystyle{IEEEtran}
\bibliography{references,di}

\end{document}